\newcommand{\Freyd}{\ensuremath{\mathbf{Freyd}}}
\newcommand{\Sub}{\ensuremath{\mathbf{Subset}}}
\newcommand{\Set}{\ensuremath{\mathbf{Set}}}
\newcommand{\Law}{\ensuremath{\mathbf{Law}}}
\newcommand{\Lab}{\ensuremath{\mathbf{Label}}}
\newcommand{\Prof}{\ensuremath{\mathbf{Prof}}}
\newcommand{\cat}[1]{\ensuremath{\mathbf{#1}}}
\newcommand{\vfreyd}[1]{\ensuremath{\mathbf{#1}}}
\newcommand{\op}{\ensuremath{^{\mathrm{op}}}}
\newcommand{\id}[1][]{\ensuremath{\mathrm{id}_{#1}}}
\newcommand{\Day}{\ensuremath{_{\mathrm{Day}}}}
\newcommand{\Kl}{\ensuremath{\mathrm{Kl}}}
\newcommand{\In}{\ensuremath{\inplus}}
\newcommand{\mzero}{*}
\newcommand{\ezero}{J}
\newcommand{\mone}{\circ}
\newcommand{\eone}{I}
\newcommand{\ehom}{\underline{\hom}}
\newcommand{\epcomp}{\mathbin{\hat{\mone}}}
\DeclareFontFamily{U}{min}{}
\DeclareFontShape{U}{min}{m}{n}{<-> udmj30}{}
\tikzset{global scale/.style={
    scale=#1,
    every node/.style={scale=#1}
  }
}
\crefname{axiom}{axiom}{axioms}
\title{Duoidally enriched Freyd categories\thanks{Jesse Sigal is partly funded by Huawei.}}
\author{Chris Heunen\orcidID{0000-0001-7393-2640} \and
Jesse Sigal\orcidID{0000-0002-5117-8752}}
\authorrunning{C. Heunen and J. Sigal}
\institute{School of Informatics, University of Edinburgh, United Kingdom,\\
\email{\{chris.heunen, jesse.sigal\}@ed.ac.uk}%,
% \url{https://homepages.inf.ed.ac.uk/cheunen/}
}
\begin{document}
\maketitle

\begin{abstract}
  Freyd categories provide a semantics for first-order effectful programming languages by capturing the two different orders of evaluation for products.
  We enrich Freyd categories in a duoidal category, which provides a new, third choice of parallel composition.
  Duoidal categories have two monoidal structures which account for the sequential and parallel compositions.
  The traditional setting is recovered as a full coreflective subcategory for a judicious choice of duoidal category.
  We give several worked examples of this uniform framework, including the parameterised state monad, basic separation semantics for resources, and interesting cases of change of enrichment.

  \keywords{Freyd category \and duoidal category \and Kleisli category \and Lawvere theory \and monad}
\end{abstract}

\section{Introduction}\label{sec:intro}

Computational effects encapsulate interactions of a computer program with its environment in a modular way, and are a staple of modern programming languages~\cite{plotkinpower:effects}.
Originally captured by strong monads~\cite{moggi:monads}, they have been extended to Arrows to deal with input as well as output~\cite{jacobsheunenhasuo:arrows}, to Lawvere theories to better combine effects algebraically~\cite{staton:freyd}, to PROs and PROPs to deal with non-cartesian settings~\cite{lack:props}, and to Freyd categories to deal with effects that are not higher-order~\cite{levypowerthielecke:environments}. 

Freyd categories let one compose effectful computations both in sequence and, to some extent, in parallel, and reason about such compositions rigorously.
For an effectful computation $f \colon a \to b$, we may embed it, the domain, and the codomain into a larger context by extending with $- \otimes c$ for any object $c$ and monoidal-like operation $\otimes$, which we write as $f \otimes \id \colon a \otimes c \to b \otimes c$.
Intuitively, $f \otimes \id$ does not interact with $c$.
Effectful computations need not commute as they may alter the environment: $(f \otimes \id). (\id \otimes g) \neq (\id \otimes g). (f \otimes \id)$ in general.

But what if we want to track more data about computations than just types and effects? For example, suppose we want to annotate every computation with its resource needs: there could \textit{e.g.}\ be a set $R$ of resources, and every computation $f$ requires a certain subset $P \subseteq R$ of resources for it to execute.
Sequencing two computations needs all resources to execute both, so if $f \colon a \to b$ and $g \colon b \to c$ require resources $P$ and $Q$ respectively, then $g . f$ requires $P \cup Q$. 
The same is true for parallel composition: if $f_1 \colon a_1 \to b_1$ and $f_2 \colon a_2 \to b_2$ require $P_1$ and $P_2$ respectively, then $f_1 \otimes f_2 \colon a_1 \otimes a_2 \to b_1 \otimes b_2$ requires $P_1 \cup P_2$.
However, it is often desirable to restrict $P_1$ and $P_2$ by requiring $P_1 \cap P_2 = \emptyset$ so that morphisms composed in parallel use different resources.
If we have an identity map $\id \colon a \to a$ for all $a$ which requires $\emptyset \subseteq R$, then we can always form $f \otimes \id$ for any $f$, but what of the general case?

This article proposes a solution that achieves just this: enrich Freyd categories in duoidal categories.
Duoidal categories carry two interacting monoidal structures that will account for the sequential and parallel composition of both the effectful computations and the extra data we want to track, such as the resources above.
We provide a concrete example for resources in \Cref{sub:state-monoid}.

\Cref{sec:enrichedfreyd} introduces duoidally enriched Freyd categories.
\Cref{sec:examples} shows the breadth of such categories by treating disparate examples: separation semantics for resources as above, indexed state monads, and Kleisli categories of Lawvere theories.
\Cref{sec:adjunction} shows that a judicious choice of duoidal enriching category recovers traditional Freyd categories as a full coreflective subcategory, and \Cref{sec:characterisation} gives an abstract characterisation of duoidally enriched Freyd categories in purely algebraic terms. \Cref{sec:changeofenrichment} considers changing the enriching duoidal category, accounting for \textit{e.g.}\ changing the underlying permission model in the example above. \Cref{sec:conclusion} concludes and suggests directions for future work.

\paragraph*{Related work} % Must use sectioning commands for LIPIcs style

Morrison and Penneys define a $\cat{V}$-monoidal category~\cite{morrisonpenneys:braided} for braided monoidal $\cat{V}$ as a $\cat{V}$-category with parallel composition that interacts well with the braid. In the case $\cat{V}$ is braided (and thus duoidal), our definition of a $\cat{V}$-Freyd category is similar.
However, we also require bifunctorality of the hom objects, an important difference for some of our constructions.

The abstract characterisation in \Cref{sec:characterisation} is inspired by Fujii's characterisation of PROs and PROPs~\cite{fujii:unified} as monoids in $\cat{MonCat_{lax}}\big( \cat{N}\op \times \cat{N}, \Set \big)$ and $\cat{MonCat_{lax}}\big( \cat{P}\op \times \cat{P}, \Set \big)$ respectively, where $\cat{N}$ and $\cat{P}$ have natural numbers as objects and equalities respectively bijections as morphisms.

Garner and L\'opez Franco describe a general framework for commutativity using categories enriched in the sequential product of a duoidal category \cite{garner_commutativity_2016}.
Their framework requires the duoidal category to be \emph{normal}, meaning that the two units are isomorphic.
Only with this requirement and others do they define a monoidal structure on their category of enriched categories, and do not define a monoidal enriched category.
We do not require normality.

Finally, Forcey~\cite{forcey_enrichment_2004}, and Batanin and Markl~\cite{batanin_centers_2012} enrich over duoidal categories, but using the parallel product instead.
We choose to enrich over the sequential product in order to define examples in which this is the appropriate choice.

\section{Duoidally enriched Freyd categories}\label{sec:enrichedfreyd}

This section introduces duoidally enriched Freyd categories (in \Cref{subsec:definition}), but first we discuss Freyd categories (in \Cref{subsec:freydcats}) and duoidal categories (in \Cref{subsec:duoidal-categories}).

\subsection{Freyd categories}\label{subsec:freydcats}

Freyd categories provide semantics for first-order call-by-value programming languages with effects~\cite{staton:freyd}.
We will generalise the definition of a Freyd category slightly so that the effect free fragment need not have products, beginning with the following preliminary definitions~\cite{levypowerthielecke:environments,powerrobinson:premonoidal}.

\begin{definition}\label{def:binoidal}
  A category $\cat{C}$ is \emph{binoidal} when it comes with endofunctors $(-) \ltimes x$ and $x \rtimes (-)$ for each object $x$ such that $x \ltimes y = x \rtimes y$ for all $y$; write $x \otimes y$ for this object.
  A morphism $f \colon x \to y$ is \emph{central} if for any morphism $g \colon x' \to y'$ the two maps $(y \rtimes g) . (f \ltimes x')$ and $(f \ltimes y') . (x \rtimes g)$ of type $x \otimes x' \to y \otimes y'$ are equal, as are the two maps $(y' \rtimes f) . (g \ltimes x)$ and $(g \ltimes y) . (x' \rtimes f)$ of type $x' \otimes x \to y' \otimes y$.
  Central morphisms form a wide subcategory $Z(\cat{C})$ called the \emph{centre}.
\end{definition}

\begin{definition}\label{def:premonoidal}
  A binoidal category $\cat{C}$ is \emph{premonoidal} when equipped with an object $e$ and families of central isomorphisms $\alpha \colon (x \otimes y) \otimes z \to x \otimes (y \otimes z)$, $\lambda \colon e \otimes x \to x$, and $\rho \colon x \otimes e \to x$ that are natural in each component and satisfy triangle and pentagon equations.
\end{definition}

\begin{definition}\label{def:premonoidal-functors}
  A functor $F \colon \cat{C} \to \cat{D}$ between premonoidal categories is a \emph{premonoidal functor} when equipped with central morphisms $\eta \colon e_{\cat{D}} \to F\left(e_{\cat{C}}\right)$ and $\mu \colon F(x) \otimes_{\cat{D}} F(y) \to F\left( x \otimes_{\cat{C}} y\right)$ such that $\mu$ is natural in each component, and the following diagrams commute:
  \[\begin{tikzcd}[column sep=small, global scale=0.8]
    {(F(x) \otimes_\cat{D} F(y)) \otimes_\cat{D} F(z)} & {F(x) \otimes_\cat{D} (F(y) \otimes_\cat{D} F(z))} \\
    {F(x \otimes_\cat{C} y) \otimes_\cat{D} F(z)} & {F(x) \otimes_\cat{D} F(y \otimes_\cat{C} z)} \\
    {F((x \otimes_\cat{C} y) \otimes_\cat{C} z)} & {F(x \otimes_\cat{C} (y \otimes_\cat{C} z))}
    \arrow["{\mu \otimes \id}"', from=1-1, to=2-1]
    \arrow["\mu"', from=2-1, to=3-1]
    \arrow["F \alpha_\cat{C}"',from=3-1, to=3-2]
    \arrow["\alpha_\cat{D}" {yshift=3pt}, from=1-1, to=1-2]
    \arrow["{\id \otimes \mu}", from=1-2, to=2-2]
    \arrow["\mu", from=2-2, to=3-2]
  \end{tikzcd}
  \qquad
  \begin{tikzcd}[column sep=small, global scale=0.8]
    {e_\cat{D} \otimes_\cat{D} F(x)} & {F(e_\cat{C}) \otimes_\cat{D} F(x)} \\
    {F(x)} & {F(e_\cat{C} \otimes_\cat{C} x)} \\
    {F(x) \otimes_\cat{D} e_\cat{D}} & {F(x) \otimes_\cat{D} F(e_\cat{C}) } \\
    {F(x)} & {F(x \otimes_\cat{C} e_\cat{C})}
    \arrow["\lambda_\cat{D}"', from=1-1, to=2-1]
    \arrow["{\eta \otimes \id}", from=1-1, to=1-2]
    \arrow["\mu", from=1-2, to=2-2]
    \arrow["F \lambda_\cat{C}", from=2-2, to=2-1]
    \arrow["\rho_\cat{D}"', from=3-1, to=4-1]
    \arrow["{\id \otimes \eta}", from=3-1, to=3-2]
    \arrow["\mu", from=3-2, to=4-2]
    \arrow["F \rho_\cat{C}", from=4-2, to=4-1]
  \end{tikzcd}\]
  A premonoidal functor is \emph{strong (strict)} when $\eta$ and $\mu$ are isomorphisms (identities).
\end{definition}

Note that a strict premonoidal functor $F$ preserves associators and unitors on the nose.
Recall that a functor $F \colon \cat{C} \to \cat{D}$ between monoidal categories is \emph{lax monoidal} when it comes with a morphism $\eta \colon I \to F(I)$ and a natural transformation $\mu \colon F(X) \otimes F(Y) \to F(X \otimes Y)$ satisfying coherence conditions. It is \emph{strong monoidal} when $\eta$ and $\mu$ are invertible. Lax/strong monoidal functors are closed under composition.
Here now is our definition of a Freyd category.

\begin{definition}\label{def:freyd-category}
  A \emph{Freyd category} consists of a monoidal category $\cat{M}$ and a premonoidal category $\cat{C}$ with the same objects, and an identity-on-objects strict premonoidal functor $J \colon \cat{M} \to \cat{C}$ whose image lies in $Z(\cat{C})$.
%  Let $J \colon \cat{M} \to \cat{C}$ and $J' \colon \cat{M}' \to \cat{C}'$ be Freyd categories.
  A morphism $J \to J'$ of Freyd categories consists of a strong monoidal functor $F_0 \colon \cat{M} \to \cat{M}'$ and a strong premonoidal functor $F_1 \colon \cat{C} \to \cat{C}'$ such that $F_1 J = J' F_0$.
  Freyd categories and their morphisms form a category $\Freyd$.
\end{definition}

\subsection{Duoidal categories}\label{subsec:duoidal-categories}

A duoidal category carries two interacting monoidal structures, that one may intuitively  think of as sequential and parallel composition, but let us give the definition~\cite[Definition~6.1]{aguiar2010monoidal} before examples.

\begin{definition}\label{def:duoidal}
  A category $\cat{V}$ is \emph{duoidal} when it comes with two monoidal structures $(\cat{V}, \mzero, \ezero)$ and $(\cat{V}, \mone, \eone)$, a natural transformation $\zeta_{A,B,C,D} \colon (A \mone B) \mzero (C \mone D) \to (A \mzero C) \mone (B \mzero D)$, and three morphisms $\Delta \colon \ezero \to \ezero \mone \ezero$, $\nabla \colon \eone \mzero \eone \to \eone$, and $\epsilon \colon \ezero \to \eone$ such that $(\eone, \nabla, \epsilon)$ is a monoid in $(\cat{V}, \mzero, \ezero)$ and $(\ezero, \Delta, \epsilon)$ is a comonoid in $(\cat{V}, \mone, \eone)$, and the following diagrams commute:
% file://///wsl.localhost/Ubuntu/home/jesse/Repositories/quiver/src/index.html?q=WzAsNixbMCwwLCIoKEEgXFxtb25lIEIpIFxcbXplcm8gKEMgXFxtb25lIEQpKSBcXG16ZXJvKEUgXFxtb25lIEYpIl0sWzAsMSwiKChBIFxcbXplcm8gQykgXFxtb25lIChCIFxcbXplcm8gRCkpIFxcbXplcm8oRSBcXG1vbmUgRikiXSxbMCwyLCIoKEEgXFxtemVybyBDKSBcXG16ZXJvIEUpIFxcbW9uZSAoKEIgXFxtemVybyBEKSBcXG16ZXJvIEYpIl0sWzEsMCwiKEEgXFxtb25lIEIpIFxcbXplcm8oKEMgXFxtb25lIEQpIFxcbXplcm8oRSBcXG1vbmUgRikpIl0sWzEsMSwiKEEgXFxtb25lIEIpIFxcbXplcm8gKChDIFxcbXplcm8gRSkgXFxtb25lIChEIFxcbXplcm8gRikpIl0sWzEsMiwiKEEgXFxtemVybyAoQyBcXG16ZXJvIEUpKSBcXG1vbmUgKEIgXFxtemVybyAoRCBcXG16ZXJvIEYpKSJdLFswLDEsIlxcemV0YSBcXG16ZXJvIFxcaWQiLDJdLFsxLDIsIlxcemV0YSIsMl0sWzAsMywiXFxhbHBoYSJdLFszLDQsIlxcaWQgXFxtemVybyBcXHpldGEiXSxbNCw1LCJcXHpldGEiXSxbMiw1LCJcXGFscGhhIFxcbW9uZSBcXGFscGhhIiwyXV0=&macro_url=http%3A%2F%2Flocalhost%3A8000%2Fmacros.sty
\[\begin{tikzcd}[column sep=small, global scale=0.8]
  {((A \mone B) \mzero (C \mone D)) \mzero(E \mone F)} & {(A \mone B) \mzero((C \mone D) \mzero(E \mone F))} \\
  {((A \mzero C) \mone (B \mzero D)) \mzero(E \mone F)} & {(A \mone B) \mzero ((C \mzero E) \mone (D \mzero F))} \\
  {((A \mzero C) \mzero E) \mone ((B \mzero D) \mzero F)} & {(A \mzero (C \mzero E)) \mone (B \mzero (D \mzero F))}
  \arrow["{\zeta \mzero \id}"', from=1-1, to=2-1]
  \arrow["\zeta"', from=2-1, to=3-1]
  \arrow["\alpha", from=1-1, to=1-2]
  \arrow["{\id \mzero \zeta}", from=1-2, to=2-2]
  \arrow["\zeta", from=2-2, to=3-2]
  \arrow["{\alpha \mone \alpha}"' {yshift=-3pt}, from=3-1, to=3-2]
\end{tikzcd}
% file://///wsl.localhost/Ubuntu/home/jesse/Repositories/quiver/src/index.html?q=WzAsNCxbMCwxLCJBIFxcbW9uZSBCIl0sWzAsMCwiSSBcXG16ZXJvIChBIFxcbW9uZSBCKSJdLFsxLDAsIihcXGV6ZXJvIFxcbW9uZSBcXGV6ZXJvKSBcXG16ZXJvIChBIFxcbW9uZSBCKSJdLFsxLDEsIihcXGV6ZXJvIFxcbXplcm8gQSkgXFxtb25lIChcXGV6ZXJvIFxcbXplcm8gQikiXSxbMCwxXSxbMSwyXSxbMiwzLCJcXHpldGEiXSxbMCwzXV0=&macro_url=http%3A%2F%2Flocalhost%3A8000%2Fmacros.sty
% file://///wsl.localhost/Ubuntu/home/jesse/Repositories/quiver/src/index.html?q=WzAsNCxbMCwxLCJBIFxcbW9uZSBCIl0sWzAsMCwiKEEgXFxtb25lIEIpIFxcbXplcm8gSSJdLFsxLDAsIiAoQSBcXG1vbmUgQikgXFxtemVybyAoXFxlemVybyBcXG1vbmUgXFxlemVybykiXSxbMSwxLCIoQSBcXG16ZXJvIFxcZXplcm8pIFxcbW9uZSAoQiBcXG16ZXJvIFxcZXplcm8pIl0sWzAsMV0sWzEsMl0sWzIsMywiXFx6ZXRhIl0sWzAsM11d&macro_url=http%3A%2F%2Flocalhost%3A8000%2Fmacros.sty
\begin{tikzcd}[column sep=tiny, global scale=0.8]
  {\ezero \mzero (A \mone B)} & {(\ezero \mone \ezero) \mzero (A \mone B)} \\
  {A \mone B} & {(\ezero \mzero A) \mone (\ezero \mzero B)} \\
  {(A \mone B) \mzero \ezero} & { (A \mone B) \mzero (\ezero \mone \ezero)} \\
  {A \mone B} & {(A \mzero \ezero) \mone (B \mzero \ezero)}
  \arrow["\lambda"', from=1-1, to=2-1]
  \arrow["\Delta \mzero \id" {yshift=3pt}, from=1-1, to=1-2]
  \arrow["\zeta", from=1-2, to=2-2]
  \arrow["\lambda \mone \lambda", from=2-2, to=2-1]
  \arrow["\rho"', from=3-1, to=4-1]
  \arrow["\id \mzero \Delta" {yshift=3pt}, from=3-1, to=3-2]
  \arrow["\zeta", from=3-2, to=4-2]
  \arrow["\rho \mone \rho", from=4-2, to=4-1]
\end{tikzcd}\]
% file://///wsl.localhost/Ubuntu/home/jesse/Repositories/quiver/src/index.html?q=WzAsNixbMCwwLCIoKEEgXFxtb25lIEIpIFxcbW9uZSBDKSBcXG16ZXJvICgoRCBcXG1vbmUgRSkgXFxtb25lIEYpIl0sWzAsMSwiKChBIFxcbW9uZSBCKSBcXG16ZXJvIChEIFxcbW9uZSBFKSkgXFxtb25lIChDIFxcbXplcm8gRikiXSxbMCwyLCIoKEEgXFxtemVybyBEKSBcXG1vbmUgKEIgXFxtemVybyBFKSkgXFxtb25lIChDIFxcbXplcm8gRikiXSxbMSwwLCIoQSBcXG1vbmUgKEIgXFxtb25lIEMpKSBcXG16ZXJvIChEIFxcbW9uZSAoRSBcXG1vbmUgRikpIl0sWzEsMSwiKEEgXFxtemVybyBEKSBcXG1vbmUgKChCIFxcbW9uZSBDKSBcXG16ZXJvIChFIFxcbW9uZSBGKSkiXSxbMSwyLCIoQSBcXG16ZXJvIEQpIFxcbW9uZSAoKEIgXFxtemVybyBFKSBcXG1vbmUgKEMgXFxtemVybyBGKSkiXSxbMCwxLCJcXHpldGEiLDJdLFsxLDIsIlxcemV0YSBcXG1vbmUgXFxpZCIsMl0sWzIsNSwiXFxhbHBoYSIsMl0sWzAsMywiXFxhbHBoYSBcXG16ZXJvIFxcYWxwaGEiXSxbMyw0LCJcXHpldGEiXSxbNCw1LCJcXGlkIFxcbW9uZSBcXHpldGEiXV0=&macro_url=http%3A%2F%2Flocalhost%3A8000%2Fmacros.sty
\[\begin{tikzcd}[column sep=small, global scale=0.8]
  {((A \mone B) \mone C) \mzero ((D \mone E) \mone F)} & {(A \mone (B \mone C)) \mzero (D \mone (E \mone F))} \\
  {((A \mone B) \mzero (D \mone E)) \mone (C \mzero F)} & {(A \mzero D) \mone ((B \mone C) \mzero (E \mone F))} \\
  {((A \mzero D) \mone (B \mzero E)) \mone (C \mzero F)} & {(A \mzero D) \mone ((B \mzero E) \mone (C \mzero F))}
  \arrow["\zeta"', from=1-1, to=2-1]
  \arrow["{\zeta \mone \id}"', from=2-1, to=3-1]
  \arrow["\alpha"', from=3-1, to=3-2]
  \arrow["{\alpha \mzero \alpha}" {yshift=3pt}, from=1-1, to=1-2]
  \arrow["\zeta", from=1-2, to=2-2]
  \arrow["{\id \mone \zeta}", from=2-2, to=3-2]
\end{tikzcd}
% file://///wsl.localhost/Ubuntu/home/jesse/Repositories/quiver/src/index.html?q=WzAsNCxbMCwxLCJBIFxcbXplcm8gQiJdLFswLDAsIlxcZW9uZSBcXG1vbmUgKEEgXFxtemVybyBCKSJdLFsxLDAsIihcXGVvbmUgXFxtemVybyBcXGVvbmUpIFxcbW9uZSAoQSBcXG16ZXJvIEIpIl0sWzEsMSwiKFxcZW9uZSBcXG1vbmUgQSkgXFxtemVybyAoXFxlb25lIFxcbW9uZSBCKSJdLFswLDFdLFsyLDFdLFszLDIsIlxcemV0YSIsMl0sWzAsM11d&macro_url=http%3A%2F%2Flocalhost%3A8000%2Fmacros.sty
% file://///wsl.localhost/Ubuntu/home/jesse/Repositories/quiver/src/index.html?q=WzAsNCxbMCwxLCJBIFxcbXplcm8gQiJdLFswLDAsIihBIFxcbXplcm8gQikgXFxtb25lIFxcZW9uZSJdLFsxLDAsIiAoQSBcXG16ZXJvIEIpIFxcbW9uZSAoXFxlb25lIFxcbXplcm8gXFxlb25lKSJdLFsxLDEsIihBIFxcbW9uZSBcXGVvbmUpIFxcbXplcm8gKEIgXFxtb25lIFxcZW9uZSkiXSxbMCwxXSxbMiwxXSxbMywyLCJcXHpldGEiLDJdLFswLDNdXQ==&macro_url=http%3A%2F%2Flocalhost%3A8000%2Fmacros.sty
\begin{tikzcd}[column sep=tiny, global scale=0.8]
  {\eone \mone (A \mzero B)} & {(\eone \mzero \eone) \mone (A \mzero B)} \\
  {A \mzero B} & {(\eone \mone A) \mzero (\eone \mone B)} \\
  {(A \mzero B) \mone \eone} & { (A \mzero B) \mone (\eone \mzero \eone)} \\
  {A \mzero B} & {(A \mone \eone) \mzero (B \mone \eone)}
  \arrow["\lambda"', from=1-1, to=2-1]
  \arrow["\nabla \mone \id"' {yshift=3pt}, from=1-2, to=1-1]
  \arrow["\zeta"', from=2-2, to=1-2]
  \arrow["\lambda \mzero \lambda", from=2-2, to=2-1]
  \arrow["\rho"', from=3-1, to=4-1]
  \arrow["\id \mone \nabla"' {yshift=3pt}, from=3-2, to=3-1]
  \arrow["\zeta"', from=4-2, to=3-2]
  \arrow["\rho \mzero \rho", from=4-2, to=4-1]
\end{tikzcd}\]
  We may write $\left(\cat{V}, \mzero, \ezero, \mone, \eone\right)$ or $\left(\cat{V}, \mzero, \mone\right)$ to be explicit about the role of each monoidal structure.
\end{definition}

\begin{example}\label{ex:duoidal-braided}
  Any braided monoidal category becomes duoidal by letting both monoidal structures coincide and $\zeta$ be the middle-four interchange $x \otimes y \otimes z \otimes w \allowbreak \to x \otimes z \otimes y \otimes w$ up to associativity. In particular, any symmetric or cartesian monoidal category is duoidal~\cite[Proposition~6.10, Example~6.19]{aguiar2010monoidal}.
\end{example}

\begin{example}\label{ex:duoidal-opposite}
  If $\left(\cat{V}, \mzero, \ezero, \mone, \eone\right)$ is duoidal, so is $\left(\cat{V}\op, \mone, \eone, \mzero, \ezero\right)$, with opposite structure maps~\cite[Section~6.1.2]{aguiar2010monoidal}.
\end{example}

\begin{example}\label{ex:duoidal-products}
  If $\left(\cat{V}, \otimes, I\right)$ is a monoidal category with products, $\left(\cat{V}, \otimes, I, \times, 1\right)$ is duoidal with $\zeta = \left\langle \pi_1 \otimes \pi_1, \pi_2 \otimes \pi_2\right\rangle$, $\Delta = \left\langle\id, \id\right\rangle$, and $\nabla$ and $\epsilon$ terminal maps.
  Similarly, if a monoidal category $\cat{V}$ has coproducts, $\left(\cat{V}, +, 0, \otimes, I\right)$ is duoidal~\cite[Example~6.19]{aguiar2010monoidal}.
\end{example}

\begin{example}\label{ex:duoidal-presheaf}
  If $\left(\cat{V}, \mzero, \ezero, \mone, \eone\right)$ is small and duoidal, straightforward calculation shows Day convolution~\cite{day:convolution} of each monoidal structure makes the  category of presheaves $([\cat{V}\op, \Set], \mzero\Day, \cat{V}(-, \ezero), \mone\Day, \cat{V}(-, \eone))$ again duoidal where
  \begin{displaymath}
    \left(F \mzero\Day G\right)\left(A\right) = \int^{B, C} \cat{V}\left(A, B \mzero C\right) \times F\left(B\right) \times G\left(C\right)
  \end{displaymath}
  and likewise for $\mone\Day$.
  An analogous construction holds for $\left[\cat{V}, \Set\right]$ by starting with $\cat{V}\op$.
\end{example}

\begin{example}\label{ex:duoidal-comp-day}
  An endofunctor on $\Set$ is finitary when it preserves filtered colimits and is therefore determined on finite sets. 
  Finitary endofunctors are closed under functor composition, $\circ$, with unit $\mathrm{Id}$; closed under Day convolution with products, $\times\Day$, with unit $\Set\left(1, -\right) \cong \mathrm{Id}$; making $\big([\Set, \Set]_f, \times\Day, \mathrm{Id}, \circ, \mathrm{Id}\big)$ a duoidal category. \cite{garner_commutativity_2016}
\end{example}

\begin{example}\label{ex:duoidal-profunctor}
  For a small monoidal category $(\cat{M}, \oplus, e)$, the category of $\Set$-valued endoprofunctors $\Prof(\cat{M}) \coloneqq \left[\cat{M}\op \!\times\! \cat{M}, \Set\right]$ is duoidal $(\Prof(\cat{M}), \oplus\Day,\! \diamond)$ with profunctor composition $(P \diamond Q)(a, c) \coloneqq \int^{b} P(a, b) \times Q(b, c)$ (having unit $\cat{M}(-, -)$) and Day convolution of $\oplus$ on both sides $(P \oplus\Day Q)(a, b) \coloneqq \int^{a_1, a_2, b_2, b_2} \cat{M}(a, a_1 \oplus a_2) \times \cat{M}(b_1 \oplus b_2, b) \times P(a_1, b_1) \times Q(a_2, b_2)$ (having unit $\cat{M}(-, e) \times \cat{M}(e, -)$). \cite{garner_commutativity_2016}
\end{example}

\begin{example}\label{ex:duoidal-subset}
  An important example for us is the category $\Sub$ of \emph{distinguished subsets}.
  Objects are pairs of sets $\left(X, A\right)$ such that $X \subseteq A$ and morphisms $f \colon (X, A) \to (Y, B)$ are functions $f \colon A \to B$ with $f(X) \subseteq Y$.
  We call $X$ the \emph{distinguished subset}.
  Composition and identities are as in $\Set$.
  We may suppress the distinguished subset $X$ by writing $a \In A$ when $a \in X$.
  Next, we give two monoidal structures on $\Sub$.

  The first is the cartesian product: $(X, A) \times (Y, B) \coloneqq (X \times Y, A \times B)$ on objects, and $f \times g$ as in $\Set$ on morphisms, with unit $(1, 1)$.
  Associators and unitors are as in $\Set$.
  This is also a categorical product.

  The second is the \emph{disjunctive product}: on objects $(X, A) \otimes (Y, B)$ is defined as $\big(X \times Y, (A \times Y) \cup (X \times B)\big)$ with unit $(1, 1)$.
  We again have $f \times g$ on morphisms, which is well-defined.
  Finally, the coherence maps are restricted versions of those for the cartesian product.

  Now $\big(\Sub, \otimes, (1, 1), \times, (1, 1)\big)$ is duoidal by \Cref{ex:duoidal-products}: $\Delta$ and $\nabla$ are unitors, $\epsilon$ is the identity, and $\zeta \colon \big((X, A) \times (Y, B)\big) \otimes \big((Z, C) \times (W, D)\big) \to \big((X, A) \otimes (Z, C)\big) \times \big((Y, B) \otimes (W, D)\big)$ is the restricted middle-four interchange; all axioms are inherited from $(\Set, \times, 1)$ via \Cref{ex:duoidal-braided}.

  The important difference between $\big(\Sub, \otimes, \times\big)$ and $(\Set, \times, \times)$ is that $\zeta$ is not invertible in the former (as it is not surjective as a $\Set$ map).
  This allows Freyd categories enriched in $\Sub$ a premonoidal-like structure.
\end{example}

\subsection{Concrete definition}\label{subsec:definition}

We are now ready for the titular notion of this paper. We first give a concrete definition, leaving an abstract characterisation to Section~\ref{sec:characterisation}.

\begin{definition}\label{def:v-freyd}
  Let $\left(\cat{V}, \mzero, \ezero, \mone, \eone\right)$ be a duoidal category and $(\cat{M}, \oplus, e)$ a monoidal category.
  A \emph{$\cat{V}$-Freyd category over $\cat{M}$} consists of 
  \begin{itemize}
    % [obj.bif]
    \item a bifunctor $\vfreyd{C} \colon \cat{M}\op \times \cat{M} \to \cat{V}$
    % [mon.mor.e]
    \item an extranatural family $\mathsf{idt} \colon \eone \to \vfreyd{C}(a, a)$, meaning $\vfreyd{C}(\id, f) . \mathsf{idt} = \vfreyd{C}(f, \id) . \mathsf{idt}$ %for all $f$
    % [mon.mor.m]
    \item an extranatural family $\mathsf{seq} \colon \vfreyd{C}(a, b) \mone \vfreyd{C}(b, c) \to \vfreyd{C}(a, c)$, meaning $\mathsf{seq}$ is natural in $a$ and $c$, and $\mathsf{seq} . (\id \mone \vfreyd{C}(f, \id)) = \mathsf{seq} . (\vfreyd{C}(\id, f) \mone \id)$
    % [obj.lax] eta
    \item a morphism $\mathsf{zero} \colon \ezero \to \vfreyd{C}(e, e)$
    % [obj.lax] mu
    \item a natural family $\mathsf{par} \colon \vfreyd{C}(a_1, b_1) \mzero \vfreyd{C}(a_2, b_2) \to \vfreyd{C}(a_1 \oplus a_2, b_1 \oplus b_2)$
  \end{itemize}
  satisfying the following axioms:
  \begin{romanenumerate}
    % [mon.mor.id]
    \item\label[axiom]{ax:mon.mor.id} $\mathsf{idt}$ is the identity for $\mathsf{seq}$, that is, $\mathsf{seq} . (\mathsf{idt} \mone \id) = \lambda$ and symmetrically;
    % [mon.mor.ass]
    \item\label[axiom]{ax:mon.mor.ass} $\mathsf{seq}$ is associative, that is, $\mathsf{seq} . (\mathsf{seq} \mone \id) = \mathsf{seq} . (\id \mone \mathsf{seq}) . \alpha$;
    % [obj.lax.id]
    \item\label[axiom]{ax:obj.lax.id} $\mathsf{zero}$ is the identity for $\mathsf{par}$, that is, $\vfreyd{C}(\lambda^{-1}, \lambda) . \mathsf{par} . (\mathsf{zero} \mzero \id) = \lambda$ and symmetrically;
    % [obj.lax.ass]
    \item\label[axiom]{ax:obj.lax.ass} $\mathsf{par}$ is associative, that is, $\vfreyd{C}(\alpha^{-1}, \alpha) . \mathsf{par} . (\mathsf{par} \mzero \id) = \mathsf{par} . (\id \mzero \mathsf{par}) . \alpha$;
    % [mon.mor.e.eta]
    \item\label[axiom]{ax:mon.mor.e.eta} $\mathsf{idt}$ respects $\mathsf{zero}$ via $\mathsf{idt} . \epsilon = \mathsf{zero}$;
    % [mon.mor.e.mu]
    \item\label[axiom]{ax:mon.mor.e.mu} $\mathsf{idt}$ respects $\mathsf{par}$ via $\mathsf{idt} . \nabla = \mathsf{par} . ( \mathsf{idt} \mzero \mathsf{idt} )$;
    % [mon.mor.m.eta]
    \item\label[axiom]{ax:mon.mor.m.eta} $\mathsf{seq}$ respects $\mathsf{zero}$ via $\mathsf{seq} . ( \mathsf{zero} \mone \mathsf{zero} ). \Delta = \mathsf{zero}$;
    % [mon.mor.m.mu]
    \item\label[axiom]{ax:mon.mor.m.mu} $\mathsf{seq}$ respects $\mathsf{par}$ via $\mathsf{seq} . ( \mathsf{par} \mone \mathsf{par} ) . \zeta = \mathsf{par} . ( \mathsf{seq} \mzero \mathsf{seq} )\text.$
  \end{romanenumerate}
  See \Cref{sec:diagrams:vfreyd} for diagrams expressing the axioms.
\end{definition}

\begin{definition}\label{def:v-freydmorphism}
%  Let $\vfreyd{C} \colon \cat{M}\op \times \cat{M} \to \cat{V}$ and $\vfreyd{C}' \colon {\cat{M}'}\op \times \cat{M}' \to \cat{V}$ be $\cat{V}$-Freyd categories.
  A \emph{morphism of $\cat{V}$-Freyd categories} consists of a strong monoidal functor $F_0 \colon \cat{M} \to \cat{M}'$ and a natural transformation $F_1 \colon \vfreyd{C}(a, b) \to \vfreyd{C}'\left(F_0 a, F_0 b\right)$ satisfying:
  \begin{itemize}
    \item $F_1 . \mathsf{idt} = \mathsf{idt}'$;
    \item $F_1 . \mathsf{seq} = \mathsf{seq}' . \left(F_1 \mone F_1\right)$;
    \item $\cat{C}'\left(\id, \mu\right) . \mathsf{par}' . \left(F_1 \mzero F_1\right) = \cat{C}'\left(\mu, \id\right) . F_1 . \mathsf{par}$.
  \end{itemize}
%  For compatible morphisms $G = \left(\left(G_0, \eta_G, \mu_G\right), G_1\right)$ and $F = \left(\left(F_0, \eta_F, \mu_F\right), F_1\right)$, their composition is defined as $G \circ F = \left(\left(G_0 \circ F_0, G_0 \eta_F . \eta_G, G_0 \mu_F . \mu_G\right), G_1 . F_1\right)$.
$\cat{V}$-Freyd categories and morphisms between them form a category $\cat{V}\text{-}\Freyd$.
\end{definition}

Our definition differs from the duoidally enriched categories of Batanin and Markl \cite{batanin_centers_2012} in a few important ways.
They use $\mzero$ for sequencing and $\mone$ for parallel composition.
Their analogues to \cref{ax:mon.mor.e.eta,ax:mon.mor.e.mu,ax:mon.mor.m.eta,ax:mon.mor.m.mu} are $\mathsf{idt} = \mathsf{zero} . \epsilon$, $\mathsf{idt} = \mathsf{par} . ( \mathsf{idt} \mone \mathsf{idt} ) . \Delta$, $\mathsf{seq} . ( \mathsf{zero} \mzero \mathsf{zero} ) = \mathsf{zero} . \nabla$, and $\mathsf{seq} . ( \mathsf{par} \mzero \mathsf{par} ) = \mathsf{par} . ( \mathsf{seq} \mone \mathsf{seq} ) . \zeta\text.$
Additionally, their monoidal structure is more enriched while we inherit ours from a $\Set$-category, namely $\cat{M}$.
Thus, we believe both notions are not inter-expressible.

\section{Examples}\label{sec:examples}

This section works out three applications of duoidally enriched Freyd categories: resource management (in \Cref{sub:state-monoid}), indexed state (in \Cref{subsec:indexed-state}), and Kleisli categories of Lawvere theories (in \Cref{sub:kleislilawvere}).

\subsection{Stateful functions and separated monoids}\label{sub:state-monoid}

To deal with resources abstractly, we first introduce the novel notion of a separated monoid.

\begin{definition}\label{def:sep-monoid}
  A monoid $(M,\bullet,e)$ is \emph{separated} when it comes with a binary relation $\Vert$ such that:
  $e \Vert m$ and $m \Vert e$;
  and $mm' \Vert n$ iff $m \Vert n$ and $m' \Vert n$;
  and $m \Vert nn'$ iff $m \Vert n$ and $m \Vert n'$.
\end{definition}

Examples include $(\mathbb{N}, +, 0)$ with $x \Vert y$ iff $x = 0$ or $y = 0$; finite subsets $(\mathcal{P}_f(R), \cup, \emptyset)$ of a fixed set $R$, with $P \Vert Q$ iff $P \cap Q = \emptyset$; and products of separated monoids under pointwise separation.
Separated monoids parametrise duoidal categories of resources as follows.

\begin{definition}\label{def:labelled-set}
  Let $(M, \Vert)$ be a separated monoid.
  The category $\Lab_M$ of \emph{$M$-labelled sets} has as objects functions $\ell \colon A \to M$ and as morphisms functions $f \colon A \to A'$ with $\ell' f = \ell$.
  % Composition is function composition.
% 
  This category has a monoidal structure $\bullet$ as follows: 
  on objects, $\ell \bullet \ell' \colon A \times A' \to M$ sends $(a, a')$ to $\ell(a) \bullet \ell'(a')$; on morphisms, $f \bullet f' = f \times f'$;
  the unit $\mathsf{cst}_e \colon 1 \to M$ picks out $e \in M$.
  % Then $\left(\Lab_M, \bullet, e\right)$ is a monoidal category with unitors and associator the same as $\left(\Set, \times, 1\right)$.
% 
  There is a second monoidal structure $\Vert$ as follows:
  on objects, $\ell \Vert \ell'$ is the restriction of $\ell \bullet \ell'$ to $\{(a,a') \mid \ell(a) \Vert \ell'(a') \}$; on morphisms, $f \Vert f'=f \times f'$.
  % For objects $\ell \colon A \to M$ and $\ell' \colon A' \to M$ define $\ell \Vert \ell' \coloneqq \lambda (a, a'). \ell(a) \bullet \ell'(a') \colon X \subseteq A \times A' \to M$ where $X \coloneqq \left\lbrace (a, a') : \rho(a) \Vert \rho'(a') \right\rbrace$ and for morphisms $f_1 \colon \ell_1 \to \ell'_1$ and $f_2 \colon \ell_2 \to \ell'_2$ define $f_1 \Vert f_2 \coloneqq f_1 \times f_2$.
  % Then $\left(\Lab_M, \Vert, e\right)$ is a monoidal category with unitors and associator the restrictions of the ones for $\left(\Set, \times, 1\right)$.
% 
  The category $\left(\Lab_M, \Vert, \mathsf{cst}_e, \bullet, \mathsf{cst}_e\right)$ is duoidal with $\zeta \colon \left(\ell_1 \bullet \ell'_1\right) \Vert \left(\ell_2 \bullet \ell'_2\right) \to \left(\ell_1 \Vert \ell_2\right) \bullet \left(\ell'_1 \Vert \ell'_2\right)$ the restricted version of the $\zeta$ for $\left(\Set, \times, 1, \times, 1\right)$.
\end{definition}

Think of objects in $\Lab_M$ as sets of elements labelled with their resource needs.
The multiplication of $M$ combines resources, and the separation $\Vert$ relates non-conflicting resources.
We will now describe an enriched Freyd category where morphisms are labelled by resources as in the introduction.

Fix a countable family $R = \{ x, y, z, \ldots \}$ of sets which we think of as resources.
The set $\mathcal{P}_f(R)$ of finite subsets of $R$ is a monoid under union, and becomes a separated monoid under disjointness.
For set of resources $Q \in \mathcal{P}_f(R)$, fix a product of sets $\Pi_{x \in Q} x \eqqcolon \Pi_Q$ which thus combines the resources in $Q$.
Write $\pi_{Q'} \colon \Pi_{Q} \to \Pi_{Q'}$ for the projection if $Q' \subseteq Q$, and given a map $f \colon a \times \Pi_{Q'} \to b \times \Pi_{Q'}$ for sets $a$ and $b$, write $f_{Q'}^Q$ for the map $a \times \Pi_Q \to b \times \Pi_Q$ induced by $f$ when $Q' \subseteq Q$ which leaves the extra resources $Q \setminus Q'$ unchanged.

We will define a $\Lab_{\mathcal{P}_f(R)}$-Freyd category over $\Set$ of state-transforming functions.
% Define $\vfreyd{C}(a, b) = \sum_{Q \in \mathcal{P}_f(R)} \left(\Set(a \times \Pi_Q, b \times \Pi_Q )\right) \to \mathcal{P}_f(R)$ as the map $(Q, f) \mapsto Q$, which is a bifunctor via pre- and post-composition.
Let $\vfreyd{C}(a,b)$ be the function from the disjoint union of $\Set(a \times \Pi_Q, b \times \Pi_Q)$ over $Q \in \mathcal{P}_f(R)$ to $\mathcal{P}_f(R)$, that sends $f \colon a \times \Pi_Q \to b \times \Pi_Q$ to $Q$. 
Thus, a map $f \in \vfreyd{C}(a, b)$ with label $Q$ is an effectful computation from $a$ to $b$ which can effect only resources in $Q$.
This becomes a bifunctor under pre- and post-composition.
Writing $\cup$ for $\bullet$ and $\cap$ for $\Vert$ for the sake of concreteness, the structure maps are:
\begin{align*}
  \mathsf{idt} \colon \mathsf{cst}_\emptyset &\to \vfreyd{C}(a, a)
  & \mathsf{zero} \colon \mathsf{cst}_\emptyset &\to \vfreyd{C}(1, 1) \\
  \star &\mapsto (\emptyset, \id{}_{a \times 1})
  & \star &\mapsto (\emptyset, \id{})
\end{align*}
\vspace*{-1.75\baselineskip} 
\begin{align*}
  \mathsf{seq} \colon \vfreyd{C}(a, b) \cup \vfreyd{C}(b, c) &\to \vfreyd{C}(a, c) \\
  \left((P, f), (Q, g)\right) &\mapsto \left(P \cup Q, g_{Q}^{P \cup Q} . f_{P}^{P \cup Q}\right)
\end{align*}
\vspace*{-1.75\baselineskip} 
\begin{align*}
  \mathsf{par} \colon \vfreyd{C}(a, b) \cap \vfreyd{C}(a', b') &\to \vfreyd{C}(a \times a', b \times b') \\
  ((Q, f), (Q', f')) &\mapsto \\
  &\hspace*{-26pt}\left(Q \cup Q', \left(\id \times \left\langle \pi_{Q}, \pi_{Q'}\right\rangle^{-1} \right) m^{-1} . (f \times f') . m . \left(\id \times \left\langle \pi_{Q}, \pi_{Q'}\right\rangle\right)\right)
\end{align*}
where $\left\langle \pi_{Q}, \pi_{Q'} \right\rangle \colon \Pi_{Q \cup Q'} \to \Pi_{Q} \times \Pi_{Q'}$ is invertible because $Q \cap Q' = \emptyset$ and $m$ is middle-four interchange.
So $\mathsf{par}$ places maps in parallel up to rearranging state.

\subsection{Indexed state}\label{subsec:indexed-state}

An important computational effect is global state. However, it is often inflexible as the type of storage remains constant over time. In this example the type can vary.
We use the duoidal category of finitary endofunctors on $\Set$ of \Cref{ex:duoidal-comp-day} to give a $\left[\Set, \Set\right]_f$-Freyd category over $\Set$ based on the state monad $(s \times ( - ))^s$, extending Atkey's example \cite{kurz_algebras_2009}.
Define $\vfreyd{C}(a, b) = (b \times ( -))^{a}$, which is a bifunctor via pre- and post-composition.
The natural structure maps are:
\begin{align*}
  \mathsf{idt}_{X} \colon X &\to \left(a \times X \right)^{a}
  & \mathsf{zero}_X \colon X &\to \left(1 \times X \right)^{1}\\
  x &\mapsto \lambda a . (x, a)
  & x &\mapsto \lambda \star . (x, \star)
\end{align*}
\vspace*{-1.75\baselineskip}
\begin{align*}
  \mathsf{seq}_{X} \colon \left(b \times \left( \left(c \times X \right)^{b} \right)\right)^{a} &\to \left(c \times X \right)^{a} \\
  f &\mapsto \mathrm{eval} . f 
\end{align*}
\vspace*{-1.75\baselineskip}
\begin{align*}
  \mathsf{par}_X \colon \textstyle\int^{Y,Z} X^{Y \times Z} \times \left(b \times Y \right)^{a} \times \left(c \times Z \right)^{a'} &\to ((b \times c) \times X )^{a \times a'} \\
  (k, f, g) &\mapsto (\id \times k) . m . (f \times g)
\end{align*}
where $\mathrm{eval} \colon b \times \left(c \times X \right)^{b} \to c \times X$ is the evaluation map and $m$ is the middle-four interchange.
$\mathsf{idt}$ and $\mathsf{seq}$ are the unit and multiplication of a state monad but with varying types of state.

\subsection{Kleisli categories of Lawvere theories}\label{sub:kleislilawvere}

Lawvere theories model effectful computations. Functional programmers might be more familiar with Kleisli categories of monads, to which they are closely related. Here we describe an indexed version, which models independent effects in parallel. 
Let $\Law$ be the category of Lawvere theories. Its initial object is the theory $\mathcal{S}$ of sets, the unit for the \emph{tensor product} $\otimes$ of Lawvere theories~\cite{hylandplotkinpower:sumtensor}. This makes $\Law$ a symmetric monoidal category, with the special property that there exist inclusion maps $\phi_i \colon \mathcal{L}_i \to \mathcal{L}_1 \otimes \mathcal{L}_2$.
Thus the functor category $\left[\Law, \Set\right]$ is monoidal under Day convolution with unit the constant functor $\Law(\mathcal{S}, -) \simeq \mathbb{1}$.
As this category also has products, \Cref{ex:duoidal-products} makes it duoidal.% shows that $\left(\left[\Law, \Set\right], \otimes_{\Day}, \mathbb{1}, \times, \mathbb{1}\right)$ is a duoidal category.

Now, $\Law$ is equivalent to the category of finitary monads~\cite[Chapter~3]{adamekrosicky:presentable}: any Lawvere theory $\mathcal{L}$ induces a monad $T(\mathcal{L})$, and any map $\theta$ of Lawvere theories induces a monad morphism $T(\theta)$.
Every monad $T$ on $\Set$ is canonically bistrong: there are maps $\mathrm{st}_T \colon a \times T b \to T (a \times b)$ and $\mathrm{st'}_T \colon T a \times b \to T (a \times b)$ making the two induced maps $(a \times T b) \times c \to T ((a \times b) \times c)$ equal.
Each monad morphism $T(\theta)$ preserves strength: $T(\theta)_{a \times b} . \mathrm{st}_{T(\mathcal{L})} = \mathrm{st}_{T(\mathcal{L}')} . ( \id \times T(\theta)_{b})$.

We now show a $[\Law, \Set]$-Freyd category over $\Set$ given by the Kleisli construction on Lawvere theories.
Define on objects $\vfreyd{C}(a, b) = T( - )(b)^a$, and on morphisms $\vfreyd{C}(f, g) \colon \vfreyd{C}(a, b) \Rightarrow \vfreyd{C}(a', b')$ by $\vfreyd{C}(f, g)_{\mathcal{L}}(k) =T(\mathcal{L})(g) . k . f$, finally:
% The natural structure maps are:
\begin{align*}
  \mathsf{idt}_{\mathcal{L}} \colon 1 &\to T(\mathcal{L})(a)^a
  & \mathsf{zero}_{\mathcal{L}} \colon 1 &\to T(\mathcal{L})(1)^1 \\
  \star &\mapsto \eta
  & \star &\mapsto \eta
\end{align*}
\vspace*{-1.75\baselineskip}
\begin{align*}
  \mathsf{seq}_{\mathcal{L}} \colon T(\mathcal{L})(b)^a \times T(\mathcal{L})(c)^b &\to T(\mathcal{L})(c)^a \\
  (f, g) &\mapsto \mu . T(\mathcal{L})g . f
\end{align*}
\vspace*{-1.75\baselineskip}
\begin{align*}
  \mathsf{par}_{\mathcal{L}} \colon \textstyle\int^{\mathcal{L}_1, \mathcal{L}_2} \Law(\mathcal{L}_1\!\otimes\!\mathcal{L}_2, \mathcal{L}) \!\times\! T(\mathcal{L}_1)(b_1)^{a_1} \!\times\! T(\mathcal{L}_2)(b_2)^{a_2} \to T(\mathcal{L})(b_1 \!\times\! b_2)^{a_1 \times a_2} \\
  \left(\theta, f_1, f_2\right) \mapsto T(\theta) . \mu . T(\mathcal{L}_1 \otimes \mathcal{L}_2)(\mathrm{st'}) . \mathrm{st} . \left(T(\phi_1) \times T(\phi_2)\right) . \left(f_1 \times f_2\right)
\end{align*}
Intuitively, $\mathsf{par}$ lets us put Kleisli maps in parallel as long as their effects are forced to commute (by $\otimes$).
So $\mathsf{idt}_{\mathcal{L}}$ and $\mathsf{seq}_{\mathcal{L}}$ are the identity and composition for the Kleisli category of $T(\mathcal{L})$.
The definition of $\mathsf{par}_{\mathcal{L}}$ seems noncanonical because of the use of $T(\mathcal{L}_1 \otimes \mathcal{L}_2)(\mathrm{st'}) . \mathrm{st}$, but it is not:
$\mu . T(\mathcal{L}_1 \otimes \mathcal{L}_2)(\mathrm{st'}) . \mathrm{st} . \left(T(\phi_1) \times T(\phi_2)\right)$ and $\mu . T(\mathcal{L}_1 \otimes \mathcal{L}_2)(\mathrm{st}) . \mathrm{st'} . \left(T(\phi_1) \times T(\phi_2)\right)$ are equal by definition of $\mathcal{L}_1 \otimes \mathcal{L}_2$.

\section{Adjunction between \texorpdfstring{$\Sub\text{-}\Freyd$}{Sub-Freyd} and \texorpdfstring{$\Freyd$}{Freyd}}\label{sec:adjunction}

Now let us explain how $\cat{V}$-Freyd categories generalise Freyd categories.
Our approach is similar to Power's \cite{power_premonoidal_2002} in that we work with $\Sub$-enriched categories. 
Take $\cat{V}=\Sub$ and consider a $\Sub$-Freyd category $\vfreyd{C} \colon \cat{M}\op \times \cat{M} \to \Sub$; it comes equipped with a premonoidal-like structure via $\mathsf{par}$ and $\mathsf{idt}$.
We call a morphism $f \In \vfreyd{C}(a,b)$ which is a member of the distinguished subset
a \emph{distinguished morphism}. We will show they are central in the premonoidal sense.

First observe that $\mathsf{idt} \colon \left(1, 1\right) \to \vfreyd{C}(a, a)$ is a $\Sub$ morphism, so $\mathsf{idt}(\star) \In \vfreyd{C}(a, a)$ is distinguished.
Thus, for $g \in \vfreyd{C}(a', b')$ we find $\big(\mathsf{idt}(\star), g\big) \in \vfreyd{C}\left(a, a\right) \otimes \vfreyd{C}(a', b')$ by definition of $\otimes$.
Hence the pair is in the domain of $\mathsf{par}$, giving $\mathsf{par}\big(\mathsf{idt}(\star), g\big) \in \vfreyd{C}(a \oplus a', a \oplus b')$ which we denote by $a \rtimes_{\mathsf{par}} g$.
Similarly, for any $f \in \vfreyd{C}(a, b)$ we have $f \ltimes_{\mathsf{par}} b' \in \vfreyd{C}(a \oplus b', b \oplus b')$.
We may also construct $f \ltimes_{\mathsf{par}} a'$ and $b \rtimes_{\mathsf{par}} g$.
Hence it makes sense to ask if $\mathsf{seq}(a \rtimes_{\mathsf{par}} g, f \ltimes_{\mathsf{par}} b') = \mathsf{seq}(f \ltimes_{\mathsf{par}} a', b \rtimes_{\mathsf{par}} g)$, and if this equation (and its mirrored version by placing $g$ on the left) holds for all $f$, we call $g$ \emph{central} in analogy to the binoidal case from \Cref{def:binoidal}.

Next we claim that distinguished morphisms $g \In \vfreyd{C}(a', b')$ are central.
Note that $\big(\mathsf{idt}(\star), g\big) \In \vfreyd{C}(a', a') \times \vfreyd{C}(a', b')$ and $\big(g, \mathsf{idt}(\star)\big) \In \vfreyd{C}(a', b') \times \vfreyd{C}(b', b')$ are distinguished and in the domain of $\mathsf{seq}$.
For any $f \in \vfreyd{C}(a, b)$, we have $\big((\mathsf{idt}(\star), f), (g, \mathsf{idt}(\star))\big) \in \big(\vfreyd{C}(a, a) \times \vfreyd{C}(a, b)\big) \otimes \big(\vfreyd{C}(a', b') \times \vfreyd{C}(b', b')\big)$ and similarly $\big((f, \mathsf{idt}(\star)), (\mathsf{idt}(\star)), g\big) \in \big(\vfreyd{C}(a, b) \times \vfreyd{C}(b, b)\big) \otimes \big(\vfreyd{C}(a', a') \times \vfreyd{C}(a', b')\big)$ by definition of $\otimes$ and are thus in the domain of $\mathsf{seq} \otimes \mathsf{seq}$.
We now apply $\mathsf{par} . (\mathsf{seq} \otimes \mathsf{seq})$ to each pair and find they equal $\mathsf{par}\left(f , g\right)$.
\Cref{ax:mon.mor.m.mu} states $\mathsf{par} . ( \mathsf{seq} \otimes \mathsf{seq} ) = \mathsf{seq} . ( \mathsf{par} \times \mathsf{par} ) . \zeta$ and therefore $\mathsf{seq}(a \rtimes_{\mathsf{par}} g, f \ltimes_{\mathsf{par}} b') = \mathsf{par}(f , g) = \mathsf{seq}(f \ltimes_{\mathsf{par}} a', b \rtimes_{\mathsf{par}} g)$ (and the mirrored equation analogously), so $g$ is central.

Distinguished morphisms have their centrality preserved by $\Sub$-Freyd maps as they are mapped to distinguished morphisms, but central morphisms need not be distinguished.
Thus, \Cref{def:v-freydmorphism} ensures that membership in the distinguished subset is preserved by $\Sub$-Freyd maps, so centrality of distinguished morphisms of $\vfreyd{C}$ is preserved by all maps.
Furthermore, bifunctorality of $\vfreyd{C}$ ensures that for all $f \in \cat{M}\left(a, b\right)$, $\vfreyd{C}\left(\id, f\right)\left(\mathsf{idt}\left(\star\right)\right) \In \vfreyd{C}(a, b)$, and so the image of $\cat{M}$ is central and this centrality is preserved.
The same is true for a Freyd category $J \colon \cat{M} \to \cat{C}$, the image of $\cat{M}$ under $J$ is central and this centrality is preserved by all morphisms of Freyd categories.
This preservation requirement is the difference between Freyd categories and $\Sub$-Freyd categories: the latter can require more central morphisms than the image of $\cat{M}$ to have centrality preserved.
The rest of this subsection proves that there is an adjunction between $\Freyd$ and $\Sub\text{-}\Freyd$. The left adjoint $\mathfrak{F} \colon \Freyd \to \Sub\text{-}\Freyd$ is a free functor that only requires the image of $\cat{M}$ to be preserved.
The right adjoint $\mathfrak{U} \colon \Sub\text{-}\Freyd \to \Freyd$ forgets the extra distinguished central morphisms.

\begin{proposition}\label{prop:freyd-free}
  There is a functor $\mathfrak{F} \colon \cat{Freyd} \to \Sub\text{-}\Freyd$ defined on objects as $\mathfrak{F}(\cat{C})(a,b)= \big(J(\cat{M}(a,b)), \cat{C}(a,b)\big)$ and $\mathfrak{F}(\cat{C})(f,g)=\cat{C}(Jf,Jg)$.
\end{proposition}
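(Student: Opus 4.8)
The plan is to equip $\mathfrak{F}(\cat{C})$ with the five pieces of data of \Cref{def:v-freyd}, check that each is a $\Sub$-morphism (i.e.\ preserves distinguished subsets), verify \Cref{ax:mon.mor.id,ax:mon.mor.ass,ax:obj.lax.id,ax:obj.lax.ass,ax:mon.mor.e.eta,ax:mon.mor.e.mu,ax:mon.mor.m.eta,ax:mon.mor.m.mu}, and then extend $\mathfrak{F}$ to morphisms and confirm functoriality. Throughout I will use that a Freyd category $J \colon \cat{M} \to \cat{C}$ has $J$ identity-on-objects and strict premonoidal (so $J\id[a] = \id[a]$, and $J$ commutes with $(-)\ltimes x$ and $x\rtimes(-)$), with image in $Z(\cat{C})$, so that every distinguished morphism of $\mathfrak{F}(\cat{C})$ is central. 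In \Cref{ex:duoidal-subset} the sequential product $\mzero$ of $\Sub$ is the disjunctive product and $\mone$ is the cartesian product, so a point of $\mathfrak{F}(\cat{C})(a_1,b_1)\mzero\mathfrak{F}(\cat{C})(a_2,b_2)$ is a pair of morphisms of $\cat{C}$ at least one of which is distinguished, while a point of $\mathfrak{F}(\cat{C})(a,b)\mone\mathfrak{F}(\cat{C})(b,c)$ is an arbitrary composable pair.

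That $\mathfrak{F}(\cat{C}) \colon \cat{M}\op\times\cat{M}\to\Sub$ is a bifunctor is clear: it is $\cat{C}(-,-)$ restricted along $J\times J$ with distinguished subsets $J(\cat{M}(a,b))$, and $\cat{C}(Jf,Jg)$ carries $J(\cat{M}(a,b))$ into $J(\cat{M}(a',b'))$ since $Jg . Jk . Jf = J(g . k . f)$. I take $\mathsf{idt}(\star)=\id[a]$, $\mathsf{seq}(f,g)=g.f$, and $\mathsf{zero}(\star)=\id[e]$, which are evidently $\Sub$-morphisms. The definition needing care is $\mathsf{par}$: given a pair $(f_1,f_2)$ in the disjunctive product, at least one entry is distinguished, hence central, so I may set $\mathsf{par}(f_1,f_2)\coloneqq (b_1\rtimes f_2).(f_1\ltimes a_2)$, and by centrality of whichever entry is distinguished this also equals $(f_1\ltimes b_2).(a_1\rtimes f_2)$, so either clause may be used as convenient. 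When both $f_i = Jk_i$ are distinguished, strictness of $J$ gives $\mathsf{par}(Jk_1,Jk_2) = J(k_1\otimes k_2)$, so $\mathsf{par}$ preserves distinguished subsets; naturality in the four object variables follows from functoriality of $(-)\ltimes x$, $x\rtimes(-)$, and the centrality just used.

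Verifying the axioms falls into an easy part and a bookkeeping part. \Cref{ax:mon.mor.id,ax:mon.mor.ass,ax:mon.mor.e.eta,ax:mon.mor.e.mu,ax:mon.mor.m.eta} reduce at once to the unit and associativity laws of composition in $\cat{C}$, to the fact that the binoidal functors preserve identities, and to the fact that in $\Sub$ the maps $\epsilon,\Delta,\nabla$ are (isomorphic to) identities while the relevant unitors and associators act on tuples as in $\Set$. \Cref{ax:obj.lax.id} additionally uses naturality of the unitors of $\cat{C}$ with respect to $e\rtimes(-)$ and $(-)\ltimes e$. For \Cref{ax:mon.mor.m.mu}, unwinding $\zeta$ (the restricted middle-four interchange) and $\mathsf{par}$ reduces the claim to $\mathsf{par}(g_1,g_2).\mathsf{par}(f_1,f_2) = \mathsf{par}(g_1.f_1,g_2.f_2)$ for two composable ``columns'' at least one of which is wholly distinguished; expanding both sides by functoriality of $(-)\ltimes x$ and $x\rtimes(-)$ leaves an interchange equation between a distinguished (hence central) morphism and an arbitrary one, which holds by centrality. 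I expect \Cref{ax:obj.lax.ass} to be the main obstacle: on a triple of morphisms with at most one non-distinguished entry, both composites compute a parenthesisation of the ``triple product'' conjugated by the (central) associators of $\cat{C}$, and they agree by the coherence of $\cat{C}$ given that at least two entries are central. I plan to isolate this in a short lemma stating that, in a premonoidal category, the product of a central morphism with an arbitrary one (on either side) is functorial and compatible with $\alpha,\lambda,\rho$ exactly as in the monoidal case --- essentially premonoidal coherence restricted to central morphisms.

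Finally, a morphism $(F_0,F_1)$ of Freyd categories is sent to the pair consisting of $F_0$ and of $F_1$ acting on underlying sets, which is well-typed since $F_1 J = J' F_0$ forces $F_1$ to agree with $F_0$ on objects, and is a $\Sub$-morphism since $F_1(Jk) = J'(F_0 k)$ is distinguished. The $\mathsf{idt}$- and $\mathsf{seq}$-clauses of \Cref{def:v-freydmorphism} are just functoriality of $F_1$, and the $\mathsf{par}$-clause follows by pasting the two one-variable naturality squares of the structure map of the strong premonoidal functor $F_1$, using again that $F_1$ carries distinguished, hence central, morphisms to central ones, and that $F_1 J = J' F_0$ relates that structure map to the one of $F_0$. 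Functoriality of $\mathfrak{F}$ is then immediate, as identities and composites are computed on the underlying $\Set$-level data.
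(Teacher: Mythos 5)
Your proposal is correct and follows essentially the same route as the paper's proof: the same structure maps ($\mathsf{idt}$, $\mathsf{seq}$, $\mathsf{zero}$ as identities/composition, $\mathsf{par}$ as the premonoidal tensor, well-defined on the disjunctive product because $J$'s image is central), the same use of strict premonoidality of $J$ and naturality of the unitors/associators for \cref{ax:obj.lax.id,ax:obj.lax.ass}, and centrality for \cref{ax:mon.mor.m.mu} and for the morphism part. The only slip is terminological: in \Cref{ex:duoidal-subset} the disjunctive product $\mzero$ is the \emph{parallel} structure and the cartesian product $\mone$ the sequential one, which is in fact how you use them.
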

\begin{proof}[Proof sketch]
  % Given a Freyd category $J \colon \cat{M} \to \cat{C}$, we define $\mathfrak{F}\left(\cat{C}\right) \colon \cat{M}\op \times \cat{M} \to \Sub$ by $\mathfrak{F}\left(\cat{C}\right)\left(a, b\right) \coloneqq \left(J\left(\cat{M}\left(a, b\right)\right), \cat{C}\left(a, b\right)\right)$ on objects and $\mathfrak{F}\left(\cat{C}\right)\left(f, g\right) \coloneqq \cat{C}\left(Jf, Jg\right)$ on morphisms.
  % For any $f \colon a' \to a,\, g \colon b \to b'$, $\cat{C}\left(Jf, Jg\right) \colon \cat{C}\left(a, b\right) \to \cat{C}\left(a', b'\right)$ is a $\Sub$ morphism, i.e.\ that $J\left(\cat{M}\left(a, b\right)\right) \subseteq J\left(\cat{M}\left(a', b'\right)\right)$.
  % Let $h \in J\left(\cat{M}\left(a, b\right)\right)$, then because $J$ is identity-on-objects, there exists $h' \in \cat{M}\left(a, b\right)$ such that $h = J\left(h'\right)$.
  % Then $\cat{C}\left(Jf, Jg\right)\left(h\right) = Jg . h . Jf = J\left(g . h' . f\right) \in J\left(\cat{M}\left(a, b\right)\right)$ as $J$ is identity-on-objects.
  $\mathfrak{F}(\cat{C})$ is well-defined on morphisms because $J$ is identity-on-objects, and it is bifunctorial by bifunctorality of hom and functorality of $J$. The structure maps are:
  \begin{itemize}
    \item $\mathsf{idt} \colon (1, 1) \to \mathfrak{F}(\cat{C})(a, a)$ is $* \mapsto \id$;%, $\mathsf{idt}\left(*\right) \coloneqq \id$, which is valid as $\id = J \id \in J\left(M\left(a, a\right)\right)$.
    \item $\mathsf{seq} \colon \mathfrak{F}(\cat{C})(a, b) \times \mathfrak{F}(\cat{C})(b, c) \to \mathfrak{F}(\cat{C})(a, c)$ is $(f, g) \mapsto g . f$;%, which is valid because given $f \in J\left(\cat{M}\left(a, b\right)\right),\, g \in J\left(\cat{M}\left(b, c\right)\right)$ there exist $f' \in \cat{M}\left(a, b\right),\, g' \in \cat{M}\left(b, c\right)$ such that $f = J\left(f'\right), \, g = J\left(g'\right)$ and so $\mathsf{seq}\left(f, g\right) = J\left(g' . f'\right) \in J\left(\cat{M}\left(a, c\right)\right)$.
    \item $\mathsf{zero} \colon (1, 1) \to \mathfrak{F}(\cat{C})(e, e)$ is $* \mapsto \id$;
    \item $\mathsf{par} \colon \mathfrak{F}(\cat{C})(a_1, b_1) \otimes \mathfrak{F}(\cat{C})(a_2, b_2) \to \mathfrak{F}(\cat{C})(a_1 \oplus a_2, b_1 \oplus b_2)$ is $(f_1,f_2) \mapsto f_1 \otimes f_2$; this is well-defined whether $(f_1, f_2)$ is in $J(\cat{M}(a_1, b_1)) \times \cat{C}(a_2, b_2)$ or is in $\cat{C}(a_1,b_1) \times J(\cat{M}(a_2,b_2))$ as $J$ preserves centrality of $\cat{M}=Z(\cat{M})$.%which is defined as $\cat{M} = Z\left(\cat{M}\right)$ and $J$ preserves centrality and the other case analogously.%; the map is valid as $J(f_1 \oplus f_1) = J f_1 \otimes J f_2$.
  \end{itemize}
  The (extra)naturality of the structure maps comes from the extranaturality of composition, functorality of $\cat{M}$'s monoidal product, and $J$ being a strict premonoidal functor preserving centrality.
  \Cref{ax:mon.mor.id,ax:mon.mor.ass} are true by $\cat{C}$'s composition, \cref{ax:obj.lax.ass,ax:obj.lax.id} follow from the strict premonoidality of $J$ and the naturality of unitors and associators, and \cref{ax:mon.mor.e.eta,ax:mon.mor.m.eta} are trivial.
  Finally, \cref{ax:mon.mor.e.mu,ax:mon.mor.m.mu} follow from $\cat{C}$'s premonoidal structure.% and the strict premonoidality of $J$.

  Finally, it is easy to check that $\mathfrak{F}(F)=F$ is well-defined and functorial.
%  Thus $\mathfrak{F}$ is valid on objects.
%  For $\mathfrak{F}$'s action on morphisms, let $J \colon \cat{M} \to \cat{C}$ and $J' \to \cat{M}' \to \cat{C}'$ be Freyd categories and $F = \left(F_0, F_1\right) \colon J \to J'$ (and so $F_1 J = J' F_0$).
%  The morphism $\mathfrak{F}(F) \coloneqq \left(\mathfrak{F}(F)_0, \mathfrak{F}(F)_1\right)$ is merely $\mathfrak{F}(F)_0 = F_0$ and $\mathfrak{F}(F)_1 = F_1$.
%  $F_0$ is already a strong monoidal functor as desired.
%  We require $\mathfrak{F}(F)_1 \colon \mathfrak{F}\left(\cat{C}\right)\left(a, b\right) \to \mathfrak{F}\left(\cat{C}'\right)\left(F_0 a, F_0 b\right)$, i.e.\ $\cat{C}\left(J a, J b\right) \to \cat{C}'\left(J' F_0 a, J' F_0 b\right) = \cat{C}'\left(F_1 J a, F_1 J b\right)$ and so $F_1$ has the correct type.
\end{proof}

\begin{proposition}\label{prop:freyd-forgetful}
  There is a functor $\mathfrak{U} \colon \Sub\text{-}\Freyd \to \Freyd$ that sends an object $\vfreyd{C} \colon \cat{M}\op \times \cat{M} \to \Sub$ to the functor $J \colon \cat{M} \to \mathfrak{U}(\cat{C})$ defined as follows:
  \begin{itemize}
    \item the category $\mathfrak{U}(\vfreyd{C})$ has the same objects as $\cat{M}$ but homsets $\mathfrak{U}(\vfreyd{C})(a, b) = A$ where $(X, A) \coloneqq \vfreyd{C}(a, b)$, with composition $g . f = \mathsf{seq}(f, g)$, and identity $\id{}_a = \mathsf{idt}(\star)$;
    \item the functor $J$ is the identity on objects and $J(f)=\vfreyd{C}(\id{}_a, f)(\mathsf{idt}(\star))$ on morphisms;  
    \item the binoidal structure on $\mathfrak{U}(\vfreyd{C})$ is $a \rtimes b = a \ltimes b = a \oplus_{\cat{M}} b$ on objects and $a \rtimes f = \mathsf{par}(\mathsf{idt}(\star), f)$ and $f \ltimes b = \mathsf{par}(f, \mathsf{idt}(\star))$ on morphisms.% which is defined as $\mathsf{idt}\left(\star\right) \In{\vfreyd{C}\left(a, a\right)}$;
%    \item the premonoidal maps are $\alpha_{\mathfrak{U}\left(\vfreyd{C}\right)} \coloneqq J_{\mathfrak{U}\left(\vfreyd{C}\right)} \alpha_{\cat{M}}$, $\lambda_{\mathfrak{U}\left(\vfreyd{C}\right)} \coloneqq J_{\mathfrak{U}\left(\vfreyd{C}\right)} \lambda_{\cat{M}}$, and $\rho_{\mathfrak{U}\left(\vfreyd{C}\right)} \coloneqq J_{\mathfrak{U}\left(\vfreyd{C}\right)} \rho_{\cat{M}}$. 
  \end{itemize}
\end{proposition}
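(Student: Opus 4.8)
The plan is to first check that $\mathfrak{U}(\vfreyd{C})$ together with $J$ forms a Freyd category in the sense of \Cref{def:freyd-category}, then to extend $\mathfrak{U}$ to morphisms and verify functoriality. The whole argument rests on the fact that $\Sub$ sits over $\Set$: after forgetting distinguished subsets, each of $\mathsf{idt},\mathsf{seq},\mathsf{zero},\mathsf{par}$ is an ordinary function on the relevant homsets, and the axioms of \Cref{def:v-freyd} become equations between such functions, the coherence maps of the two monoidal structures of $\Sub$ contributing only their underlying bijections (for $\mone=\times$) or the middle-four interchange (for $\zeta$). That $\mathfrak{U}(\vfreyd{C})$ is a category is then immediate: \cref{ax:mon.mor.ass} gives associativity of $g.f\coloneqq\mathsf{seq}(f,g)$ and \cref{ax:mon.mor.id} the unit laws for $\id[a]\coloneqq\mathsf{idt}(\star)$. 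An auxiliary lemma, proved from the extranaturality of $\mathsf{seq}$, \cref{ax:mon.mor.id}, and bifunctoriality of $\vfreyd{C}$, is that the functorial action of $\vfreyd{C}$ is realised by composition, $\vfreyd{C}(p,q)(h)=\mathsf{seq}(\mathsf{seq}(J(p),h),J(q))$; in particular $J$ is a functor.

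Next, the binoidal structure. On objects $a\ltimes b=a\rtimes b=a\oplus_\cat{M} b$ by fiat. That $a\rtimes(-)\coloneqq\mathsf{par}(\mathsf{idt}(\star),-)$ preserves identities is exactly $\mathsf{par}(\mathsf{idt}(\star),\mathsf{idt}(\star))=\mathsf{idt}(\star)$, which is \cref{ax:mon.mor.e.mu} read with $\nabla$ the unitor of $\mzero=\otimes$ on $\Sub$; preservation of composites follows from \cref{ax:mon.mor.m.mu} and \cref{ax:mon.mor.id}, using that $\zeta$ acts as the middle-four interchange on underlying sets and that any pair with a distinguished first component lies in the disjunctive product, so that $\zeta\bigl((\mathsf{idt},\mathsf{idt}),(f,g)\bigr)=\bigl((\mathsf{idt},f),(\mathsf{idt},g)\bigr)$ is meaningful; the case of $(-)\ltimes b$ is symmetric. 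The premonoidal structure on $\mathfrak{U}(\vfreyd{C})$ is obtained by transporting the associator and unitors of $\cat{M}$ along $J$, so that $J$ preserves $\alpha,\lambda,\rho$ on the nose by construction.

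It remains to see that $J$ has central image and is strict premonoidal, and that the transported $\alpha,\lambda,\rho$ are natural. The image of $J$ is central by precisely the argument rehearsed in the text before \Cref{prop:freyd-free}: each $J(f)=\vfreyd{C}(\id,f)(\mathsf{idt}(\star))$ is distinguished because $\mathsf{idt}$ and $\vfreyd{C}(\id,f)$ are $\Sub$-morphisms and hence preserve distinguished subsets, and distinguished morphisms are central by \cref{ax:mon.mor.m.mu} together with the non-invertibility of $\zeta$ in $\Sub$; this centrality is what makes the transported coherence isomorphisms central. Given centrality, $J$ is strict premonoidal with comparison cells $\eta=\id$, $\mu=\id$: the only nontrivial point, that $J(f\oplus\id[y])=J(f)\ltimes y$ and $J(\id[x]\oplus g)=x\rtimes J(g)$, reduces via naturality of $\mathsf{par}$ and \cref{ax:mon.mor.e.mu} to $\vfreyd{C}(\id,f\oplus\id[y])\bigl(\mathsf{par}(\mathsf{idt},\mathsf{idt})\bigr)=\mathsf{par}\bigl(\vfreyd{C}(\id,f)(\mathsf{idt}),\mathsf{idt}\bigr)$. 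Consequently the triangle and pentagon transport from $\cat{M}$, and componentwise naturality of $\alpha,\lambda,\rho$ follows from \cref{ax:obj.lax.ass,ax:obj.lax.id} (associativity and unitality of $\mathsf{par}$), naturality of $\mathsf{par}$, \cref{ax:mon.mor.e.mu}, and the auxiliary lemma, after unfolding e.g.\ $(f\ltimes y)\ltimes z=\mathsf{par}(\mathsf{par}(f,\mathsf{idt}),\mathsf{idt})$ and $f\ltimes(y\oplus z)=\mathsf{par}(f,\mathsf{par}(\mathsf{idt},\mathsf{idt}))$. Thus $(\cat{M},\mathfrak{U}(\vfreyd{C}),J)$ is a Freyd category.

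On morphisms, $\mathfrak{U}$ sends a $\Sub$-Freyd morphism $(F_0,F_1)$ to $(F_0,\widehat{F_1})$, where $\widehat{F_1}$ agrees with $F_0$ on objects and with the components of $F_1$ on homsets; it is a functor by the first two clauses of \Cref{def:v-freydmorphism}, satisfies $\widehat{F_1}J=J'F_0$ by the first clause together with the definition of $J$, and is strong premonoidal with $\eta=\id$ and $\mu$ the comparison of the strong monoidal $F_0$, the third clause of \Cref{def:v-freydmorphism} supplying naturality and coherence of $\mu$. Functoriality of $\mathfrak{U}$ is then immediate, everything being defined componentwise on homsets while the $\cat{M}$-parts compose as strong monoidal functors. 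The main obstacle is the penultimate step --- establishing that $\mathfrak{U}(\vfreyd{C})$ is genuinely premonoidal rather than merely binoidal, and that $J$'s image is central --- since this is the one place where the duoidal axioms cannot be transcribed one for one: one must simultaneously juggle centrality, the distinguishedness bookkeeping that forces elements into the disjunctive product $\otimes$, and the interchange axioms \cref{ax:mon.mor.e.mu,ax:mon.mor.m.mu}. The centrality argument in particular, although cited above as ``the text's argument'', is the conceptual heart and should be given in full.
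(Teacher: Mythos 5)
Your proposal is correct and follows essentially the same route as the paper, whose own proof consists of the remark that the checks are mechanical; you in fact supply the details (the auxiliary identity $\vfreyd{C}(p,q)(h)=\mathsf{seq}(\mathsf{seq}(J(p),h),J(q))$, the use of distinguishedness of $\mathsf{idt}(\star)$ to land in the disjunctive product so that \cref{ax:mon.mor.m.mu} can be evaluated, and the centrality argument already rehearsed in the text before \Cref{prop:freyd-free}). Two cosmetic slips that do not affect the argument: the unit comparison of $\widehat{F_1}$ should be $J'$ applied to the unit isomorphism of the strong (not strict) monoidal $F_0$, not the identity, since $F_0(e_{\cat{M}})$ need not equal $e_{\cat{M}'}$; and centrality of distinguished morphisms rests on \cref{ax:mon.mor.m.mu} applied to suitably distinguished pairs, the non-invertibility of $\zeta$ being only the reason centrality can fail for non-distinguished morphisms.
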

\begin{proof}[Proof sketch]
  It is mechanical to check that $\mathfrak{U}(\vfreyd{C})$ is a well-defined Freyd category.
  Given a morphism $F = \left(F_0, F_1\right)$ from $\vfreyd{C} \colon \cat{M}\op \times \cat{M} \to \Sub$ to $\vfreyd{C}' \colon \cat{M}'{}\op \times \cat{M}' \to \Sub$, we must define a morphism $\mathfrak{U}\left(F\right) \colon J_{\mathfrak{U}\left(\vfreyd{C}\right)} \to J_{\mathfrak{U}\left(\vfreyd{C}'\right)}$.
  We define $\mathfrak{U}\left(F\right)_0$ to be the strong monoidal functor $F_0$, and define $\mathfrak{U}\left(F\right)_1$ as $F_0$ on objects and as $F_1$ on homsets. This is a well-defined morphism of Freyd categories.
  It is straightforward to verify that $\mathfrak{U}$ is functorial.
\end{proof}

\begin{theorem}
  The functors of Propositions~\ref{prop:freyd-free} and~\ref{prop:freyd-forgetful} form an adjunction $\mathfrak{F} \dashv \mathfrak{U}$.
\end{theorem}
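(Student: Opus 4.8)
The plan is to build the adjunction $\mathfrak{F} \dashv \mathfrak{U}$ from an explicit unit and counit, exploiting that $\mathfrak{U}\mathfrak{F}$ is the identity functor on $\Freyd$. Indeed, for a Freyd category $J \colon \cat{M} \to \cat{C}$, unwinding \Cref{prop:freyd-free,prop:freyd-forgetful} shows that $\mathfrak{F}(\cat{C})$ has ambient homsets $\cat{C}(a,b)$ with $\mathsf{seq}$, $\mathsf{idt}$, $\mathsf{par}$ equal to composition, identities, and $\otimes$ in $\cat{C}$; applying $\mathfrak{U}$ then reassembles precisely the category $\cat{C}$, with $\mathfrak{U}\mathfrak{F}(\cat{C})(a,b) = \cat{C}(a,b)$ and composition/identities those of $\cat{C}$, the original functor $J$ (its action is $f \mapsto \mathfrak{F}(\cat{C})(\id, f)(\mathsf{idt}(\star)) = Jf$), and the premonoidal structure of $\cat{C}$: the binoidal maps are $\mathsf{par}(\mathsf{idt}(\star), -) = \id \otimes -$ and $\mathsf{par}(-, \mathsf{idt}(\star)) = - \otimes \id$, while the associators and unitors of $\mathfrak{U}\mathfrak{F}(\cat{C})$, forced to be the $J$-images of those of $\cat{M}$, coincide with those of $\cat{C}$ because $J$ is strict premonoidal. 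The same check gives $\mathfrak{U}\mathfrak{F}(F) = F$ on morphisms. So I take the unit $\eta \colon \mathrm{Id}_{\Freyd} \Rightarrow \mathfrak{U}\mathfrak{F}$ to be the identity.

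Next I would define the counit $\varepsilon_{\vfreyd{C}} \colon \mathfrak{F}\mathfrak{U}(\vfreyd{C}) \to \vfreyd{C}$. Writing $\vfreyd{C}(a,b) = (X_{ab}, A_{ab})$, the object $\mathfrak{F}\mathfrak{U}(\vfreyd{C})(a,b)$ has the same ambient set $A_{ab}$ but the possibly smaller distinguished subset $J(\cat{M}(a,b)) = \{ \vfreyd{C}(\id, h)(\mathsf{idt}(\star)) \mid h \in \cat{M}(a,b) \}$. Let $\varepsilon_{\vfreyd{C}} \coloneqq (\mathrm{id}_{\cat{M}}, \iota)$ with $\iota$ the identity function $A_{ab} \to A_{ab}$ on each homset. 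This is a $\Sub$ morphism because $\mathsf{idt}(\star)$ lies in the distinguished subset of $\vfreyd{C}(a,a)$ and $\vfreyd{C}(\id, h)$ preserves distinguished subsets, so $J(\cat{M}(a,b)) \subseteq X_{ab}$. I would then check the three clauses of \Cref{def:v-freydmorphism}: the $\mathsf{idt}$ and $\mathsf{seq}$ clauses are immediate, since $\mathfrak{F}\mathfrak{U}(\vfreyd{C})$ carries these maps over verbatim from $\vfreyd{C}$ and $\iota$ is an identity; the $\mathsf{par}$ clause amounts to showing that $\mathsf{par}_{\mathfrak{F}\mathfrak{U}(\vfreyd{C})}(f_1, f_2)$ — which, when $f_2$ is distinguished, is the composite $(f_1 \ltimes b_2).(a_1 \rtimes f_2)$ formed inside $\mathfrak{U}(\vfreyd{C})$, i.e.\ $\mathsf{seq}_{\vfreyd{C}}\big( \mathsf{par}_{\vfreyd{C}}(\mathsf{idt}(\star), f_2), \mathsf{par}_{\vfreyd{C}}(f_1, \mathsf{idt}(\star)) \big)$ — collapses to $\mathsf{par}_{\vfreyd{C}}(f_1, f_2)$, which is exactly the interchange axiom \cref{ax:mon.mor.m.mu} together with the unit law \cref{ax:mon.mor.id}, as in the centrality computation preceding \Cref{prop:freyd-free}. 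Naturality of $\varepsilon$ is routine, since a $\Sub$-Freyd morphism acts as the same function on ambient homsets before and after $\mathfrak{F}\mathfrak{U}$ and $\iota$ is an identity.

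The triangle identities are then trivial. Since $\eta$ is the identity, the first reads $\varepsilon_{\mathfrak{F}(\cat{C})} = \mathrm{id}_{\mathfrak{F}(\cat{C})}$, true because $\mathfrak{F}\mathfrak{U}(\mathfrak{F}(\cat{C})) = \mathfrak{F}(\cat{C})$ already has distinguished subsets $J(\cat{M}(a,b))$, so the relevant $\iota$ is genuinely an identity; the second reads $\mathfrak{U}(\varepsilon_{\vfreyd{C}}) = \mathrm{id}_{\mathfrak{U}(\vfreyd{C})}$, true because $\mathfrak{U}$ sends $(\mathrm{id}_{\cat{M}}, \iota)$ to $(\mathrm{id}_{\cat{M}}, \iota)$ and $\iota$ is the identity on homsets. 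Since the unit is invertible, $\mathfrak{F}$ is automatically fully faithful, so this simultaneously yields the claim from \Cref{sec:intro} that $\Freyd$ is a full coreflective subcategory of $\Sub\text{-}\Freyd$.

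I expect the only substantive step to be verifying the $\mathsf{par}$ clause for $\varepsilon_{\vfreyd{C}}$ — equivalently, confirming that the premonoidal structure $\mathfrak{U}$ imposes on $\mathfrak{U}(\vfreyd{C})$, when fed back through $\mathfrak{F}$, reproduces $\mathsf{par}_{\vfreyd{C}}$ on the restricted domain; everything else is bookkeeping once $\mathfrak{U}\mathfrak{F} = \mathrm{Id}_{\Freyd}$ is in hand. An equivalent alternative is to prove the hom-set bijection $\Sub\text{-}\Freyd(\mathfrak{F}(\cat{C}), \vfreyd{D}) \cong \Freyd(\cat{C}, \mathfrak{U}(\vfreyd{D}))$ directly: a morphism on the left is a strong monoidal $F_0$ together with a $\Sub$-natural $F_1$, and the requirement that $F_1$ preserve distinguished elements translates, via naturality and the $\mathsf{idt}$ clause, into exactly the condition $F_1 J = J' F_0$ on the right, while the $\mathsf{seq}$ and $\mathsf{par}$ clauses become functoriality and premonoidality of $F_1$; but the unit/counit presentation is shorter given the observation above.
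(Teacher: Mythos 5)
Your proposal is correct and follows essentially the same route as the paper: the identity unit via $\mathfrak{U}\mathfrak{F} = \mathrm{Id}_{\Freyd}$, a counit whose underlying $\Set$ maps are identities (but which is not an identity in $\Sub$ since the distinguished subset shrinks to $J(\cat{M}(a,b))$), and trivially verified triangle identities. You merely spell out the $\mathsf{par}$-clause check via \cref{ax:mon.mor.m.mu} and the centrality computation, which the paper's sketch leaves implicit.
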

\begin{proof}[Proof sketch]
  For the unit $\boldsymbol{\eta}$ of the adjunction we may take the identity as a short calculation shows that $\mathfrak{U} \mathfrak{F} = \mathrm{Id}_{\Freyd}$. A second calculation shows that for a $\Sub$-Freyd category $\vfreyd{C} \colon \cat{M}\op \times \cat{M} \to \Sub$, we have $\mathfrak{F}\mathfrak{U}\left(\vfreyd{C}\right)\left(a, b\right) = \big(\vfreyd{C}(\id, \cat{M}(a, b))(\mathsf{idt}(\star)), \vfreyd{C}(a, b)\big)$, and so each component $\boldsymbol{\epsilon}_{\vfreyd{C}} \colon \mathfrak{F}\mathfrak{U}\left(\vfreyd{C}\right) \to \vfreyd{C}$ of the counit can be defined as ${\boldsymbol{\epsilon}_{\vfreyd{C}}}_0 = \mathrm{Id}_{\cat{M}}$ and ${\boldsymbol{\epsilon}_{\vfreyd{C}}}_1 = \id{}_{{\vfreyd{C}\left(a, b\right)}} \colon \mathfrak{F}\mathfrak{U}\left(\vfreyd{C}\right)\left(a, b\right) \to \vfreyd{C}\left(a, b\right)$.
  Note that the underlying $\Set$ map for ${\boldsymbol{\epsilon}_{\vfreyd{C}}}_1$ is the identity map, but this is not an identity in $\Sub$.
  This counit is natural, and this unit and counit satisfy the zig-zag identities for an adjunction.
\end{proof}

Recall that an adjunction $F \dashv G$ with unit $\eta \colon \mathrm{Id} \to G F$ and counit $\epsilon \colon F G \to \mathrm{Id}$ is \emph{idempotent} if any of $F \eta$, $\epsilon F$, $\eta G$, or $G \epsilon$ are invertible~\cite[Section~3.8]{grandis_category_2021}.
In the case of the previous theorem, clearly $\mathfrak{F} \boldsymbol{\eta}$ is invertible as $\boldsymbol{\eta}$ is the identity, so this adjunction is idempotent.
This leads to the following theorem detailing just how $\Sub\text{-}\Freyd$ generalises $\Freyd$.

\begin{theorem}\label{thm:freyd-in-subset-freyd}
  The full coreflective subcategory of $\Sub\text{-}\Freyd$ consisting of objects $\vfreyd{C} \colon \cat{M}\op \times \cat{M} \to \Sub$ for which $\vfreyd{C}\left(a, b\right)$ has the distinguished subset $\vfreyd{C}\left(\id, \cat{M}\left(a, b\right)\right)\left(\mathsf{idt}\left(\star\right)\right)$ is equivalent to $\Freyd$.
\end{theorem}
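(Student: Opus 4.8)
The plan is to exploit the idempotency of the adjunction $\mathfrak{F} \dashv \mathfrak{U}$ established in the previous theorem. Recall that for an idempotent adjunction, the right adjoint $\mathfrak{U}$ is (up to equivalence) a fully faithful embedding onto the full subcategory of $\Sub\text{-}\Freyd$ spanned by the objects at which the counit $\boldsymbol{\epsilon}$ is an isomorphism, and this subcategory is coreflective with coreflector $\mathfrak{F}\mathfrak{U}$; dually, the fixed points of the monad $\mathfrak{U}\mathfrak{F}$ on $\Freyd$ coincide with all of $\Freyd$ since $\mathfrak{U}\mathfrak{F} = \mathrm{Id}_{\Freyd}$ (so $\mathfrak{F}$ is already fully faithful). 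Concretely, I would first recall or cite the standard fact (\textit{e.g.}\ from \cite{grandis_category_2021}) that an idempotent adjunction restricts to an adjoint equivalence between the full subcategory of objects where the unit is invertible and the full subcategory of objects where the counit is invertible. Here the unit $\boldsymbol{\eta}$ is literally the identity, so every object of $\Freyd$ lies in its domain, and the claimed equivalence becomes $\Freyd \simeq \{\, \vfreyd{C} \in \Sub\text{-}\Freyd \mid \boldsymbol{\epsilon}_{\vfreyd{C}} \text{ is an isomorphism}\,\}$.

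The remaining work is to identify that subcategory explicitly as the one described in the statement. From the proof of the adjunction theorem we have
\[
  \mathfrak{F}\mathfrak{U}(\vfreyd{C})(a, b) = \big(\vfreyd{C}(\id, \cat{M}(a, b))(\mathsf{idt}(\star)),\ \vfreyd{C}(a, b)\big),
\]
with ${\boldsymbol{\epsilon}_{\vfreyd{C}}}_1$ the natural transformation whose underlying $\Set$ map on each homobject is the identity. A morphism in $\Sub$ whose underlying function is the identity is an isomorphism in $\Sub$ exactly when the two distinguished subsets coincide. Hence $\boldsymbol{\epsilon}_{\vfreyd{C}}$ is an isomorphism (note ${\boldsymbol{\epsilon}_{\vfreyd{C}}}_0 = \mathrm{Id}_{\cat{M}}$ is already invertible) if and only if for every pair of objects $a, b$ the distinguished subset of $\vfreyd{C}(a,b)$ equals $\vfreyd{C}(\id, \cat{M}(a, b))(\mathsf{idt}(\star))$ --- precisely the condition in the statement. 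I would also note that ``$\supseteq$'' in this equality always holds (the distinguished $\mathsf{idt}(\star)$ is carried into the distinguished subset by the $\Sub$-morphism $\vfreyd{C}(\id, f)$), so the condition is really the reverse inclusion: no distinguished morphisms beyond the image of $\cat{M}$.

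Putting the pieces together: $\mathfrak{U}$ corestricts to an equivalence from $\Freyd$ onto this full subcategory, with pseudo-inverse the corestriction of $\mathfrak{F}$; fullness and faithfulness of this corestriction of $\mathfrak{U}$ follow because $\mathfrak{U}\mathfrak{F} = \mathrm{Id}$ and $\boldsymbol{\epsilon}$ is invertible on the subcategory, and essential surjectivity is the definition of the subcategory together with $\boldsymbol{\epsilon}$ witnessing $\mathfrak{F}\mathfrak{U}(\vfreyd{C}) \cong \vfreyd{C}$. Coreflectivity is inherited from idempotency of the adjunction: the inclusion has right adjoint given by $\mathfrak{F}\mathfrak{U}$ followed by the equivalence. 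The main obstacle, such as it is, is bookkeeping: carefully checking that the description of $\mathfrak{F}\mathfrak{U}(\vfreyd{C})$ and of ${\boldsymbol{\epsilon}_{\vfreyd{C}}}_1$ from the adjunction proof are correct, and that ``underlying identity map'' is invertible in $\Sub$ iff the distinguished subsets agree --- everything else is a direct appeal to the general theory of idempotent adjunctions. I would therefore keep the proof short, citing the idempotent-adjunction machinery and spending the bulk of the argument on the explicit identification of the counit-isomorphism locus with the stated distinguished-subset condition.
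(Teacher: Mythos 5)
Your proposal is correct and follows essentially the same route as the paper: invoke the general theory of idempotent adjunctions to identify $\Freyd$ (via $\mathfrak{U}\mathfrak{F}=\mathrm{Id}$) with the full coreflective subcategory of objects where the counit is invertible, then observe that ${\boldsymbol{\epsilon}_{\vfreyd{C}}}_1$ has identity underlying $\Set$ map, so it is invertible in $\Sub$ exactly when the distinguished subset of $\vfreyd{C}(a,b)$ equals $\vfreyd{C}(\id,\cat{M}(a,b))(\mathsf{idt}(\star))$. The only cosmetic difference is that the paper phrases the general fact via algebras of the monad and coalgebras of the comonad rather than via fixed points of the unit and counit, which amounts to the same argument.
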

\begin{proof}[Proof sketch]
  The following is a general fact about idempotent adjunctions~\cite[Section~3.8]{grandis_category_2021}: 
  if $F \dashv G$ is an idempotent adjunction with associated monad $T = G F$ and comonad $S = F G \colon \cat{A} \to \cat{A}$, then the category of algebras of $T$ is equivalent to the category of coalgebras of $S$, and the category of coalgebras of $S$ is a full coreflective subcategory of $\cat{A}$ given by the objects of $\cat{A}$ for which $\epsilon \colon S A \to A$ is invertible.

  The category of algebras for the monad $\mathfrak{U} \mathfrak{F} = \mathrm{Id}$ is equivalent to $\Freyd$, which is therefore a full coreflective subcategory of $\Sub\text{-}\Freyd$.
  Furthermore, we can characterize the objects of this subcategory as $\Sub$-Freyd categories $\vfreyd{C}$ for which to $\boldsymbol{\epsilon} \colon \mathfrak{F}\mathfrak{U}\left(\vfreyd{C}\right) \to \vfreyd{C}$ is invertible.
  Concretely, this means ${\boldsymbol{\epsilon}_{\vfreyd{C}}}_1$ must be invertible in $\Sub$.
  But the underlying $\Set$ map is the identity, establishing the claim.% and so we conclude ${\boldsymbol{\epsilon}_{\vfreyd{C}}}_1$ is only invertible when $\underline{\vfreyd{C}\left(a, b\right)} = \vfreyd{C}\left(\id, \cat{M}\left(a, b\right)\right)\left(\mathsf{idt}\left(\star\right)\right)$.
\end{proof}

\section{Abstract characterisation}\label{sec:characterisation}

Definition~\ref{def:v-freyd} is a very concrete way to specify a $\cat{V}$-Freyd category, involving a nontrivial amount of data and axioms. Yet it fits together, as we show in this subsection by giving a characterisation in the style of~\cite{jacobsheunenhasuo:arrows}.
Recall that a natural transformation between lax monoidal functors is \emph{monoidal} when it respects the coherence maps $\mu$ and $\eta$.
  Write $\cat{MonCat_{lax}}\big(\cat{C}, \cat{D}\big)$ for the category of lax monoidal functors from $\cat{C}$ to $\cat{D}$ and monoidal natural transformations between them.
% \end{definition}
% 
% For any monoidal category $(\cat{C}, \otimes, I)$, the opposite category is also a monoidal category $(\cat{C}\op, \otimes, I)$ by keeping the same structure on objects and taking the inverses of unitors and the associator.
% Given two monoidal categories $(\cat{C}_1, \otimes_1, I_1)$ and $(\cat{C}_2, \otimes_2, I_2)$, the product category $\left(\cat{C}_1 \times \cat{C}_2, \otimes, (I_1, I_2)\right)$ with $(A_1, A_2) \otimes (B_1, B_2) \coloneqq (A_1 \otimes_1 B_1, A_2 \otimes_2 B_2)$ is also a monoidal category with the structural morphisms taken componentwise.
If $\cat{A}$ and $\cat{B}$ are monoidal categories, so are $\cat{A}\op$ and $\cat{A} \times \cat{B}$, with componentwise structure.
Thus we may consider $\cat{MonCat_{lax}}\big( \cat{M}\op \times \cat{M}, \cat{V} \big)$ for the monoidal category $(\cat{V}, \mzero, \ezero)$.
We will lift the other monoidal structure $(\cat{V}, \mone, \eone)$ to $\cat{MonCat_{lax}}\big( \cat{M}\op \times \cat{M}, \cat{V} \big)$ and prove that a $\cat{V}$-Freyd category is exactly a monoid with respect to this monoidal structure, under additional assumptions on $\cat{V}$.
Most proofs are deferred to \Cref{sec:proofs:characterisation}.

\begin{definition}\label{def:cocomplete}
  A duoidal category $\cat{V}$ is a \emph{cocomplete duoidal category} if $\cat{V}$ is cocomplete and $\mzero$ and $\mone$  are cocontinuous in each argument.
  In a cocomplete duoidal category, the following diagrams and their symmetric versions commute:
  % file://///wsl.localhost/Ubuntu/home/jesse/Repositories/quiver/src/index.html?q=WzAsMyxbMCwwLCJcXGV6ZXJvIFxcbXplcm8gXFx0ZXh0e2NvbGltfShEKSJdLFsyLDAsIlxcdGV4dHtjb2xpbX0gXFxsZWZ0KCBcXGV6ZXJvIFxcbXplcm8gRCBcXHJpZ2h0KSJdLFsxLDEsIlxcdGV4dHtjb2xpbX0oRCkiXSxbMiwxLCJcXHNpbWVxIiwyXSxbMiwwLCJcXHNpbWVxIl0sWzAsMSwiXFxzaW1lcSJdXQ==&macro_url=http%3A%2F%2Flocalhost%3A8000%2Fmacros.sty
  \[\begin{tikzcd}[column sep=small, row sep=small, global scale=0.8]
    {\ezero \mzero \text{colim}(D)} &[-28pt]&[-28pt] {\text{colim} \left( \ezero \mzero D \right)} \\
    & {\text{colim}(D)}
    \arrow["\simeq"', from=2-2, to=1-3]
    \arrow["\simeq", from=2-2, to=1-1]
    \arrow["\simeq"' {yshift=3pt}, from=1-3, to=1-1]
  \end{tikzcd}
  \qquad
  % file://///wsl.localhost/Ubuntu/home/jesse/Repositories/quiver/src/index.html?q=WzAsMyxbMCwwLCJcXGVvbmUgXFxtb25lIFxcdGV4dHtjb2xpbX0oRCkiXSxbMiwwLCJcXHRleHR7Y29saW19IFxcbGVmdCggXFxlb25lIFxcbW9uZSBEIFxccmlnaHQpIl0sWzEsMSwiXFx0ZXh0e2NvbGltfShEKSJdLFsyLDEsIlxcc2ltZXEiLDJdLFsyLDAsIlxcc2ltZXEiXSxbMCwxLCJcXHNpbWVxIl1d&macro_url=http%3A%2F%2Flocalhost%3A8000%2Fmacros.sty
  \begin{tikzcd}[column sep=small, row sep=small, global scale=0.8]
    {\eone \mone \text{colim}(D)} &[-28pt]&[-28pt] {\text{colim} \left( \eone \mone D \right)} \\
    & {\text{colim}(D)}
    \arrow["\simeq"', from=2-2, to=1-3]
    \arrow["\simeq", from=2-2, to=1-1]
    \arrow["\simeq"' {yshift=3pt}, from=1-3, to=1-1]
  \end{tikzcd}\]
  where the top isomorphism is colimit preservation and the others are induced by unitors.
\end{definition}

The rest of this subsection assumes that $\cat{V}$ is a cocomplete duoidal category; importantly, this is satisfied for presheaf categories.
This restriction will be mitigated in \Cref{subsec:yoneda} for small $\cat{V}$.
We also assume that $\cat{M}$ is small.
All laxness is with respect to $(\cat{V}, \mzero, \ezero)$.
We now lift $(\cat{V}, \mone, \eone)$; first the unit, then composition.

\begin{proposition}\label{prop:enriched-hom}
  There is a lax monoidal functor $\ehom_\cat{M} \colon \cat{M}\op \times \cat{M} \to \cat{V}$ defined on objects as
  % \[
  %   \ehom_\cat{M}(a, b) = \bigsqcup_{\sigma \in \hom_\cat{M}(a, b)} \eone
  % \] 
  $\ehom_\cat{M}(a,b) = \coprod_{\sigma \in \hom_\cat{M}(a,b)} I$.
\end{proposition}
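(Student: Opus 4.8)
The plan is to present $\ehom_\cat{M}$ as a lax monoidal functor from $(\cat{M}\op \times \cat{M}, \oplus, (e,e))$, with the componentwise tensor, to $(\cat{V}, \mzero, \ezero)$, building every structure map out of coproduct injections and the $\mzero$-monoid $(\eone, \nabla, \epsilon)$ carried by any duoidal $\cat{V}$. Since $\cat{M}$ is small and $\cat{V}$ cocomplete, the coproducts $\coprod_{\sigma \in \hom_\cat{M}(a,b)} \eone$ exist, and $\ehom_\cat{M}$ is functorial by letting $\ehom_\cat{M}(f,g)$ act on the $\sigma$-summand as the injection at $g . \sigma . f$; functoriality is immediate from functoriality of pre- and post-composition in $\cat{M}$. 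The lax unit is $\eta \coloneqq \iota_{\id[e]} . \epsilon \colon \ezero \to \eone \to \ehom_\cat{M}(e,e)$, where $\iota_{\id[e]}$ is the coproduct injection at $\id[e] \in \hom_\cat{M}(e,e)$.

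For the lax multiplication I would use that $\mzero$ is cocontinuous in each argument (part of \Cref{def:cocomplete}) to obtain the canonical distributivity isomorphism
\[
  \Big(\coprod_{\sigma \in \hom_\cat{M}(a_1,b_1)} \eone\Big) \mzero \Big(\coprod_{\tau \in \hom_\cat{M}(a_2,b_2)} \eone\Big) \;\xrightarrow{\ \sim\ }\; \coprod_{\sigma,\tau}\big(\eone \mzero \eone\big),
\]
and define $\mu$ as this isomorphism, followed by $\coprod_{\sigma,\tau}\nabla$, followed by the map of coproducts induced on index sets by $(\sigma,\tau) \mapsto \sigma \oplus \tau$. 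Naturality of $\mu$ in both variables reduces to bifunctoriality of $\oplus$ on $\cat{M}$ (so that reindexing of the codomain commutes with $\mu$) together with naturality of the distributivity isomorphism.

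It then remains to verify the lax monoidal coherence axioms. Transporting everything through the distributivity isomorphisms, the associativity axiom reduces to two facts: associativity of $\nabla$ as part of the monoid $(\eone, \nabla, \epsilon)$ in $(\cat{V}, \mzero, \ezero)$, which matches the two bracketings of the triple $\eone$-summand via the associator of $\mzero$; and associativity of $\oplus$ together with $\ehom_\cat{M}(\alpha)$ being reindexing by $\alpha$, which matches the two ways of indexing by triple composites. Each unit triangle reduces similarly to a unit law of $(\eone, \nabla, \epsilon)$ and the naturality of the unitors of $\cat{M}$, the latter identifying the bijection $\sigma \mapsto \id[e] \oplus \sigma$ on index sets with the inverse of reindexing by the corresponding unitor. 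The compatibility of the distributivity isomorphisms with the unitors of $\mzero$, which licenses these reductions, is exactly the content of the diagrams in \Cref{def:cocomplete}; the analogous compatibility with the associator follows from the universal property of coproducts.

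The main obstacle is precisely this bookkeeping: threading the cocontinuity/distributivity isomorphisms through the coherence diagrams so that the verification cleanly bottoms out at the monoid axioms for $\eone$ and the monoidal coherence of $\cat{M}$. Everything else is formal, and this is also where the hypothesis that $\cat{V}$ is a \emph{cocomplete} duoidal category, rather than merely duoidal, earns its keep.
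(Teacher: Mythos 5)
Your construction is exactly the paper's: the unit is $\iota_{\id[e]} . \epsilon$, the multiplication is the cocontinuity/distributivity isomorphism followed by $\coprod \nabla$ and reindexing along $(\sigma,\tau) \mapsto \sigma \oplus \tau$, functoriality is inherited from $\hom_\cat{M}$, and the coherence diagrams reduce via cocontinuity to the monoid axioms for $(\eone, \nabla, \epsilon)$. Your write-up just spells out the bookkeeping that the paper's proof leaves implicit, so it is correct and takes essentially the same approach.
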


\begin{proposition}\label{prop:profunctor-comp}
  If $S,\,T \colon \cat{M}\op \times \cat{M} \to \cat{V}$ are lax monoidal functors, the functor $S \epcomp T \colon \cat{M}\op \times \cat{M} \to \cat{V}$ defined using coends as
  $(S \epcomp T)(a, c) = \int^{b} T(a, b) \mone S(b, c)$
  is lax monoidal.
\end{proposition}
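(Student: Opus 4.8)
The plan is to equip $S \epcomp T$ with the two lax monoidal coherence maps built out of those for $S$ and $T$, the interchanger $\zeta$, the comultiplication $\Delta$, and the coprojections of the defining coends, and then to verify the lax monoidal axioms. Throughout I use that $\cat{V}$ is cocomplete and $\cat{M}$ small, so the coends exist and $S\epcomp T$ is functorial in $a$ and $c$, and that cocontinuity of $\mzero$ in each argument lets it be pulled through coends; combined with Fubini this gives
\[
  (S\epcomp T)(a_1,c_1)\mzero(S\epcomp T)(a_2,c_2)\;\cong\;\textstyle\int^{b_1,b_2}\bigl(T(a_1,b_1)\mone S(b_1,c_1)\bigr)\mzero\bigl(T(a_2,b_2)\mone S(b_2,c_2)\bigr).
\]

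For the unit coherence I would take $\mathsf{zero}_{S\epcomp T}$ to be the composite $\ezero\xrightarrow{\Delta}\ezero\mone\ezero\xrightarrow{\mathsf{zero}_T\mone\mathsf{zero}_S}T(e,e)\mone S(e,e)\xrightarrow{\iota_e}\int^{b}T(e,b)\mone S(b,e)$, where $\iota_e$ is the coend coprojection at $b=e$. For the multiplication I would, under the double coend displayed above, first apply $\zeta$ pointwise to reach $\int^{b_1,b_2}\bigl(T(a_1,b_1)\mzero T(a_2,b_2)\bigr)\mone\bigl(S(b_1,c_1)\mzero S(b_2,c_2)\bigr)$, then apply $\mathsf{par}_T\mone\mathsf{par}_S$ to reach $\int^{b_1,b_2}T(a_1\oplus a_2,b_1\oplus b_2)\mone S(b_1\oplus b_2,c_1\oplus c_2)$, and finally use the canonical map to $\int^{b}T(a_1\oplus a_2,b)\mone S(b,c_1\oplus c_2)=(S\epcomp T)(a_1\oplus a_2,c_1\oplus c_2)$ induced by the coprojections at $b=b_1\oplus b_2$ (equivalently, induced by $\oplus\colon\cat{M}\times\cat{M}\to\cat{M}$). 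Naturality of this composite in all four arguments follows from naturality of $\zeta$, of $\mathsf{par}_S$ and $\mathsf{par}_T$, bifunctoriality of $S$ and $T$, and dinaturality of the coprojections.

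It then remains to check the three lax monoidal axioms. Unitality unfolds, using the distributivity isomorphism above and the unitor-compatibility squares of \Cref{def:cocomplete}, to the two $\Delta$-diagrams of \Cref{def:duoidal} (compatibility of $\zeta$ with $\Delta$ and with the unitors of $\mzero$) together with the unit laws relating $\mathsf{zero}$ and $\mathsf{par}$ for $S$ and $T$. The main obstacle is associativity: one must show the two bracketings of $\mathsf{par}_{S\epcomp T}$ agree as maps out of $\bigl((S\epcomp T)(a_1,c_1)\mzero(S\epcomp T)(a_2,c_2)\bigr)\mzero(S\epcomp T)(a_3,c_3)$. I would carry this out under a triple coend $\int^{b_1,b_2,b_3}$, where after pulling $\mzero$ through the coends and applying Fubini the claim reduces to a diagram in $\cat{V}$ assembled from the $\zeta$-associativity hexagon of \Cref{def:duoidal} (the one relating $\zeta$ to the associator $\alpha$ of $\mzero$), the associativity laws of $\mathsf{par}_S$ and $\mathsf{par}_T$, and functoriality of $S$ and $T$ in the middle variable to identify the two reindexings $b\mapsto(b_1\oplus b_2)\oplus b_3$ and $b\mapsto b_1\oplus(b_2\oplus b_3)$ via the associator of $\oplus$ in $\cat{M}$. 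The delicate point is checking that reassociating and reindexing commute — that the $\zeta$-hexagon is compatible with the coprojection-induced maps between the triple and single coends; once this compatibility is in hand the remaining pasting is routine. The full argument is deferred to \Cref{sec:proofs:characterisation}.
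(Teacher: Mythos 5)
Your construction coincides with the paper's: the same unit $\iota_e.(\mathsf{zero}_T\mone\mathsf{zero}_S).\Delta$, and the same multiplication obtained by pulling $\mzero$ through the coends via cocontinuity, applying $\zeta$, then the multiplications of $T$ and $S$, then reindexing along $\oplus$; the coherence checks you outline are exactly what the paper packages into \Cref{lem:conc-str-comp-id,lem:conc-str-comp-ass} together with cocontinuity. So the proposal is correct and takes essentially the same route, differing only in that it inlines the verification the paper defers to those lemmas.
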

% \begin{proof}
%   See \Cref{sec:proofs}.
% \end{proof}

% The $\mone$-composition and $\ehom_\cat{M}$ form monoidal structure.

\begin{proposition}\label{prop:moncatlax-monoidal}
  $\big( \cat{MonCat_{lax}}( \cat{M}\op \times \cat{M}, \cat{V} ), \epcomp, \ehom_{\cat{M}} \big)$ is a monoidal category.
\end{proposition}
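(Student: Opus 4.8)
The plan is to obtain the monoidal structure by lifting, along the faithful forgetful functor $U\colon\cat{MonCat_{lax}}\big(\cat{M}\op\times\cat{M},\cat{V}\big)\to\big[\cat{M}\op\times\cat{M},\cat{V}\big]$, the standard monoidal structure of coend composition of $\cat{V}$-valued profunctors on $\cat{M}$. First I would record that $\big(\big[\cat{M}\op\times\cat{M},\cat{V}\big],\epcomp,\ehom_{\cat{M}}\big)$ is monoidal. The associator $(R\epcomp S)\epcomp T\cong R\epcomp(S\epcomp T)$ is assembled from cocontinuity of $\mone$ in each argument (available since $\cat{V}$ is a cocomplete duoidal category), the Fubini theorem for coends, and the associator of $\mone$; the unitors $\ehom_{\cat{M}}\epcomp T\cong T\cong T\epcomp\ehom_{\cat{M}}$ follow by writing $\ehom_{\cat{M}}(a,b)=\coprod_{\sigma\in\cat{M}(a,b)}\eone$, pulling the coproduct through $\mone$ by cocontinuity, cancelling $\eone$ via the unitors of $\mone$, and applying the density (co-Yoneda) formula $\int^{b}\coprod_{\sigma\in\cat{M}(a,b)}T(b,c)\cong T(a,c)$; the pentagon and triangle then reduce, by a routine coend chase, to those for $\mone$ (this is just the one-object instance of the bicategory of $\cat{V}$-valued bimodules over $\cat{M}$).

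Next I would transfer this structure to $\cat{MonCat_{lax}}$. By \Cref{prop:enriched-hom} and \Cref{prop:profunctor-comp}, $\ehom_{\cat{M}}$ and $S\epcomp T$ are again lax monoidal and their underlying functors are exactly the unit and the coend composition above, so $\epcomp$ and $\ehom_{\cat{M}}$ lie over $U$ on objects. On morphisms, if $\phi\colon S\Rightarrow S'$ and $\psi\colon T\Rightarrow T'$ are monoidal natural transformations, the map induced on coends by $\psi_{a,b}\mone\phi_{b,c}$ is again monoidal: unfolding the lax structure maps of the composite from \Cref{prop:profunctor-comp} — which are built from natural cocontinuity isomorphisms, the interchange $\zeta$, the coend coprojections along $\oplus\colon\cat{M}\times\cat{M}\to\cat{M}$, and the maps $\mu_{T}\mone\mu_{S}$ and $\eta_{T}\mone\eta_{S}$ — one sees each piece commutes with $\phi$ and $\psi$ precisely because $\phi$ and $\psi$ respect $\mu$ and $\eta$. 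Hence $\epcomp$ is a functor into $\cat{MonCat_{lax}}$ (functoriality itself is automatic from faithfulness of $U$). The remaining monoidal-category axioms are then automatic: once $\alpha$, $\lambda$, $\rho$ are known to be \emph{monoidal} natural transformations of the relevant composite lax monoidal functors, their naturality and the pentagon and triangle identities hold in $\cat{MonCat_{lax}}$ because they hold for their $U$-images and $U$ is faithful.

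The only real work, and the main obstacle, is checking that $\alpha$, $\lambda$, $\rho$ are monoidal natural transformations; this is where both monoidal structures of $\cat{V}$ and their interaction are needed simultaneously rather than separately. For $\alpha$ one compares the two assemblies of $\mu_{(R\epcomp S)\epcomp T}$ and $\mu_{R\epcomp(S\epcomp T)}$ coming from \Cref{prop:profunctor-comp}; after using naturality of $\zeta$, interchange of coends, and cocontinuity of $\mzero$, this comes down to the ``associativity of $\zeta$'' hexagons of \Cref{def:duoidal}. For $\lambda$ and $\rho$ one similarly reduces to the unit-compatibility diagrams of \Cref{def:duoidal} relating $\zeta$ to $\Delta$, $\nabla$, $\epsilon$ and the unitors of the two structures, together with the unit-versus-colimit coherence of \Cref{def:cocomplete} used to pass these diagrams under the defining coends. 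None of these diagram chases is deep, but they are the one place in the argument that genuinely uses $\cat{V}$ being simultaneously duoidal and cocomplete; everything else is formal coend calculus plus the general principle that a monoidal structure lifts along a faithful functor whose tensor, unit, and coherence isomorphisms lift.
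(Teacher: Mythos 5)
Your proposal is correct and takes essentially the same route as the paper: the paper likewise builds on \Cref{prop:enriched-hom,prop:profunctor-comp} and then verifies (\Cref{lem:pcomp-functorial,lem:pcomp-ehom-unit,lem:pcomp-assoc}) functoriality of $\epcomp$, the unit and associativity isomorphisms, and --- the only real work, exactly as you identify --- that these isomorphisms are \emph{monoidal} natural transformations, which is reduced via cocontinuity to duoidal-axiom computations (\Cref{lem:conc-str-mor-unr,lem:conc-str-mor-assr}), with triangle and pentagon following from the corresponding identities for $\mone$. Your explicit framing as lifting the profunctor-composition monoidal structure along the faithful forgetful functor $U$ is a tidier packaging of the same checks (and your explicit note that $\phi \epcomp \psi$ must again be monoidal is a point the paper's \Cref{lem:pcomp-functorial} treats only implicitly), not a genuinely different argument.
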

\begin{proof}
  \Cref{lem:pcomp-functorial,lem:pcomp-ehom-unit,lem:pcomp-assoc} in \Cref{sec:proofs:characterisation} show that the $\mone$-composition is functorial, associative, and has $\ehom_\cat{M}$ as left and right unit. That leaves only the triangle and pentagon identities, which follow from cocontinuity and the equivalent identities for $\mone$.
\end{proof}

With these preparations we can characterise $\cat{V}$-Freyd categories abstractly.

\begin{theorem}\label{thm:characterisation}
  Let $\cat{V}$	be a cocomplete duoidal category. Then a $\cat{V}$-Freyd category $\vfreyd{C} \colon \cat{M} \times \cat{M}\op \to \cat{V}$ is exactly a monoid in $\cat{MonCat_{lax}}( \cat{M}\op \times \cat{M}, \cat{V} )$.
\end{theorem}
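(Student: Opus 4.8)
The plan is to unwind what it means to be a monoid in $\big(\cat{MonCat_{lax}}(\cat{M}\op\times\cat{M},\cat{V}),\epcomp,\ehom_{\cat{M}}\big)$ and match every piece of structure, term by term, against the data and axioms of \Cref{def:v-freyd}. Throughout, all laxness is with respect to $(\cat{V},\mzero,\ezero)$.

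First I would identify the underlying object. A lax monoidal functor $\vfreyd{C}\colon\cat{M}\op\times\cat{M}\to\cat{V}$ is precisely a bifunctor $\vfreyd{C}$ equipped with a unit map $\ezero\to\vfreyd{C}(e,e)$ — this is $\mathsf{zero}$, since $(e,e)$ is the monoidal unit of $\cat{M}\op\times\cat{M}$ — and a laxator $\vfreyd{C}(a_1,b_1)\mzero\vfreyd{C}(a_2,b_2)\to\vfreyd{C}(a_1\oplus a_2,b_1\oplus b_2)$ — this is $\mathsf{par}$ — whose two coherence conditions are exactly \cref{ax:obj.lax.id,ax:obj.lax.ass}. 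Next I would identify the multiplication and unit as bare natural transformations. A natural transformation $m\colon\vfreyd{C}\epcomp\vfreyd{C}\to\vfreyd{C}$ has components $\int^{b}\vfreyd{C}(a,b)\mone\vfreyd{C}(b,c)\to\vfreyd{C}(a,c)$, which by the universal property of the coend are exactly extranatural families $\mathsf{seq}$, with naturality of $m$ matching naturality of $\mathsf{seq}$ in $a,c$; and a natural transformation $u\colon\ehom_{\cat{M}}\to\vfreyd{C}$ has components $\coprod_{\sigma\colon a\to b}\eone\to\vfreyd{C}(a,b)$, which by the universal property of the coproduct are families $(u_{a,b})_\sigma\colon\eone\to\vfreyd{C}(a,b)$; naturality of $u$ in $(a,b)$ forces $(u_{a,b})_\sigma=\vfreyd{C}(\id,\sigma)\circ\mathsf{idt}_a=\vfreyd{C}(\sigma,\id)\circ\mathsf{idt}_b$ with $\mathsf{idt}_a\coloneqq(u_{a,a})_{\id}$, so $u$ is precisely an extranatural family $\mathsf{idt}$, and conversely such an $\mathsf{idt}$ determines a well-defined natural $u$ by this formula.

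Then I would turn to the four monoid conditions. Using the explicit unitors and associator of $\epcomp$ from \Cref{prop:moncatlax-monoidal} — which are built from those of $\mone$ together with the coend and co-Yoneda identification $\int^{b}\ehom_{\cat{M}}(a,b)\mone\vfreyd{C}(b,c)\cong\int^{b}\coprod_{\sigma\colon a\to b}\vfreyd{C}(b,c)\cong\vfreyd{C}(a,c)$ — the two unit laws of the monoid reduce on the nose to \cref{ax:mon.mor.id} and associativity of the monoid reduces to \cref{ax:mon.mor.ass}. For monoidality of $m$ and $u$: the lax structure on $\vfreyd{C}\epcomp\vfreyd{C}$ from \Cref{prop:profunctor-comp} has unit built from $\mathsf{zero}$ and the comonoid comultiplication $\Delta\colon\ezero\to\ezero\mone\ezero$ and laxator built from $\mathsf{par}$ and the interchange $\zeta$, so requiring $m$ to be a monoidal natural transformation yields exactly \cref{ax:mon.mor.m.eta,ax:mon.mor.m.mu}; dually the lax structure on $\ehom_{\cat{M}}$ from \Cref{prop:enriched-hom} has unit $\epsilon\colon\ezero\to\eone$ into the identity summand and laxator using $\nabla\colon\eone\mzero\eone\to\eone$, so requiring $u$ to be monoidal yields exactly \cref{ax:mon.mor.e.eta,ax:mon.mor.e.mu}. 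Assembling these identifications gives a bijection between monoids in $\cat{MonCat_{lax}}(\cat{M}\op\times\cat{M},\cat{V})$ and $\cat{V}$-Freyd categories over $\cat{M}$ (and, tracking maps the same way with $F_0$ held fixed, a matching of morphisms).

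I expect the main obstacle to be the bookkeeping in the third paragraph: one must chase the coherence isomorphisms of $\epcomp$ and the lax structures on $\vfreyd{C}\epcomp\vfreyd{C}$ and $\ehom_{\cat{M}}$ through the coend and coproduct calculus carefully enough to see that the abstract monoid conditions become precisely the stated axioms — in particular to verify that the occurrences of $\Delta$, $\nabla$, $\epsilon$, and $\zeta$ land exactly where \cref{ax:mon.mor.e.eta,ax:mon.mor.e.mu,ax:mon.mor.m.eta,ax:mon.mor.m.mu} put them, and that the unitors of $\epcomp$ really collapse, via co-Yoneda, to the unitors appearing in \cref{ax:mon.mor.id}. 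The cocompleteness hypothesis of \Cref{def:cocomplete} is used precisely here, both to form the defining coends and to commute $\mone$ and $\mzero$ past them.
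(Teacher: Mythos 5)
Your proposal is correct and follows essentially the same route as the paper's own proof: the lax monoidal structure of $\vfreyd{C}$ supplies $\mathsf{zero}$ and $\mathsf{par}$ with \cref{ax:obj.lax.id,ax:obj.lax.ass}, the monoid unit and multiplication (via the coproduct and coend universal properties) supply $\mathsf{idt}$ and $\mathsf{seq}$ with \cref{ax:mon.mor.id,ax:mon.mor.ass}, and monoidality of these transformations yields \cref{ax:mon.mor.e.eta,ax:mon.mor.e.mu,ax:mon.mor.m.eta,ax:mon.mor.m.mu}. You simply spell out more of the coend/co-Yoneda bookkeeping that the paper delegates to its appendix lemmas (\Cref{prop:enriched-hom,prop:profunctor-comp,prop:moncatlax-monoidal}).
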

\begin{proof}[Proof sketch]
  A monoid $\vfreyd{C}$ in $\cat{MonCat_{lax}}( \cat{M}\op \times \cat{M}, \cat{V} )$ consists of two maps $e \colon \ehom_{\cat{M}} \to \vfreyd{C}$ and $m \colon \vfreyd{C} \epcomp \vfreyd{C} \to \vfreyd{C}$, inducing $\mathsf{idt}$ and $\mathsf{seq}$ satisfying unit and associativity conditions.
  The lax monoidal structure of $\vfreyd{C}$ gives $\mathsf{zero}$ and $\mathsf{par}$ respectively, so identity and associativity conditions follow.
  Finally, the components of $e$ and $m$ are monoidal natural transformations, ensuring that $\mathsf{idt}$ and $\mathsf{seq}$ respect $\mathsf{zero}$ and $\mathsf{par}$.
  % See Appendix~\ref{sec:proofs} for more details.
\end{proof}
We note that by Fujii's observations \cite{fujii:unified}, PROs and PROPs are equivalent to $\Set$-Freyd categories over $\cat{N}$ and $\cat{P}$ respectively because $(\Set, \times, \times)$ is a cocomplete duoidal category.

\section{Change of enrichment}\label{sec:changeofenrichment}

After defining enriched categories, a natural next step is to consider a change of enrichment. Any monoidal functor $\cat{V} \to \cat{W}$ induces a functor $\cat{V}\text{-}\cat{Cat} \to \cat{W}\text{-}\cat{Cat}$. We will show that the same holds for the appropriate type of functors between duoidal categories and enriched Freyd categories (in \Cref{subsec:lifting}). We will then use that to alleviate the restriction of duoidal cocompleteness on the abstract characterisation of \Cref{sec:characterisation} (in \Cref{subsec:yoneda}) at the cost of losing a direction of the correspondence. Finally, changing enrichment along a forgetful functor gives an underlying (unenriched) Freyd category $J \colon \cat{M} \to \cat{C}$ with $\cat{C}$ monoidal, which we show recovers the pure computations in the examples of \Cref{sec:examples} (in \Cref{subsec:forgetful}).

\subsection{Lifting duoidal functors}\label{subsec:lifting}

To talk about change of enrichment, we first need to define the appropriate type of functor between the enriching categories along which to change. 

\begin{definition}\label{def:double-lax}\cite[Definition~6.54]{aguiar2010monoidal}
  Take duoidal categories $\left(\cat{V}, \mzero_\cat{V}, \ezero_\cat{V}, \mone_\cat{V}, \eone_\cat{V}\right)$ and $\left(\cat{W}, \mzero_\cat{W}, \ezero_\cat{W}, \mone_\cat{W}, \eone_\cat{W}\right)$.
  A functor $F \colon \cat{V} \!\to\! \cat{W}$ is a \emph{double lax monoidal functor} when equipped with $\eta_\mzero$, $\mu_\mzero$, $\eta_\mone$, and $\mu_\mone$ such that $\left(F, \eta_\mzero, \mu_\mzero\right)$ is lax monoidal for $\mzero_\cat{V}$ and $\mzero_\cat{W}$, $\left(F, \eta_\mone, \mu_\mone\right)$ is lax monoidal for $\mone_\cat{V}$ and $\mone_\cat{W}$, and the following diagrams commute:
  % file://///wsl.localhost/Ubuntu/home/jesse/Repositories/quiver/src/index.html?q=WzAsNixbMCwwLCJcXGxlZnQoIEYoQSkgXFxtb25lX3tcXGNhdHtXfX0gRihCKSBcXHJpZ2h0KSBcXG16ZXJvX3tcXGNhdHtXfX0gXFxsZWZ0KCBGKEMpIFxcbW9uZV97XFxjYXR7V319IEYoRClcXHJpZ2h0KSJdLFsxLDAsIlxcbGVmdCggRihBKSBcXG16ZXJvX3tcXGNhdHtXfX0gRihDKSBcXHJpZ2h0KSBcXG1vbmVfe1xcY2F0e1d9fSBcXGxlZnQoIEYoQikgXFxtemVyb197XFxjYXR7V319IEYoRClcXHJpZ2h0KSJdLFswLDEsIkZcXGxlZnQoIEEgXFxtb25lX3tcXGNhdHtWfX0gQiBcXHJpZ2h0KSBcXG16ZXJvX3tcXGNhdHtXfX0gRlxcbGVmdCggQyBcXG1vbmVfe1xcY2F0e1Z9fSBEXFxyaWdodCkiXSxbMCwyLCJGXFxsZWZ0KFxcbGVmdCggQSBcXG1vbmVfe1xcY2F0e1Z9fSBCIFxccmlnaHQpIFxcbXplcm9fe1xcY2F0e1Z9fSBcXGxlZnQoIEMgXFxtb25lX3tcXGNhdHtWfX0gRFxccmlnaHQpXFxyaWdodCkiXSxbMSwxLCJGXFxsZWZ0KCBBIFxcbXplcm9fe1xcY2F0e1Z9fSBDIFxccmlnaHQpIFxcbW9uZV97XFxjYXR7V319IEZcXGxlZnQoIEIgXFxtemVyb197XFxjYXR7Vn19IERcXHJpZ2h0KSJdLFsxLDIsIkZcXGxlZnQoXFxsZWZ0KCBBIFxcbXplcm9fe1xcY2F0e1Z9fSBDIFxccmlnaHQpIFxcbW9uZV97XFxjYXR7Vn19IFxcbGVmdCggQiBcXG16ZXJvX3tcXGNhdHtWfX0gRFxccmlnaHQpXFxyaWdodCkiXSxbMCwyLCJcXG11X3tcXG1vbmV9IFxcbXplcm8gXFxtdV97XFxtb25lfSIsMl0sWzIsMywiXFxtdV97XFxtemVyb30iLDJdLFswLDEsIlxcemV0YV97XFxjYXR7V319Il0sWzEsNCwiXFxtdV97XFxtemVyb30gXFxtb25lIFxcbXVfe1xcbXplcm99Il0sWzQsNSwiXFxtdV97XFxtb25lfSJdLFszLDUsIkZcXHpldGFfe1xcY2F0e1Z9fSIsMl1d&macro_url=http%3A%2F%2Flocalhost%3A8000%2Fmacros.sty
  \[\begin{tikzcd}[column sep=tiny, global scale=0.8]
	{\left( F(A) \mone_{\cat{W}} F(B) \right) \mzero_{\cat{W}} \left( F(C) \mone_{\cat{W}} F(D)\right)} & {\left( F(A) \mzero_{\cat{W}} F(C) \right) \mone_{\cat{W}} \left( F(B) \mzero_{\cat{W}} F(D)\right)} \\
	{F\left( A \mone_{\cat{V}} B \right) \mzero_{\cat{W}} F\left( C \mone_{\cat{V}} D\right)} & {F\left( A \mzero_{\cat{V}} C \right) \mone_{\cat{W}} F\left( B \mzero_{\cat{V}} D\right)} \\
	{F\left(\left( A \mone_{\cat{V}} B \right) \mzero_{\cat{V}} \left( C \mone_{\cat{V}} D\right)\right)} & {F\left(\left( A \mzero_{\cat{V}} C \right) \mone_{\cat{V}} \left( B \mzero_{\cat{V}} D\right)\right)}
	\arrow["{\mu_{\mone} \mzero \mu_{\mone}}"', from=1-1, to=2-1]
	\arrow["{\mu_{\mzero}}"', from=2-1, to=3-1]
	\arrow["{\zeta}" {yshift=3pt}, from=1-1, to=1-2]
	\arrow["{\mu_{\mzero} \mone \mu_{\mzero}}", from=1-2, to=2-2]
	\arrow["{\mu_{\mone}}", from=2-2, to=3-2]
	\arrow["{F\zeta}"', from=3-1, to=3-2]
  \end{tikzcd}\hspace{-9pt}
  % file://///wsl.localhost/Ubuntu/home/jesse/Repositories/quiver/src/index.html?q=WzAsNCxbMCwwLCJGKFxcZXplcm9fe1xcY2F0e1Z9fSkiXSxbMSwwLCJGKFxcZW9uZV97XFxjYXR7Vn19KSJdLFswLDEsIlxcZXplcm9fe1xcY2F0e1d9fSJdLFsxLDEsIlxcZW9uZV97XFxjYXR7V319Il0sWzAsMSwiRlxcZXBzaWxvbiJdLFsyLDAsIlxcZXRhX3tcXG16ZXJvfSJdLFsyLDMsIlxcZXBzaWxvbiIsMl0sWzMsMSwiXFxldGFfe1xcbW9uZX0iLDJdXQ==&macro_url=http%3A%2F%2Flocalhost%3A8000%2Fmacros.sty
  \begin{tikzcd}[column sep=tiny, global scale=0.8]
	{F(\ezero_{\cat{V}})} & {F(\eone_{\cat{V}})} \\
	{\ezero_{\cat{W}}} & {\eone_{\cat{W}}}
	\arrow["F\epsilon" {yshift=3pt}, from=1-1, to=1-2]
	\arrow["{\eta_{\mzero}}", from=2-1, to=1-1]
	\arrow["\epsilon"', from=2-1, to=2-2]
	\arrow["{\eta_{\mone}}"', from=2-2, to=1-2]
 \end{tikzcd}\]
 % file://///wsl.localhost/Ubuntu/home/jesse/Repositories/quiver/src/index.html?q=WzAsNSxbMCwwLCJcXGV6ZXJvX3tcXGNhdHtXfX0iXSxbMCwxLCJcXGV6ZXJvX3tcXGNhdHtXfX0gXFxtb25lX3tcXGNhdHtXfX0gXFxlemVyb197XFxjYXR7V319Il0sWzIsMSwiRihcXGV6ZXJvX3tcXGNhdHtWfX0pIFxcbW9uZV97XFxjYXR7V319IEYoXFxlemVyb197XFxjYXR7Vn19KSJdLFsxLDAsIkYoXFxlemVyb197XFxjYXR7Vn19KSJdLFsyLDAsIkZcXGxlZnQoIFxcZXplcm9fe1xcY2F0e1Z9fSBcXG1vbmVfe1xcY2F0e1Z9fSBcXGV6ZXJvX3tcXGNhdHtWfX0gXFxyaWdodCkiXSxbMCwzLCJcXGV0YV97XFxtemVyb30iXSxbMyw0LCJGXFxEZWx0YSJdLFswLDEsIlxcRGVsdGEiLDJdLFsxLDIsIlxcZXRhX3tcXG16ZXJvfSBcXG1vbmUgXFxldGFfe1xcbXplcm99IiwyXSxbMiw0LCJcXG11X3tcXG1vbmV9IiwyXV0=&macro_url=http%3A%2F%2Flocalhost%3A8000%2Fmacros.sty
 \[\begin{tikzcd}[column sep=small, global scale=0.8]
	{\ezero_{\cat{W}}} & {F(\ezero_{\cat{V}})} & {F\left( \ezero_{\cat{V}} \mone_{\cat{V}} \ezero_{\cat{V}} \right)} \\
	{\ezero_{\cat{W}} \mone_{\cat{W}} \ezero_{\cat{W}}} && {F(\ezero_{\cat{V}}) \mone_{\cat{W}} F(\ezero_{\cat{V}})}
	\arrow["{\eta_{\mzero}}", from=1-1, to=1-2]
	\arrow["F\Delta", from=1-2, to=1-3]
	\arrow["\Delta"', from=1-1, to=2-1]
	\arrow["{\eta_{\mzero} \mone \eta_{\mzero}}"', from=2-1, to=2-3]
	\arrow["{\mu_{\mone}}"', from=2-3, to=1-3]
 \end{tikzcd}
 % file://///wsl.localhost/Ubuntu/home/jesse/Repositories/quiver/src/index.html?q=WzAsNSxbMCwwLCJcXGVvbmVfe1xcY2F0e1d9fSJdLFswLDEsIlxcZW9uZV97XFxjYXR7V319IFxcbXplcm9fe1xcY2F0e1d9fSBcXGVvbmVfe1xcY2F0e1d9fSJdLFsyLDEsIkYoXFxlb25lX3tcXGNhdHtWfX0pIFxcbXplcm9fe1xcY2F0e1d9fSBGKFxcZW9uZV97XFxjYXR7Vn19KSJdLFsxLDAsIkYoXFxlb25lX3tcXGNhdHtWfX0pIl0sWzIsMCwiRlxcbGVmdCggXFxlb25lX3tcXGNhdHtWfX0gXFxtemVyb197XFxjYXR7Vn19IFxcZW9uZV97XFxjYXR7Vn19IFxccmlnaHQpIl0sWzAsMywiXFxldGFfe1xcbW9uZX0iXSxbNCwzLCJGXFxuYWJsYSIsMl0sWzEsMCwiXFxuYWJsYSJdLFsxLDIsIlxcZXRhX3tcXG1vbmV9IFxcbXplcm8gXFxldGFfe1xcbW9uZX0iLDJdLFsyLDQsIlxcbXVfe1xcbXplcm99IiwyXV0=&macro_url=http%3A%2F%2Flocalhost%3A8000%2Fmacros.sty
 \begin{tikzcd}[column sep=small, global scale=0.8]
	{\eone_{\cat{W}}} & {F(\eone_{\cat{V}})} & {F\left( \eone_{\cat{V}} \mzero_{\cat{V}} \eone_{\cat{V}} \right)} \\
	{\eone_{\cat{W}} \mzero_{\cat{W}} \eone_{\cat{W}}} && {F(\eone_{\cat{V}}) \mzero_{\cat{W}} F(\eone_{\cat{V}})}
	\arrow["{\eta_{\mone}}", from=1-1, to=1-2]
	\arrow["F\nabla"', from=1-3, to=1-2]
	\arrow["\nabla", from=2-1, to=1-1]
	\arrow["{\eta_{\mone} \mzero \eta_{\mone}}"', from=2-1, to=2-3]
	\arrow["{\mu_{\mzero}}"', from=2-3, to=1-3]
  \end{tikzcd}\]
\end{definition}

Here now is the change-of-enrichment theorem for duoidally enriched Freyd categories.

\begin{theorem}\label{thm:double-lax-change}
  Let $F \colon \cat{V} \to \cat{W}$ be a double lax monoidal functor.
  For a $\cat{V}$-Freyd category $\vfreyd{C} \colon \cat{M}\op \times \cat{M} \to \cat{V}$, define $\overline{F}(\vfreyd{C})(a, b) \coloneqq F(\vfreyd{C}(a, b))$ with structure maps $\mathsf{idt}_F \coloneqq F \mathsf{idt} . \eta_{\mone}$, $\mathsf{seq}_F \coloneqq F \mathsf{seq} . \mu_{\mone}$, $\mathsf{zero}_F \coloneqq F \mathsf{zero} . \eta_{\mzero}$, and $\mathsf{par}_F \coloneqq F \mathsf{par} . \mu_{\mzero}$.
  For a map $G = (G_0, G_1) \colon \vfreyd{C} \to \vfreyd{C}'$, define $\overline{F}(G) \coloneqq (G_0, F G_1)$.
  This $\overline{F}$ is a functor $\cat{V}\text{-}\Freyd \to \cat{W}\text{-}\Freyd$.
\end{theorem}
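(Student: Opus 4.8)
The plan is to check three things in turn: that $\overline{F}(\vfreyd{C})$ is a $\cat{W}$-Freyd category over $\cat{M}$, that $\overline{F}(G)$ is a morphism of $\cat{W}$-Freyd categories, and that $\overline{F}$ preserves identities and composition. I would start with the first by noting that the structure maps have the stated types by construction, since $\overline{F}(\vfreyd{C})(a,b) = F(\vfreyd{C}(a,b))$ and $F$ carries the units and products of $\cat{V}$ to those of $\cat{W}$ laxly via $\eta_{\mzero},\mu_{\mzero},\eta_{\mone},\mu_{\mone}$; bifunctoriality of $\overline{F}(\vfreyd{C}) = F \circ \vfreyd{C}$ is immediate, and the (extra)naturality of $\mathsf{idt}_F,\mathsf{seq}_F,\mathsf{par}_F$ follows from that of $\mathsf{idt},\mathsf{seq},\mathsf{par}$ by applying $F$ and using naturality of $\mu_{\mone}$ and $\mu_{\mzero}$ in each argument.

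For the eight axioms I would exploit a conceptual organisation. \Cref{ax:obj.lax.id,ax:obj.lax.ass} say exactly that $(\mathsf{zero},\mathsf{par})$ make $\vfreyd{C}$ a lax monoidal functor from $\cat{M}\op \times \cat{M}$ (with the pointwise monoidal structure built from $\oplus$) to $(\cat{V},\mzero,\ezero)$; since $(F,\eta_{\mzero},\mu_{\mzero})$ is lax monoidal for $\mzero$ and lax monoidal functors compose, $\overline{F}(\vfreyd{C})$ with $(\mathsf{zero}_F,\mathsf{par}_F)$ is again lax monoidal, which is precisely \cref{ax:obj.lax.id,ax:obj.lax.ass}. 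Dually, \cref{ax:mon.mor.id,ax:mon.mor.ass} present $(\mathsf{idt},\mathsf{seq})$ as a monoid-like (enriched-category) structure for $\mone$, and the standard fact that lax monoidal functors preserve such structures, applied to $(F,\eta_{\mone},\mu_{\mone})$, yields \cref{ax:mon.mor.id,ax:mon.mor.ass}.

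The remaining axioms \cref{ax:mon.mor.e.eta,ax:mon.mor.e.mu,ax:mon.mor.m.eta,ax:mon.mor.m.mu} are where the double-lax coherences really enter, and I expect these — the last most of all — to be the main obstacle, though each is only a bounded diagram chase. Each follows from the corresponding axiom for $\vfreyd{C}$ together with exactly one of the four coherence diagrams of \Cref{def:double-lax} and naturality of $\mu_{\mzero}$ or $\mu_{\mone}$: \cref{ax:mon.mor.e.eta} from the square relating $\epsilon$, $\eta_{\mzero}$, $\eta_{\mone}$; \cref{ax:mon.mor.e.mu} from the diagram relating $\nabla$; \cref{ax:mon.mor.m.eta} from the diagram relating $\Delta$; and \cref{ax:mon.mor.m.mu} from the pentagon-shaped diagram relating $\zeta$ to $\mu_{\mone} \mzero \mu_{\mone}$ and $\mu_{\mzero} \mone \mu_{\mzero}$. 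As a representative case, for \cref{ax:mon.mor.e.mu} one rewrites $\mathsf{idt}_F . \nabla = F\mathsf{idt} . \eta_{\mone} . \nabla$ via the $\nabla$-diagram as $F\mathsf{idt} . F\nabla . \mu_{\mzero} . (\eta_{\mone} \mzero \eta_{\mone})$, applies \cref{ax:mon.mor.e.mu} for $\vfreyd{C}$ under $F$, and slides $F(\mathsf{idt} \mzero \mathsf{idt})$ past $\mu_{\mzero}$ by naturality to reach $\mathsf{par}_F . (\mathsf{idt}_F \mzero \mathsf{idt}_F)$; the other three proceed the same way, with \cref{ax:mon.mor.m.mu} demanding the most careful tracking of interchange maps.

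Finally, for $\overline{F}(G) = (G_0, F G_1)$: naturality of $F G_1$ is immediate, and the three conditions of \Cref{def:v-freydmorphism} follow from those for $G$ by applying $F$, using functoriality of $F$ and naturality of $\mu_{\mone}$ and $\mu_{\mzero}$, and noting that $\overline{F}(\vfreyd{C}')$ acts on the coherence maps of $G_0$ simply by $F$. Functoriality of $\overline{F}$ is then trivial: $\overline{F}(\id) = (\id, F\,\id) = \id$ and $\overline{F}(G' . G) = (G'_0 G_0, F(G'_1 . G_1)) = (G'_0 G_0, F G'_1 . F G_1) = \overline{F}(G') . \overline{F}(G)$.
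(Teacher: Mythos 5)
Your proposal is correct and follows essentially the same route as the paper's proof: axioms (i)--(iv) via the lax monoidal structure of $F$ (as in preservation of monoids), axioms (v)--(viii) each by one coherence diagram of \Cref{def:double-lax} plus naturality of $\mu_{\mzero}$ or $\mu_{\mone}$ (your worked case of \cref{ax:mon.mor.e.mu} is verbatim the paper's computation), and morphisms and functoriality by applying $F$ and its functoriality. No substantive differences to report.
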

\begin{proof}
  See \Cref{prf:double-lax-change}.
\end{proof}

\begin{example}\label{ex:label-change-enrich}
  Let $M$ and $N$ be separated monoids and $\phi \colon M \to N$ a homomorphism such that $\phi(m) \mathop{\Vert} \phi(m')$ implies $m \mathop{\Vert} m'$.
  Then $\phi$ induces a double lax monoidal functor $\phi_* \colon \Lab_{M} \to \Lab_{N}$ given by $\ell \mapsto \phi . \ell$ on objects and $f \mapsto f$ on morphisms.
  The maps $\eta_\mzero$, $\mu_\mzero$, and $\eta_\mone$ are all identities, while $\mu_\mone \colon \{(a,a') \mid \phi.\ell(a) \Vert \phi.\ell'(a') \} \to \{(a,a') \mid \ell(a) \Vert \ell'(a') \}$ is the inclusion, and so $\phi_*$ is clearly double lax monoidal.
  Apply \Cref{thm:double-lax-change} to the example from \Cref{sub:state-monoid} along the map $\mathcal{P}_f(!) \colon \mathcal{P}_f(R) \to \mathcal{P}_f(1)$, which is a homomorphism such that $\mathcal{P}_f(!)(P) \cap \mathcal{P}_f(!)(Q) = \emptyset$ implies $P \cap Q = \emptyset$.
  We get $\mathcal{P}_f(!)_*(\vfreyd{C})(a, b) = \sum_{Q \in \mathcal{P}_f(R)} \left(\Set(a \times \Pi_Q, b \times \Pi_Q )\right) \to \mathcal{P}_f(1)$, $(Q, f) \mapsto \emptyset \text{ if } Q = \emptyset \text{, else }1$.
  This change of enrichment alters the example to only allowing maps to be put in parallel if at least one of them requires no resources.
\end{example}

\begin{example}\label{ex:state-change-enrich}
  We can use change of enrichment for the indexed state example of \Cref{subsec:indexed-state}.
  Consider \Cref{ex:duoidal-profunctor} for $(\Set, \times, 1)$ (using universes for this example to avoid size issues). There, the definition of Day convolution $\times\Day$ simplifies to $(P \times\Day Q)(a, b) = \int^{b_2, b_2} \Set(b_1 \times b_2, b) \times P(a, b_1) \times Q(a, b_2)$ and its unit becomes $k(a, b) = b$.
  The Kleisli construction turns a finitary endofunctor on $\Set$ into a profunctor as follows. 
  Define $\Kl \colon \left[\Set, \Set\right]_f \to \Prof(\Set)$ by $\Kl(F)(a, b) = \Set(a, Fb)$, and coherence maps:
  \begin{align*}
    \eta_{\mzero} \colon k &\to \Kl(\mathrm{Id})
    & \mu_{\mzero} \colon \Kl(F_1) \times\Day \Kl(F_2) &\to \Kl(F_1 \times\Day F_2) \\
    b &\mapsto \mathsf{cst}_b
    & (k, f_1, f_2) &\mapsto \lambda a. (k, f_1(a), f_2(a)) \\[10pt]
    \eta_{\mone} \colon \hom &\to \Kl(\mathrm{Id})
    & \mu_{\mone} \colon \Kl(F) \diamond \Kl(G) &\to \Kl(F \circ G) \\
    f &\mapsto f
    & (f, g) &\mapsto F g . f
  \end{align*}
  This makes $\Kl$ a double lax monoidal functor.
  \Cref{thm:double-lax-change} then gives a $\Prof(\Set)$-Freyd category defined by $\Kl(\vfreyd{C})(a, b)(x, y) \coloneqq \Set(x, \left(b \times y\right)^a)$.
\end{example}

\subsection{Yoneda embedding}\label{subsec:yoneda}

The Yoneda embedding of a small monoidal category is a strong monoidal functor with respect to Day convolution.
This extends to small duoidal categories.

\begin{proposition}\label{prop:yoneda}
  The Yoneda embedding $\cat{V} \!\to\! [\cat{V}\op, \Set]$ is a double lax monoidal functor from small $\left(\cat{V}, \mzero, \ezero, \mone, \eone\right)$ to $\big([\cat{V}\op, \Set], \mzero\Day, \cat{V}(-, \ezero), \mone\Day, \cat{V}(-, \eone)\big)$.
\end{proposition}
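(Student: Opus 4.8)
The plan is to reduce the claim to the classical fact that the Yoneda embedding of a small monoidal category is strong monoidal for Day convolution \cite{day:convolution}, applied to each of the two monoidal structures of $\cat{V}$ separately, and then to observe that the two resulting structures are compatible because the duoidal structure on $[\cat{V}\op, \Set]$ of \Cref{ex:duoidal-presheaf} is, by construction, the Day-convolution image of that on $\cat{V}$.

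First I would record the two lax (in fact strong) monoidal structures. For a monoidal structure $(\cat{V}, \otimes, e)$ the Day unit is the representable $\cat{V}(-, e) = \yo(e)$, so the unit comparison of $\yo$ is the identity, while the multiplication comparison $\mu \colon \yo(A) \otimes\Day \yo(B) \to \yo(A \otimes B)$ is the canonical isomorphism collapsing the coend $\int^{X,Y} \cat{V}(-, X \otimes Y) \times \cat{V}(X, A) \times \cat{V}(Y, B)$ by two applications of the co-Yoneda lemma. Instantiating at $(\cat{V}, \mzero, \ezero)$ and $(\cat{V}, \mone, \eone)$ supplies $\eta_\mzero = \id$, $\eta_\mone = \id$, and the isomorphisms $\mu_\mzero, \mu_\mone$, so that $\yo$ is lax (indeed strong) monoidal for each of $\mzero\Day$ and $\mone\Day$; this discharges the first two requirements of \Cref{def:double-lax}.

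For the four coherence squares I would use that the structure maps $\zeta\Day$, $\Delta\Day$, $\nabla\Day$, $\epsilon\Day$ of \Cref{ex:duoidal-presheaf} are the Day-convolution extensions of $\zeta$, $\Delta$, $\nabla$, $\epsilon$: chasing the co-Yoneda isomorphisms $\mu_\mzero$ and $\mu_\mone$ shows that each of them agrees, modulo these comparison isomorphisms, with $\yo$ of the corresponding map of $\cat{V}$. Since every Day convolution occurring in \Cref{def:double-lax} is applied only to objects in the image of $\yo$, each coherence square is exactly the assertion that $\zeta\Day$ (respectively $\Delta\Day$, $\nabla\Day$, $\epsilon\Day$) has this description, and so all four commute by construction. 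The same chase shows the comparison maps remain isomorphisms throughout, so $\yo$ is in fact a \emph{strong} double monoidal functor, which is more than claimed.

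The main obstacle is the bookkeeping behind the phrase ``agrees, modulo the comparison isomorphisms, with $\yo$ of the structure map'', that is, making precise the ``straightforward calculation'' of \Cref{ex:duoidal-presheaf}: one expands an iterated coend such as $(\yo A \mone\Day \yo B) \mzero\Day (\yo C \mone\Day \yo D)$, simplifies the inner coends by co-Yoneda, and identifies the induced comparison with $\yo\zeta$, and likewise for $\Delta$, $\nabla$, $\epsilon$. This is routine but a little lengthy; alternatively, and more cleanly, one may invoke the universal property of $[\cat{V}\op, \Set]$ as the free cocompletion --- both legs of each coherence square are natural transformations between functors cocontinuous in each variable that agree on representables, hence agree everywhere --- which avoids any explicit coend manipulation. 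I would present the latter argument.
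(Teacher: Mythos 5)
Your proposal is correct and follows essentially the same route as the paper: the paper likewise cites the classical fact (Im--Kelly/Day) that the Yoneda embedding is (strong, hence lax) monoidal for each Day-convolution structure separately, and then notes that the four coherence diagrams of \Cref{def:double-lax} are verified straightforwardly, which is exactly the representable-level check you sketch. Your extra discussion of how to organise that check (co-Yoneda collapse of the coends, or density of representables plus cocontinuity in each variable) simply fills in the detail the paper leaves implicit.
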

\begin{proof}
  See~\cite{imkelly:convolution} for the fact that it is lax monoidal for each monoidal structure separately. The diagrams of \Cref{def:double-lax} are verified straightforwardly.
\end{proof}

It follows from \Cref{thm:double-lax-change} that every $\cat{V}$-Freyd category for small $\cat{V}$ induces a $[\cat{V}\op, \Set]$-Freyd category. But $[\cat{V}\op, \Set]$ is duoidally cocomplete, so the setting in which the abstract characterisation of \Cref{thm:characterisation} applies. We conclude that the characterisation extends beyond the duoidally cocomplete setting in the sense that every $\cat{V}$-Freyd category for small $\cat{V}$ induces a monoid in $\cat{MonCat_{lax}}( \cat{M}\op \times \cat{M}, [\cat{V}\op, \Set])$.

\subsection{Forgetful functors}\label{subsec:forgetful}

Any category enriched in a monoidal category $\cat{V}$ has an underlying (unenriched) category, got by changing the enrichment along the `forgetful' monoidal functor $\cat{V}(I,-) \colon \cat{V} \to \cat{Set}$. A similar process plays out for duoidal categories.

\begin{proposition}\label{prop:forgetful}
  Let $\left(\cat{V}, \mzero, \ezero, \mone, \eone\right)$ be a duoidal category and write $\phi \colon \ezero \to \ezero \mzero \ezero$ for the inverse of the unitors.
  Then $\cat{V}(\ezero, -) \colon \cat{V} \to \Set$ is a double lax monoidal functor with coherence maps:
  \begin{align*}
    \eta_{\mzero} \colon 1 &\to \cat{V}(\ezero, \ezero)
    & \mu_{\mzero} \colon \cat{V}(\ezero, A_1) \times \cat{V}(\ezero, A_2) &\to \cat{V}(\ezero, A_1 \mzero A_2) \\
    \star &\mapsto \id
    & (f_1, f_2) &\mapsto (f_1 \mzero f_2) . \phi \\[10pt]
    \eta_{\mone} \colon 1 &\to \cat{V}(\ezero, \eone)
    & \mu_{\mone} \colon \cat{V}(\ezero, A_1) \times \cat{V}(\ezero, A_2) &\to \cat{V}(\ezero, A_1 \mone A_2) \\
    \star &\mapsto \epsilon
    & (f_1, f_2) &\mapsto (f_1 \mone f_2) . \Delta
  \end{align*}
\end{proposition}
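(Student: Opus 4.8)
The plan is to recognise the claimed data as an instance of two entirely standard facts together with a single genuine duoidal computation. Recall that for any monoidal category $(\cat{A}, \otimes, J)$ and any comonoid $(C, \delta, \varepsilon)$ in it, the representable functor $\cat{A}(C, -) \colon \cat{A} \to \Set$ is lax monoidal with $\mu(f_1, f_2) = (f_1 \otimes f_2) \circ \delta$ and $\eta(\star) = \varepsilon$, the unit and associativity coherences being exactly the counit and coassociativity laws of $C$. I would invoke this twice. For the structure $(\cat{V}, \mzero, \ezero)$, the unit $\ezero$ carries its canonical comonoid structure $(\ezero, \phi, \id[\ezero])$ (the inverse unitor $\phi$ is well-defined on the unit since $\lambda_{\ezero} = \rho_{\ezero}$), which yields exactly $(\cat{V}(\ezero, -), \eta_{\mzero}, \mu_{\mzero})$ as a lax monoidal functor for $\mzero$. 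For $(\cat{V}, \mone, \eone)$, the duoidal axioms of \Cref{def:duoidal} assert outright that $(\ezero, \Delta, \epsilon)$ is a comonoid, which yields $(\cat{V}(\ezero, -), \eta_{\mone}, \mu_{\mone})$ as a lax monoidal functor for $\mone$. Naturality of all four coherence maps is immediate since they are built from pre-/post-composition. This discharges the two lax-monoidality requirements in \Cref{def:double-lax}, taking the codomain to be $\Set$ with its cartesian duoidal structure $(\Set, \times, 1, \times, 1)$.

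It then remains to verify the four compatibility squares of \Cref{def:double-lax}, where now $\epsilon_{\cat{W}} = \id[1]$, $\Delta_{\cat{W}}$ and $\nabla_{\cat{W}}$ are unitors, and $\zeta_{\cat{W}}$ is the middle-four interchange. Three of them collapse on tracing the point of $1$ (respectively the elements of $1 \times 1$): the $\epsilon$-square asserts $\epsilon \circ \id[\ezero] = \epsilon$; the $\Delta$-square asserts $(\id[\ezero] \mone \id[\ezero]) \circ \Delta = \Delta$; and the $\nabla$-square asserts $\nabla \circ (\epsilon \mzero \epsilon) \circ \phi = \epsilon$, which is precisely the unit law for the monoid $(\eone, \nabla, \epsilon)$ in $(\cat{V}, \mzero, \ezero)$ combined with naturality of the $\mzero$-unitor. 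So none of these three requires real work.

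The one square that needs actual computation is the $\zeta$-hexagon. Chasing a quadruple $(g_A, g_B, g_C, g_D)$ of morphisms out of $\ezero$ around both legs, and using naturality of $\zeta$ in all four arguments together with bifunctoriality of $\mzero$ and $\mone$ to strip off the $g$'s, the square reduces to the single identity $\zeta \circ (\Delta \mzero \Delta) \circ \phi = (\phi \mone \phi) \circ \Delta$ of maps $\ezero \to (\ezero \mzero \ezero) \mone (\ezero \mzero \ezero)$. I would prove this by factoring $\Delta \mzero \Delta = (\Delta \mzero \id) \circ (\id \mzero \Delta)$, applying the duoidal unit axiom $(\lambda \mone \lambda) \circ \zeta \circ (\Delta \mzero \id) = \lambda$ of \Cref{def:duoidal} specialised to $A = B = \ezero$ (which rewrites $\zeta \circ (\Delta \mzero \id)$ as $(\phi \mone \phi) \circ \lambda$), and then using naturality of $\lambda$ against $\Delta$ and the identity $\lambda_{\ezero} \circ \phi = \id[\ezero]$ to finish. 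The main obstacle is thus confined to this last step: lining up the element chase for the $\zeta$-square with the correct specialisation of the duoidal unit axiom; everything else is either a citation or a one-line verification.
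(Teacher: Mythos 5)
Your proposal is correct, and it supplies exactly the verification the paper leaves implicit (the paper states \Cref{prop:forgetful} without proof, treating it as routine). Your organisation is a clean way to do it: lax monoidality for $\mzero$ and for $\mone$ both follow from the standard fact that a representable $\cat{V}(C,-)$ at a comonoid $(C,\delta,\varepsilon)$ is lax monoidal, applied to the canonical comonoid $(\ezero,\phi,\id)$ and to the duoidal comonoid $(\ezero,\Delta,\epsilon)$; the $\epsilon$- and $\Delta$-squares of \Cref{def:double-lax} are trivial element chases; the $\nabla$-square reduces, as you say, to $\nabla.(\epsilon\mzero\epsilon).\phi=\epsilon$, which is the unit law of the monoid $(\eone,\nabla,\epsilon)$ plus naturality of the unitor; and the $\zeta$-hexagon correctly reduces, via naturality of $\zeta$ in all four arguments, to the single identity $\zeta.(\Delta\mzero\Delta).\phi=(\phi\mone\phi).\Delta$, which your derivation from the duoidal unit axiom $(\lambda\mone\lambda).\zeta.(\Delta\mzero\id)=\lambda$ (with $A=B=\ezero$), naturality of $\lambda$ against $\Delta$, and $\lambda_{\ezero}.\phi=\id$ establishes. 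I checked each of these steps and see no gap.
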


Applying \Cref{thm:double-lax-change} along the forgetful functor of the previous proposition in the case of the examples of \Cref{sec:examples} will show that this recovers the underlying pure computations.
Note that a $\Set$-Freyd category $\vfreyd{C}$ has a trivial instance of the exchange axiom, \cref{ax:mon.mor.m.mu}, and so $\vfreyd{C}$ is a monoidal category with identity-on-objects monoidal functor $J \colon \cat{M} \to \cat{C}$.

\begin{example}\label{ex:forgetful-state-monoid}
  Applying the forgetful functor to the stateful function example of \Cref{sub:state-monoid} results in the (unenriched) category with $\Lab_{\mathcal{P}_f(R)}(\mathsf{cst}_\emptyset, \vfreyd{C}(a, b))$ as the homsets.
  Because labels are preserved, the morphisms in this (unenriched) category are exactly the elements of $\vfreyd{C}(a, b)$ which have label $\emptyset$, \textit{i.e.}\ maps $a \times 1 \to b \times 1$ which are pure functions.
\end{example}

\begin{example}\label{ex:forgetful-indexed-state}
  Changing the enrichment of the indexed state example from \Cref{subsec:indexed-state} along the forgetful functor gives the (unenriched) category with homsets $\left[\Set, \Set\right]_f(\mathrm{Id}, \vfreyd{C}(a, b))$.
  If $\phi \colon \mathrm{Id} \to (b \times ( -))^{a}$ is such a natural transformation, then the function $\phi_1 \colon 1 \to (b \times 1)^{a}$, which is equivalent to choosing a function $f \colon a \to b$, completely determines $\phi$, because for any set $X$ and $x \in X$ by naturality $1 \xrightarrow{x} X \xrightarrow{\phi_X} (b \times X)^{a} = 1 \xrightarrow{\phi_1} (b \times 1)^{a} \xrightarrow{(\id \times x).-} (b \times X)^{a}$, whence $\phi_X(x)(a) = (f(a), x)$.
  Therefore the morphisms in this (unenriched) category are all functions $a \to b$.
\end{example}

\begin{example}\label{ex:forgetful-kleislilawvere}
  Changing the enrichment of the Kleisli categories of Lawvere theories example from \Cref{sub:kleislilawvere} along the forgetful functor gives the (unenriched) category with homsets $[\Law, \Set](\mathbb{1}, \vfreyd{C}(a, b))$.
  Consider such a natural transformation $\phi \colon \mathbb{1} \to  T(-)(b)^a$.
  It is completely determined by its component at $\mathcal{S}$.
  For any $\mathcal{L}$ let $\iota \colon \mathcal{S} \to \mathcal{L}$ be the unique map, then naturality implies $\phi_{\mathcal{L}} = T(\iota)\phi_{\mathcal{S}}$.
  Furthermore, $\phi_\mathcal{S}(\star) \in T(\mathcal{S})(b)^a = b^a$. 
  So the morphisms in this (unenriched) category again are all functions $a \to b$.
\end{example}

\section{Conclusion}\label{sec:conclusion}

We have defined a version of Freyd categories enriched over any duoidal category $\cat{V}$, and morphisms between them.
We used various duoidal categories to give examples based on separation of resources, parameterised monads, and the Kleisli construction for Lawvere theories.
By enriching with $\Sub$, we have proven that the category of Freyd categories $\Freyd$ is a full coreflective subcategory of $\Sub\text{-}\Freyd$, thus establishing that $\cat{V}$-Freyd categories indeed generalise Freyd categories.
Additionally, we proved an abstract characterisation of $\cat{V}$-Freyd categories over small $\cat{M}$  for duoidally cocomplete $\cat{V}$, they are monoids in $\cat{MonCat_{lax}}\big( \cat{M}\op \times \cat{M}, \cat{V} \big)$.
Finally, we provided change of enrichment and examples thereof.

\paragraph{Future work}

There are several directions for further investigation:
\begin{itemize}
\item  
The abstract characterisation of \Cref{sec:characterisation} may be part of a larger structure, namely a \emph{bicategory with proarrow equipment}, whose objects are monoidal categories, arrows are strong monoidal functors, proarrows are lax monoidal profunctors, and cells are lax monoidal natural transformations.
In this setting, a $\cat{V}$-Freyd category would be a monad and the vertical monad morphisms would be a $\cat{V}$-Freyd morphism.
This would enable applying general constructions for monads in a bicategory.

\item
Relatedly, an \emph{fc-multicategory} structure on $\cat{MonCat_{lax}}(\cat{M}\op\times\cat{M},\cat{V})$ may bypass cocompleteness in characterising $\cat{V}$-Freyd categories as monoids.

\item
The abstract characterisation of \Cref{sec:characterisation} also uses the free $\cat{V}$-category on $\cat{M}$.
It may be fruitful to change the definition of a $\cat{V}$-Freyd category to be a $\cat{V}$-functor $J \colon \cat{M} \to \cat{C}$ where we extend $\cat{V}$-categories in a way similar to Morrison and Penneys~\cite{morrisonpenneys:braided}.

\item
Freyd categories can have the property of being \emph{closed}. In this case they induce a strong monad.
A similar definition may be possible for $\cat{V}$-Freyd categories. This could determine a higher-order semantics for effectful programs based on duoidal categories.
A nontrivial definition of closure may require a $\cat{V}$-category $\cat{M}$ that is not free.

\item
Our original motivation stemmed from the desire for semantics combining differentiable and probabilistic programming, in particular, the possibility of having a linear structure for the probabilistic fragment and a cartesian one for differentiable terms.
$\Prof$-Freyd categories may provide a useful separation to aid the desired distinction between linear and cartesian properties.
\end{itemize}

\subsubsection{Acknowledgments}
We would like to thank Robin Kaarsgaard, Ohad Kammar, and Matthew Di Meglio for their input and encouragement, as well as the reviewers of all versions of this work.

\bibliography{bibliography}

\appendix

\section{Definition of $\cat{V}$-Freyd category}\label{sec:diagrams:vfreyd}

This appendix spells out the type diagrams of \Cref{def:v-freyd} of $\cat{V}$-Freyd categories.

\begin{description}
  \item[Extranaturality of $\mathsf{idt}$:]
% file://///wsl.localhost/Ubuntu/home/jesse/Repositories/quiver/src/index.html?q=WzAsNCxbMSwxLCJcXHZmcmV5ZHtDfShhLGIpIl0sWzAsMCwiXFxlb25lIl0sWzEsMCwiXFx2ZnJleWR7Q30oYiwgYikiXSxbMCwxLCJcXHZmcmV5ZHtDfShhLCBhKSJdLFsxLDIsIlxcbWF0aHNme2lkdH0iXSxbMiwwLCJcXHZmcmV5ZHtDfShmLCBcXGlkKSJdLFsxLDMsIlxcbWF0aHNme2lkdH0iLDJdLFszLDAsIlxcdmZyZXlke0N9KFxcaWQsIGYpIiwyXV0=&macro_url=http%3A%2F%2Flocalhost%3A8000%2Fmacros.sty
\[\begin{tikzcd}[column sep=tiny, global scale=0.8]
  \eone & {\vfreyd{C}(b, b)} \\
  {\vfreyd{C}(a, a)} & {\vfreyd{C}(a,b)}
  \arrow["{\mathsf{idt}}", from=1-1, to=1-2]
  \arrow["{\vfreyd{C}(f, \id)}", from=1-2, to=2-2]
  \arrow["{\mathsf{idt}}"', from=1-1, to=2-1]
  \arrow["{\vfreyd{C}(\id, f)}"' {yshift=-3pt}, from=2-1, to=2-2]
\end{tikzcd}\]
  
  \item[Extranaturality of $\mathsf{seq}$:]
% file://///wsl.localhost/Ubuntu/home/jesse/Repositories/quiver/src/index.html?q=WzAsNCxbMCwwLCJcXHZmcmV5ZHtDfShhLCBiKSBcXG1vbmUgXFx2ZnJleWR7Q30oYicsIGMpIl0sWzAsMSwiXFx2ZnJleWR7Q30oYSwgYikgXFxtb25lIFxcdmZyZXlke0N9KGIsIGMpIl0sWzEsMCwiXFx2ZnJleWR7Q30oYSwgYicpIFxcbW9uZSBcXHZmcmV5ZHtDfShiJywgYykiXSxbMSwxLCJcXHZmcmV5ZHtDfShhLCBjKSJdLFswLDIsIlxcdmZyZXlke0N9KFxcaWQsIGYpIFxcbW9uZSBcXGlkIl0sWzIsMywiXFxtYXRoc2Z7c2VxfSJdLFswLDEsIlxcaWQgXFxtb25lIFxcdmZyZXlke0N9KGYsIFxcaWQpIiwyXSxbMSwzLCJcXG1hdGhzZntzZXF9IiwyXV0=&macro_url=http%3A%2F%2Flocalhost%3A8000%2Fmacros.sty
\[\begin{tikzcd}[column sep=tiny, global scale=0.8]
	{\vfreyd{C}(a, b) \mone \vfreyd{C}(b', c)} & {\vfreyd{C}(a, b') \mone \vfreyd{C}(b', c)} \\
	{\vfreyd{C}(a, b) \mone \vfreyd{C}(b, c)} & {\vfreyd{C}(a, c)}
	\arrow["{\vfreyd{C}(\id, f) \mone \id}" {yshift=3pt}, from=1-1, to=1-2]
	\arrow["{\mathsf{seq}}", from=1-2, to=2-2]
	\arrow["{\id \mone \vfreyd{C}(f, \id)}"', from=1-1, to=2-1]
	\arrow["{\mathsf{seq}}"', from=2-1, to=2-2]
\end{tikzcd}\]

  \item[$\mathsf{idt}$ is the identity for $\mathsf{seq}$:]
% file://///wsl.localhost/Ubuntu/home/jesse/Repositories/quiver/src/index.html?q=WzAsMyxbMCwwLCJcXGVvbmUgXFxtb25lIFxcdmZyZXlke0N9KGEsIGIpIl0sWzEsMSwiXFx2ZnJleWR7Q30oYSwgYikiXSxbMSwwLCJcXHZmcmV5ZHtDfShhLCBhKSBcXG1vbmUgXFx2ZnJleWR7Q30oYSwgYikiXSxbMCwxLCJcXGxhbWJkYSIsMl0sWzIsMSwiXFxtYXRoc2Z7c2VxfSJdLFswLDIsIlxcbWF0aHNme2lkdH0gXFxtb25lIFxcaWQiXV0=&macro_url=http%3A%2F%2Flocalhost%3A8000%2Fmacros.sty
\[\begin{tikzcd}[column sep=tiny, global scale=0.8]
	{\eone \mone \vfreyd{C}(a, b)} & {\vfreyd{C}(a, a) \mone \vfreyd{C}(a, b)} \\
	& {\vfreyd{C}(a, b)}
	\arrow["\lambda"', from=1-1, to=2-2]
	\arrow["{\mathsf{seq}}", from=1-2, to=2-2]
	\arrow["{\mathsf{idt} \mone \id}" {yshift=3pt}, from=1-1, to=1-2]
\end{tikzcd}
\qquad
% file://///wsl.localhost/Ubuntu/home/jesse/Repositories/quiver/src/index.html?q=WzAsMyxbMCwwLCJcXHZmcmV5ZHtDfShhLCBiKSBcXG1vbmUgXFxlb25lIl0sWzEsMSwiXFx2ZnJleWR7Q30oYSwgYikiXSxbMSwwLCJcXHZmcmV5ZHtDfShhLCBiKSBcXG1vbmUgXFx2ZnJleWR7Q30oYiwgYikiXSxbMCwxLCJcXHJobyIsMl0sWzIsMSwiXFxtYXRoc2Z7c2VxfSJdLFswLDIsIlxcaWQgXFxtb25lIFxcbWF0aHNme2lkdH0iXV0=&macro_url=http%3A%2F%2Flocalhost%3A8000%2Fmacros.sty
\begin{tikzcd}[column sep=tiny, global scale=0.8]
	{\vfreyd{C}(a, b) \mone \eone} & {\vfreyd{C}(a, b) \mone \vfreyd{C}(b, b)} \\
	& {\vfreyd{C}(a, b)}
	\arrow["\rho"', from=1-1, to=2-2]
	\arrow["{\mathsf{seq}}", from=1-2, to=2-2]
	\arrow["{\id \mone \mathsf{idt}}" {yshift=3pt}, from=1-1, to=1-2]
\end{tikzcd}\]

  \item[$\mathsf{seq}$ is associative:]
% file://///wsl.localhost/Ubuntu/home/jesse/Repositories/quiver/src/index.html?q=WzAsNSxbMCwwLCJcXGxlZnQoIFxcdmZyZXlke0N9KGEsIGIpIFxcbW9uZSBcXHZmcmV5ZHtDfShiLCBjKSBcXHJpZ2h0KSBcXG1vbmUgXFx2ZnJleWR7Q30oYywgZCkiXSxbMiwwLCJcXHZmcmV5ZHtDfShhLCBiKSBcXG1vbmUgXFxsZWZ0KCBcXHZmcmV5ZHtDfShiLCBjKSBcXG1vbmUgXFx2ZnJleWR7Q30oYywgZCkgXFxyaWdodCkiXSxbMCwxLCJcXHZmcmV5ZHtDfShhLCBjKSBcXG1vbmUgXFx2ZnJleWR7Q30oYywgZCkiXSxbMiwxLCJcXHZmcmV5ZHtDfShhLCBiKSBcXG1vbmUgXFx2ZnJleWR7Q30oYiwgZCkiXSxbMSwyLCJcXHZmcmV5ZHtDfShhLCBkKSJdLFswLDEsIlxcYWxwaGEiXSxbMSwzLCJcXGlkIFxcbW9uZSBcXG1hdGhzZntzZXF9Il0sWzMsNCwiXFxtYXRoc2Z7c2VxfSJdLFswLDIsIlxcbWF0aHNme3NlcX0gXFxtb25lIFxcaWQiLDJdLFsyLDQsIlxcbWF0aHNme3NlcX0iLDJdXQ==&macro_url=http%3A%2F%2Flocalhost%3A8000%2Fmacros.sty
\[\begin{tikzcd}[column sep=tiny, global scale=0.8]
	{\left( \vfreyd{C}(a, b) \mone \vfreyd{C}(b, c) \right) \mone \vfreyd{C}(c, d)} &[-18pt]&[-18pt] {\vfreyd{C}(a, b) \mone \left( \vfreyd{C}(b, c) \mone \vfreyd{C}(c, d) \right)} \\
	{\vfreyd{C}(a, c) \mone \vfreyd{C}(c, d)} && {\vfreyd{C}(a, b) \mone \vfreyd{C}(b, d)} \\
	& {\vfreyd{C}(a, d)}
	\arrow["\alpha" {yshift=3pt}, from=1-1, to=1-3]
	\arrow["{\id \mone \mathsf{seq}}", from=1-3, to=2-3]
	\arrow["{\mathsf{seq}}", from=2-3, to=3-2]
	\arrow["{\mathsf{seq} \mone \id}"', from=1-1, to=2-1]
	\arrow["{\mathsf{seq}}"', from=2-1, to=3-2]
\end{tikzcd}\]

 \item[$\mathsf{zero}$ is the identity for $\mathsf{par}$:]
% file://///wsl.localhost/Ubuntu/home/jesse/Repositories/quiver/src/index.html?q=WzAsNCxbMCwwLCJcXGV6ZXJvIFxcbXplcm8gXFx2ZnJleWR7Q30oYSwgYikiXSxbMCwxLCJcXHZmcmV5ZHtDfShhLCBiKSJdLFsxLDAsIlxcdmZyZXlke0N9KGUsIGUpIFxcbXplcm8gXFx2ZnJleWR7Q30oYSwgYikiXSxbMSwxLCJcXHZmcmV5ZHtDfShlIFxcb3BsdXMgYSwgZSBcXG9wbHVzIGIpIl0sWzAsMSwiXFxsYW1iZGEiLDJdLFswLDIsIlxcbWF0aHNme3plcm99IFxcbXplcm8gXFxpZCJdLFsyLDMsIlxcbWF0aHNme3Bhcn0iXSxbMywxLCJcXHZmcmV5ZHtDfShcXGxhbWJkYV57LTF9LCBcXGxhbWJkYSkiXV0=&macro_url=http%3A%2F%2Flocalhost%3A8000%2Fmacros.sty
\[\begin{tikzcd}[column sep=tiny, global scale=0.8]
	{\ezero \mzero \vfreyd{C}(a, b)} & {\vfreyd{C}(e, e) \mzero \vfreyd{C}(a, b)} \\
	{\vfreyd{C}(a, b)} & {\vfreyd{C}(e \oplus a, e \oplus b)}
	\arrow["\lambda"', from=1-1, to=2-1]
	\arrow["{\mathsf{zero} \mzero \id}" {yshift=3pt}, from=1-1, to=1-2]
	\arrow["{\mathsf{par}}", from=1-2, to=2-2]
	\arrow["{\vfreyd{C}(\lambda^{-1}, \lambda)}" {yshift=-3pt}, from=2-2, to=2-1]
\end{tikzcd}
\qquad
% file://///wsl.localhost/Ubuntu/home/jesse/Repositories/quiver/src/index.html?q=WzAsNCxbMCwwLCJcXHZmcmV5ZHtDfShhLCBiKSBcXG16ZXJvIFxcZXplcm8iXSxbMCwxLCJcXHZmcmV5ZHtDfShhLCBiKSJdLFsxLDAsIlxcdmZyZXlke0N9KGEsIGIpIFxcbXplcm8gXFx2ZnJleWR7Q30oZSwgZSkiXSxbMSwxLCJcXHZmcmV5ZHtDfShiIFxcb3BsdXMgZSwgYiBcXG9wbHVzIGUpIl0sWzAsMSwiXFxyaG8iLDJdLFswLDIsIlxcaWQgXFxtemVybyBcXG1hdGhzZnt6ZXJvfSJdLFsyLDMsIlxcbWF0aHNme3Bhcn0iXSxbMywxLCJcXHZmcmV5ZHtDfShcXHJob157LTF9LCBcXHJobykiXV0=&macro_url=http%3A%2F%2Flocalhost%3A8000%2Fmacros.sty
\begin{tikzcd}[column sep=tiny, global scale=0.8]
	{\vfreyd{C}(a, b) \mzero \ezero} & {\vfreyd{C}(a, b) \mzero \vfreyd{C}(e, e)} \\
	{\vfreyd{C}(a, b)} & {\vfreyd{C}(b \oplus e, b \oplus e)}
	\arrow["\rho"', from=1-1, to=2-1]
	\arrow["{\id \mzero \mathsf{zero}}" {yshift=3pt}, from=1-1, to=1-2]
	\arrow["{\mathsf{par}}", from=1-2, to=2-2]
	\arrow["{\vfreyd{C}(\rho^{-1}, \rho)}" {yshift=-3pt}, from=2-2, to=2-1]
\end{tikzcd}\]

  \item[$\mathsf{par}$ is associative:]
% file://///wsl.localhost/Ubuntu/home/jesse/Repositories/quiver/src/index.html?q=WzAsNixbMCwwLCJcXGxlZnQoIFxcdmZyZXlke0N9KGFfMSwgYl8xKSBcXG16ZXJvIFxcdmZyZXlke0N9KGFfMiwgYl8yKSBcXHJpZ2h0KSBcXG16ZXJvIFxcdmZyZXlke0N9KGFfMywgYl8zKSJdLFsxLDAsIlxcdmZyZXlke0N9KGFfMSwgYl8xKSBcXG16ZXJvIFxcbGVmdCggXFx2ZnJleWR7Q30oYV8yLCBiXzIpIFxcbXplcm8gXFx2ZnJleWR7Q30oYV8zLCBiXzMpIFxccmlnaHQpIl0sWzAsMSwiXFx2ZnJleWR7Q30oYV8xIFxcb3BsdXMgYV8yLCBiXzEgXFxvcGx1cyBiXzIpIFxcbXplcm8gXFx2ZnJleWR7Q30oYV8zLCBiXzMpIl0sWzAsMiwiXFx2ZnJleWR7Q30oKGFfMSBcXG9wbHVzIGFfMikgXFxvcGx1cyBhXzMsIChiXzEgXFxvcGx1cyBiXzIpIFxcb3BsdXMgYl8zKSJdLFsxLDIsIlxcdmZyZXlke0N9KGFfMSBcXG9wbHVzIChhXzIgXFxvcGx1cyBhXzMpLCBiXzEgXFxvcGx1cyAoYl8yIFxcb3BsdXMgYl8zKSkiXSxbMSwxLCJcXHZmcmV5ZHtDfShhXzEsIGJfMSkgXFxtemVybyBcXHZmcmV5ZHtDfShhXzIgXFxvcGx1cyBhXzMsIGJfMiBcXG9wbHVzIGJfMykiXSxbMCwxLCJcXGFscGhhIl0sWzEsNSwiXFxpZCBcXG16ZXJvIFxcbWF0aHNme3Bhcn0iXSxbNSw0LCJcXG1hdGhzZntwYXJ9Il0sWzAsMiwiXFxtYXRoc2Z7cGFyfSBcXG16ZXJvIFxcaWQiLDJdLFsyLDMsIlxcbWF0aHNme3Bhcn0iLDJdLFszLDQsIlxcdmZyZXlke0N9KFxcYWxwaGFeey0xfSwgXFxhbHBoYSkiLDJdXQ==&macro_url=http%3A%2F%2Flocalhost%3A8000%2Fmacros.sty
\[\begin{tikzcd}[column sep=tiny, global scale=0.8]
	{\left( \vfreyd{C}(a_1, b_1) \mzero \vfreyd{C}(a_2, b_2) \right) \mzero \vfreyd{C}(a_3, b_3)} & {\vfreyd{C}(a_1, b_1) \mzero \left( \vfreyd{C}(a_2, b_2) \mzero \vfreyd{C}(a_3, b_3) \right)} \\
	{\vfreyd{C}(a_1 \oplus a_2, b_1 \oplus b_2) \mzero \vfreyd{C}(a_3, b_3)} & {\vfreyd{C}(a_1, b_1) \mzero \vfreyd{C}(a_2 \oplus a_3, b_2 \oplus b_3)} \\
	{\vfreyd{C}((a_1 \oplus a_2) \oplus a_3, (b_1 \oplus b_2) \oplus b_3)} & {\vfreyd{C}(a_1 \oplus (a_2 \oplus a_3), b_1 \oplus (b_2 \oplus b_3))}
	\arrow["\alpha" {yshift=3pt}, from=1-1, to=1-2]
	\arrow["{\id \mzero \mathsf{par}}", from=1-2, to=2-2]
	\arrow["{\mathsf{par}}", from=2-2, to=3-2]
	\arrow["{\mathsf{par} \mzero \id}"', from=1-1, to=2-1]
	\arrow["{\mathsf{par}}"', from=2-1, to=3-1]
	\arrow["{\vfreyd{C}(\alpha^{-1}, \alpha)}"' {yshift=-3pt}, from=3-1, to=3-2]
\end{tikzcd}\]

  \item[$\mathsf{idt}$ respects $\mathsf{zero}$:]
% file://///wsl.localhost/Ubuntu/home/jesse/Repositories/quiver/src/index.html?q=WzAsMyxbMCwwLCJcXGV6ZXJvIl0sWzEsMCwiXFxlb25lIl0sWzEsMSwiXFx2ZnJleWR7Q30oZSwgZSkiXSxbMCwxLCJcXGVwc2lsb24iXSxbMSwyLCJcXG1hdGhzZntpZHR9Il0sWzAsMiwiXFxtYXRoc2Z7emVyb30iLDJdXQ==&macro_url=http%3A%2F%2Flocalhost%3A8000%2Fmacros.sty
\[\begin{tikzcd}[column sep=tiny, global scale=0.8]
	\ezero & \eone \\
	& {\vfreyd{C}(e, e)}
	\arrow["\epsilon", from=1-1, to=1-2]
	\arrow["{\mathsf{idt}}", from=1-2, to=2-2]
	\arrow["{\mathsf{zero}}"', from=1-1, to=2-2]
\end{tikzcd}\]

  \item[$\mathsf{idt}$ respects $\mathsf{par}$:]
% file://///wsl.localhost/Ubuntu/home/jesse/Repositories/quiver/src/index.html?q=WzAsNCxbMCwwLCJcXGVvbmUgXFxtemVybyBcXGVvbmUiXSxbMSwwLCJcXHZmcmV5ZHtDfShhLCBhKSBcXG16ZXJvIFxcdmZyZXlke0N9KGIsIGIpIl0sWzEsMSwiXFx2ZnJleWR7Q30oYSBcXG9wbHVzIGIsIGEgXFxvcGx1cyBiKSJdLFswLDEsIlxcZW9uZSJdLFsxLDIsIlxcbWF0aHNme3Bhcn0iXSxbMCwxLCJcXG1hdGhzZntpZHR9IFxcbXplcm8gXFxtYXRoc2Z7aWR0fSJdLFswLDMsIlxcbmFibGEiLDJdLFszLDIsIlxcbWF0aHNme2lkdH0iLDJdXQ==&macro_url=http%3A%2F%2Flocalhost%3A8000%2Fmacros.sty
\[\begin{tikzcd}[column sep=tiny, global scale=0.8]
	{\eone \mzero \eone} & {\vfreyd{C}(a, a) \mzero \vfreyd{C}(b, b)} \\
	\eone & {\vfreyd{C}(a \oplus b, a \oplus b)}
	\arrow["{\mathsf{par}}", from=1-2, to=2-2]
	\arrow["{\mathsf{idt} \mzero \mathsf{idt}}" {yshift=3pt}, from=1-1, to=1-2]
	\arrow["\nabla"', from=1-1, to=2-1]
	\arrow["{\mathsf{idt}}"', from=2-1, to=2-2]
\end{tikzcd}\]

  \item[$\mathsf{seq}$ respects $\mathsf{zero}$:]
% file://///wsl.localhost/Ubuntu/home/jesse/Repositories/quiver/src/index.html?q=WzAsNCxbMCwwLCJcXGV6ZXJvIl0sWzAsMSwiXFx2ZnJleWR7Q30oZSwgZSkiXSxbMSwwLCJcXGV6ZXJvIFxcbW9uZSBcXGV6ZXJvIl0sWzEsMSwiXFx2ZnJleWR7Q30oZSwgZSkgXFxtb25lIFxcdmZyZXlke0N9KGUsIGUpIl0sWzAsMSwiXFxtYXRoc2Z7emVyb30iLDJdLFsyLDMsIlxcbWF0aHNme3plcm99IFxcbW9uZSBcXG1hdGhzZnt6ZXJvfSJdLFswLDIsIlxcRGVsdGEiXSxbMywxLCJcXG1hdGhzZntzZXF9Il1d&macro_url=http%3A%2F%2Flocalhost%3A8000%2Fmacros.sty
\[\begin{tikzcd}[column sep=tiny, global scale=0.8]
	\ezero & {\ezero \mone \ezero} \\
	{\vfreyd{C}(e, e)} & {\vfreyd{C}(e, e) \mone \vfreyd{C}(e, e)}
	\arrow["{\mathsf{zero}}"', from=1-1, to=2-1]
	\arrow["{\mathsf{zero} \mone \mathsf{zero}}", from=1-2, to=2-2]
	\arrow["\Delta", from=1-1, to=1-2]
	\arrow["{\mathsf{seq}}" {yshift=-3pt}, from=2-2, to=2-1]
\end{tikzcd}\]

  \item[$\mathsf{seq}$ respects $\mathsf{par}$:]
% file://///wsl.localhost/Ubuntu/home/jesse/Repositories/quiver/src/index.html?q=WzAsNSxbMCwwLCJcXGxlZnQoIFxcdmZyZXlke0N9KGFfMSwgYl8xKSBcXG1vbmUgXFx2ZnJleWR7Q30oYl8xLCBjXzEpIFxccmlnaHQpIFxcbXplcm8gXFxsZWZ0KCBcXHZmcmV5ZHtDfShhXzIsIGJfMikgXFxtb25lIFxcdmZyZXlke0N9KGJfMiwgY18yKSBcXHJpZ2h0KSJdLFsyLDAsIlxcbGVmdCggXFx2ZnJleWR7Q30oYV8xLCBiXzEpIFxcbXplcm8gXFx2ZnJleWR7Q30oYV8yLCBiXzIpIFxccmlnaHQpIFxcbW9uZSBcXGxlZnQoIFxcdmZyZXlke0N9KGJfMSwgY18xKSBcXG16ZXJvIFxcdmZyZXlke0N9KGJfMiwgY18yKSBcXHJpZ2h0KSJdLFswLDEsIlxcdmZyZXlke0N9KGFfMSwgY18xKSBcXG16ZXJvIFxcdmZyZXlke0N9KGFfMiwgY18yKSJdLFsxLDIsIlxcdmZyZXlke0N9KGFfMSBcXG9wbHVzIGFfMiwgY18xIFxcb3BsdXMgY18yKSJdLFsyLDEsIlxcdmZyZXlke0N9KGFfMSBcXG9wbHVzIGFfMiwgYl8xIFxcb3BsdXMgYl8yKSBcXG1vbmUgXFx2ZnJleWR7Q30oYl8xIFxcb3BsdXMgYl8yLCBjXzEgXFxvcGx1cyBjXzIpIl0sWzAsMSwiXFx6ZXRhIl0sWzEsNCwiXFxtYXRoc2Z7cGFyfSBcXG1vbmUgXFxtYXRoc2Z7cGFyfSJdLFs0LDMsIlxcbWF0aHNme3NlcX0iXSxbMCwyLCJcXG1hdGhzZntzZXF9IFxcbXplcm8gXFxtYXRoc2Z7c2VxfSIsMl0sWzIsMywiXFxtYXRoc2Z7cGFyfSIsMl1d&macro_url=http%3A%2F%2Flocalhost%3A8000%2Fmacros.sty
\[\begin{tikzcd}[column sep=tiny, global scale=0.8]
	{\left( \vfreyd{C}(a_1, b_1) \!\mone\! \vfreyd{C}(b_1, c_1) \right) \!\mzero\! \left( \vfreyd{C}(a_2, b_2) \!\mone\! \vfreyd{C}(b_2, c_2) \right)} &[-36pt]&[-36pt] {\left( \vfreyd{C}(a_1, b_1) \!\mzero\! \vfreyd{C}(a_2, b_2) \right) \!\mone\! \left( \vfreyd{C}(b_1, c_1) \!\mzero\! \vfreyd{C}(b_2, c_2) \right)} \\
	{\vfreyd{C}(a_1, c_1) \mzero \vfreyd{C}(a_2, c_2)} && {\vfreyd{C}(a_1 \oplus a_2, b_1 \oplus b_2) \mone \vfreyd{C}(b_1 \oplus b_2, c_1 \oplus c_2)} \\
	& {\vfreyd{C}(a_1 \oplus a_2, c_1 \oplus c_2)}
	\arrow["\zeta", from=1-1, to=1-3]
	\arrow["{\mathsf{par} \mone \mathsf{par}}", from=1-3, to=2-3]
	\arrow["{\mathsf{seq}}", from=2-3, to=3-2]
	\arrow["{\mathsf{seq} \mzero \mathsf{seq}}"', from=1-1, to=2-1]
	\arrow["{\mathsf{par}}"', from=2-1, to=3-2]
\end{tikzcd}\]
\end{description}

\section{Proofs for abstract characterisation}\label{sec:proofs:characterisation}

This appendix contains proofs of the abstract characterisation of $\cat{V}$-Freyd categories of \Cref{sec:characterisation}. They rely on properties of $\cat{V}$-Freyd categories listed in the following four lemmas, that are mechanical to verify.

% Were left with all of the [str] properties.
%
% [str.hom], [str.hom.eta], [str.hom.mu] are just repackaging the fact that
% eone is a monoid along with cocontinuity calculations, so no lemma needs to be
% proved.
%
% [str.mor.tri], [str.mor.pent] are just repackaging of the same properties of
% mone, so no lemmas need to be proved.
%
% This leaves the following non-trivial things to prove:
%   [str.comp] C composed with itself is lax monoidal and has an eta and mu
%     [str.comp.id] eta is the identity for mu
%     [str.comp.ass] mu is associative
%   [str.mor] there are unitors and an associator
%     [str.mor.unr] each unitor is a monoidal natural transformation
%       [str.mor.unr.eta] it has eta compatinility
%       [str.mor.unr.mu] it has mu compatibility
%     [str.mor.assr] the associator is a monoidal natural transformation
%       [str.mor.assr.eta] it has eta compatibility
%       [str.mor.assr.mu] it has mu compatibility

% [str.mor.unr.eta], [str.mor.unr.mu]
\begin{lemma}\label{lem:conc-str-mor-unr}
  The unitors of $\mone$ respect $\mathsf{zero}$ and $\mathsf{par}$:
  \begin{align*}
    \rho . \mathsf{zero} &= ( \mathsf{zero} \mone \epsilon ) . \Delta&
    \mathsf{zero} . \lambda &= ( \epsilon \mone \mathsf{zero} ) . \Delta \\ 
    \rho . \mathsf{par} &= ( \mathsf{par} \mone \nabla ) . \zeta . ( \rho \mzero \rho ) &
    \mathsf{par} . \lambda &= ( \mathsf{par} \mone \nabla ) . \zeta . ( \lambda \mzero \lambda )
  \end{align*}
\end{lemma}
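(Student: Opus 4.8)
The plan is to treat all four equations as short diagram chases. None of them needs any $\cat{V}$-Freyd axiom: the only facts about $\vfreyd{C}$ used are that $\mathsf{zero} \colon \ezero \to \vfreyd{C}(e,e)$ and $\mathsf{par} \colon \vfreyd{C}(a_1,b_1) \mzero \vfreyd{C}(a_2,b_2) \to \vfreyd{C}(a_1 \oplus a_2, b_1 \oplus b_2)$ are morphisms of $\cat{V}$; everything else is pure duoidal coherence together with naturality of the unitors of $\mone$. I would prove the two equations involving $\mathsf{zero}$ first and then the two involving $\mathsf{par}$, each pair being mirror images of the other under the left/right symmetry of a duoidal category.

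For $\rho . \mathsf{zero} = (\mathsf{zero} \mone \epsilon) . \Delta$ and its mirror, observe that the unitors of $\mone$ are isomorphisms, so it suffices to post-compose both sides with the right unitor $\rho$ of $\mone$; this collapses the left-hand side to $\mathsf{zero}$. On the right-hand side I would factor $\mathsf{zero} \mone \epsilon = (\mathsf{zero} \mone \id) . (\id \mone \epsilon)$, slide $\rho$ past $\mathsf{zero} \mone \id$ by naturality of the right unitor of $\mone$, and then invoke the counit law of the comonoid $(\ezero, \Delta, \epsilon)$ in $(\cat{V}, \mone, \eone)$ — part of the data of a duoidal category by \Cref{def:duoidal} — which gives $\rho . (\id \mone \epsilon) . \Delta = \id[\ezero]$. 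What is left is exactly $\mathsf{zero}$, so the equation holds; the mirrored statement uses the left unitor and the other counit law.

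For $\rho . \mathsf{par} = (\mathsf{par} \mone \nabla) . \zeta . (\rho \mzero \rho)$ and its mirror I would proceed identically: post-compose with the right unitor $\rho$ of $\mone$ on $\vfreyd{C}(a_1 \oplus a_2, b_1 \oplus b_2)$, reducing the left-hand side to $\mathsf{par}$. On the right-hand side, factor $\mathsf{par} \mone \nabla = (\mathsf{par} \mone \id) . (\id \mone \nabla)$, use naturality of $\rho$ to move it past $\mathsf{par} \mone \id$, and then apply the duoidal coherence square relating $\zeta$, $\nabla$, and the unitors of $\mone$, namely $\rho . (\id \mone \nabla) . \zeta = \rho \mzero \rho$. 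Functoriality of $\mzero$ then cancels the resulting composite of $\mzero$-parallel unitors against the $\mzero$-parallel pair of inverse unitors already present on the right-hand side, again leaving $\mathsf{par}$. The mirrored equation uses the symmetric coherence square.

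I do not anticipate a genuine obstacle: the only care needed is bookkeeping — keeping the unitors of $\mzero$ and of $\mone$ apart, noting that the non-invertible interchange $\zeta$ is only ever pre-composed here and never needs inverting, and matching each of the four equations to the correct one of the symmetric duoidal coherence squares. These are precisely the mechanical verifications alluded to in the statement.
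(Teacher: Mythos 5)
Your verification is correct and matches the paper's approach: the paper gives no explicit proof, describing these lemmas only as ``mechanical to verify,'' and your chase --- naturality of the $\mone$-unitors, the counit laws of the comonoid $(\ezero, \Delta, \epsilon)$, and the duoidal squares relating $\zeta$, $\nabla$ and the $\mone$-unitors --- is exactly that verification, including the correct observation that no $\cat{V}$-Freyd axioms beyond the mere existence of the morphisms $\mathsf{zero}$ and $\mathsf{par}$ are needed. Your reading of the unitors on the left-hand sides as their inverses (equivalently, cancelling them by post-composition with the genuine unitor) is the only type-correct interpretation of the stated equations, so that bookkeeping step is fine as well.
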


% [str.mor.assr.eta], [str.mor.assr.mu]
\begin{lemma}\label{lem:conc-str-mor-assr}
  The associator of $\mone$ respects $\mathsf{zero}$ and $\mathsf{par}$:
  \begin{align*}
    \alpha . ( \mathsf{zero} \mone ( \mathsf{zero} \mone \mathsf{zero} ) ) . ( \id \mone \Delta ) . \Delta &= ( ( \mathsf{zero} \mone \mathsf{zero} ) \mone \mathsf{zero} ) . ( \Delta \mone \id ) . \Delta \\
    \alpha . ( \mathsf{par} \mone ( \mathsf{par} \mone \mathsf{par} ) ) . ( \id \mone \zeta ) . \zeta &= ( ( \mathsf{par} \mone \mathsf{par} ) \mone \mathsf{par} ) ) . ( \zeta \mone \id ) . \zeta . ( \alpha \mzero \alpha )
  \end{align*}
\end{lemma}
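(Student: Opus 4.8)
The plan is to derive each of the two displayed equations purely from the coherence data of the duoidal category $\cat{V}$, together with naturality of the associator $\alpha$ of $\mone$; no axiom of \Cref{def:v-freyd} is needed beyond the fact that $\mathsf{zero} \colon \ezero \to \vfreyd{C}(e,e)$ and every instance of $\mathsf{par}$ are morphisms of $\cat{V}$.

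For the first equation, I would invoke naturality of $\alpha$, regarded as a natural isomorphism between the two evident functors $\cat{V} \times \cat{V} \times \cat{V} \to \cat{V}$, at the three copies of $\mathsf{zero}$; this slides $\alpha$ past $\mathsf{zero} \mone (\mathsf{zero} \mone \mathsf{zero})$ and rewrites the left-hand side as $\big( (\mathsf{zero} \mone \mathsf{zero}) \mone \mathsf{zero} \big) . \alpha_{\ezero, \ezero, \ezero} . (\id \mone \Delta) . \Delta$. The right-hand side carries the same prefix $(\mathsf{zero} \mone \mathsf{zero}) \mone \mathsf{zero}$, so it remains to prove $\alpha_{\ezero, \ezero, \ezero} . (\id \mone \Delta) . \Delta = (\Delta \mone \id) . \Delta$, which is exactly coassociativity of the comonoid $(\ezero, \Delta, \epsilon)$ in $(\cat{V}, \mone, \eone)$ guaranteed by \Cref{def:duoidal}.

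For the second equation I would argue identically: naturality of $\alpha$ at the copies of $\mathsf{par}$ turns the left-hand side into $\big( (\mathsf{par} \mone \mathsf{par}) \mone \mathsf{par} \big) . \alpha . (\id \mone \zeta) . \zeta$, so after cancelling the common prefix it suffices to prove $\alpha . (\id \mone \zeta) . \zeta = (\zeta \mone \id) . \zeta . (\alpha \mzero \alpha)$ inside $\cat{V}$ --- and this is precisely the coherence axiom of \Cref{def:duoidal} expressing compatibility of $\zeta$ with the associator of $\mone$, read in the orientation forced by typing (all associators being invertible). I expect the only genuine effort to be bookkeeping: identifying the several unnamed associator instances with the correct objects --- in particular the six variables $a_i, b_i$ underlying the three $\mathsf{par}$'s --- and tracking in which direction each $\alpha$ points, since the statement writes $\alpha$ for whichever associator or its inverse makes the composite typecheck. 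Once those instances are pinned down, each equation collapses to a single application of naturality followed by one duoidal axiom.
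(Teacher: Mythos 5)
Your proof is correct and is essentially the verification the paper leaves implicit (it only records that these lemmas are ``mechanical to verify'' from the duoidal axioms): naturality of the $\mone$-associator slides it past the three copies of $\mathsf{zero}$ (resp.\ $\mathsf{par}$), after which the first identity reduces to coassociativity of the comonoid $(\ezero, \Delta, \epsilon)$ in $(\cat{V}, \mone, \eone)$ and the second to the duoidal coherence axiom relating $\zeta$ to the $\mone$-associator, read with the orientations forced by typing. Your side remark is also accurate for this particular lemma: no axiom of \Cref{def:v-freyd} is used, only that $\mathsf{zero}$ and the relevant instances of $\mathsf{par}$ are $\cat{V}$-morphisms of the stated types.
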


% [str.comp.id]
\begin{lemma}\label{lem:conc-str-comp-id}
  The unitors of $\mzero$ respect $\mathsf{zero}$ and $\mathsf{par}$:
  \begin{align*}
   \id &= ( \mathsf{par} \mone \mathsf{par} ) . \zeta . ( \id \mzero ( ( \mathsf{zero} \mone \mathsf{zero} ) . \Delta ) ) . \rho 
   \\
   \id &= ( \mathsf{par} \mone \mathsf{par} ) . \zeta . ( ( ( \mathsf{zero} \mone \mathsf{zero} ) . \Delta ) \mzero \id ) . \lambda
  \end{align*}
\end{lemma}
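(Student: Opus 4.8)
The plan is to verify the two displayed equations by expanding the $\mzero$-unitors on their right-hand sides and reducing, step by step, to the $\cat{V}$-Freyd axioms of \Cref{def:v-freyd} together with the coherence diagrams of \Cref{def:duoidal}. The two equations are mirror images of one another---interchange $\lambda$ with $\rho$ and the two $\mzero$-tensor factors---so it suffices to carry out the first and then transcribe. As the appendix advertises, everything is mechanical, so I only lay out the reduction rather than grinding it out.

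For the first equation I would work right-to-left. First, functoriality of $\mzero$ splits the second tensor component: $\id \mzero \big(( \mathsf{zero} \mone \mathsf{zero} ) . \Delta\big) = \big( \id \mzero ( \mathsf{zero} \mone \mathsf{zero} ) \big) . ( \id \mzero \Delta )$. Next, naturality of $\zeta$ in its two right-hand arguments moves $\mathsf{zero}$ across: $\zeta . \big( \id \mzero ( \mathsf{zero} \mone \mathsf{zero} ) \big) = \big( ( \id \mzero \mathsf{zero} ) \mone ( \id \mzero \mathsf{zero} ) \big) . \zeta$, after which functoriality of $\mone$ rewrites $( \mathsf{par} \mone \mathsf{par} ) . \big( ( \id \mzero \mathsf{zero} ) \mone ( \id \mzero \mathsf{zero} ) \big)$ as $\big( \mathsf{par} . ( \id \mzero \mathsf{zero} ) \big) \mone \big( \mathsf{par} . ( \id \mzero \mathsf{zero} ) \big)$. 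Now the right-unit half of \cref{ax:obj.lax.id} gives $\mathsf{par} . ( \id \mzero \mathsf{zero} ) = \vfreyd{C}( \rho, \rho^{-1} ) . \rho$, where the outer $\rho$ is the $\mzero$-unitor of $\cat{V}$ and $\vfreyd{C}( \rho, \rho^{-1} )$ is the $\vfreyd{C}$-image of the right unitor of $\cat{M}$; pulling the two copies of $\vfreyd{C}( \rho, \rho^{-1} )$ to the outside by functoriality of $\mone$ again leaves the residual composite $( \rho \mone \rho ) . \zeta . ( \id \mzero \Delta ) . \rho^{-1}$. Finally, the coherence diagram of \Cref{def:duoidal} that expresses the $\mzero$-unitor $\rho$ on a $\mone$-product via $\zeta$ and $\Delta$ identifies $( \rho \mone \rho ) . \zeta . ( \id \mzero \Delta )$ with $\rho$ on $\vfreyd{C}(a, b) \mone \vfreyd{C}(a', b')$, which cancels the trailing $\rho^{-1}$. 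What remains is $\vfreyd{C}( \rho, \rho^{-1} ) \mone \vfreyd{C}( \rho, \rho^{-1} )$, which is the identity up to the unitors of $\cat{M}$, as asserted.

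The second equation follows by the identical chain with $\lambda$ in place of $\rho$, using the left-unit half of \cref{ax:obj.lax.id} and the matching half of the \Cref{def:duoidal} coherence diagram. I expect the only genuine difficulty to be bookkeeping: each of $\lambda$ and $\rho$ occurs in three distinct guises---a $\mzero$-unitor of $\cat{V}$, a $\mone$-unitor of $\cat{V}$, and the $\vfreyd{C}$-image of a unitor of $\cat{M}$---so one must keep these apart and thread the $\zeta$-naturality square through exactly the right pair of arguments. Beyond the two ingredients named above (the $\mzero$-unitor coherence of a duoidal category and the unitality of $\mathsf{par}$), there is no conceptual content.
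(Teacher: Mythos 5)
Your chain of reductions---functoriality of $\mzero$ and $\mone$, naturality of $\zeta$, the unit halves of \cref{ax:obj.lax.id}, and the duoidal coherence diagrams of \Cref{def:duoidal} expressing the $\mzero$-unitors on a $\mone$-product via $\zeta$ and $\Delta$---is exactly the mechanical verification the paper leaves implicit, so the proposal is correct and takes the same route. The residual $\vfreyd{C}(\rho,\rho^{-1}) \mone \vfreyd{C}(\rho,\rho^{-1})$ you end with is not a defect of your argument: the printed statement silently suppresses these $\cat{M}$-unitor reindexings (they are absorbed by the coend when the lemma is used in \Cref{prop:profunctor-comp}), so reading the asserted $\id$ up to that canonical identification is the intended interpretation.
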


% [str.comp.ass]
\begin{lemma}\label{lem:conc-str-comp-ass}
  The associator of $\mzero$ respects $\mathsf{par}$:
  \begin{align*}
    ( ( \mathsf{par} . ( \mathsf{par} \mzero \id ) ) \mone ( \mathsf{par} . ( \mathsf{par} \mzero \id ) ) . \zeta . ( \zeta \mzero \id ) = \\
    ( ( \mathsf{par} . ( \id \mzero \mathsf{par} ) ) \mone ( \mathsf{par} . ( \id \mzero \mathsf{par} ) ) . \zeta . ( \id \mzero \zeta ) . \alpha
  \end{align*}
\end{lemma}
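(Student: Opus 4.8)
The plan is to read \Cref{lem:conc-str-comp-ass} for what it is used for downstream: it is the associativity coherence of the $\mzero$-lax-monoidal structure that \Cref{prop:profunctor-comp} puts on $\vfreyd{C}\epcomp\vfreyd{C}$, expressed on the integrands before the coend defining $\epcomp$ is formed; the $\mzero$-structure morphism of $\vfreyd{C}\epcomp\vfreyd{C}$ is $(\mathsf{par}\mone\mathsf{par})\circ\zeta$, and iterating it along the two bracketings of a threefold $\mzero$, then simplifying with the naturality of $\zeta$, turns the two iterates into exactly the two composites displayed in the statement. So the content is a diagram chase resting on two facts: the first coherence diagram of \Cref{def:duoidal}, which in this notation says $(\alpha\mone\alpha)\circ\zeta\circ(\zeta\mzero\id)=\zeta\circ(\id\mzero\zeta)\circ\alpha$ with $\alpha$ the $\mzero$-associator of $\cat{V}$, and \Cref{ax:obj.lax.ass}, associativity of $\mathsf{par}$; besides these it uses only bifunctoriality of $\mone$ and of $\vfreyd{C}$.

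Concretely, starting from the left-hand side $\bigl(p\mone p\bigr)\circ\zeta\circ(\zeta\mzero\id)$ with $p=\mathsf{par}\circ(\mathsf{par}\mzero\id)$, I would first replace $\zeta\circ(\zeta\mzero\id)$ via the coherence diagram by $(\alpha^{-1}\mone\alpha^{-1})\circ\zeta\circ(\id\mzero\zeta)\circ\alpha$, then absorb $\alpha^{-1}\mone\alpha^{-1}$ into $p\mone p$ by bifunctoriality of $\mone$, so that each leg becomes $\mathsf{par}\circ(\mathsf{par}\mzero\id)\circ\alpha^{-1}$. By \Cref{ax:obj.lax.ass} this leg equals $\vfreyd{C}(\alpha,\alpha^{-1})\circ\mathsf{par}\circ(\id\mzero\mathsf{par})$; pulling the resulting $\vfreyd{C}(\alpha,\alpha^{-1})\mone\vfreyd{C}(\alpha,\alpha^{-1})$ back out with bifunctoriality of $\mone$ leaves exactly $\bigl((\mathsf{par}\circ(\id\mzero\mathsf{par}))\mone(\mathsf{par}\circ(\id\mzero\mathsf{par}))\bigr)\circ\zeta\circ(\id\mzero\zeta)\circ\alpha$, the right-hand side. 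The leftover $\vfreyd{C}(\alpha,\alpha^{-1})\mone\vfreyd{C}(\alpha,\alpha^{-1})$ is the canonical isomorphism comparing the bracketings $(a_1\oplus a_2)\oplus a_3$ and $a_1\oplus(a_2\oplus a_3)$ of the (co)domain object; it is absorbed by the coends in \Cref{prop:profunctor-comp}, which is the sense in which the displayed equality is read when \Cref{prop:moncatlax-monoidal} invokes it.

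I do not expect a genuine obstacle: this is one of the four lemmas the text flags as mechanical, and the siblings \Cref{lem:conc-str-mor-assr} and \Cref{lem:conc-str-comp-id} go through the same way, using instead the $\mone$-direction coherence diagram and the two unit diagrams of \Cref{def:duoidal}. The only care needed is bookkeeping: keeping apart the $\cat{M}$-associator transported through $\vfreyd{C}$, the $\mzero$-associator of $\cat{V}$, and the functorial action of $\mone$ on morphisms, and keeping track of directions so the coherence diagram and \Cref{ax:obj.lax.ass} slot together cleanly.
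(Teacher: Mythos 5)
Your verification is correct, and since the paper never writes this proof out (the four lemmas of the appendix are only declared ``mechanical to verify''), your chase is precisely the intended mechanical check: the only inputs are the first coherence square of \Cref{def:duoidal} relating $\zeta$ to the $\mzero$-associator, \cref{ax:obj.lax.ass}, and functoriality of $\mone$ (plus functoriality of $\vfreyd{C}$ to invert $\vfreyd{C}(\alpha^{-1},\alpha)$), and each of your rewriting steps is in the right direction. Your treatment of the residual $\vfreyd{C}(\alpha,\alpha^{-1})\mone\vfreyd{C}(\alpha,\alpha^{-1})$ is also the correct reading rather than a defect of the argument: as displayed, the two sides of the lemma land in differently bracketed objects, $\vfreyd{C}\big((a_1\oplus a_2)\oplus a_3, \,\text{--}\,\big)\mone\cdots$ versus $\vfreyd{C}\big(a_1\oplus(a_2\oplus a_3), \,\text{--}\,\big)\mone\cdots$, so the stated equality can only be meant up to this canonical reindexing, just as \cref{ax:obj.lax.ass} itself carries an explicit $\vfreyd{C}(\alpha^{-1},\alpha)$. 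When the lemma is invoked in \Cref{prop:profunctor-comp}, the lax associativity coherence for $S\epcomp T$ contains the functorial action $(S\epcomp T)(\alpha^{-1},\alpha)$, and it is this action together with dinaturality of the coend insertion in the middle variable that soaks up the leftover isomorphism; your phrase ``absorbed by the coends'' is right, but that is the step worth spelling out if you write this up. One cosmetic remark: in the application the single symbol $\mathsf{par}$ stands for the structure maps of two possibly different lax monoidal functors ($\mu_T$ on the first $\mone$-leg, $\mu_S$ on the second), and your argument goes through verbatim with that substitution.
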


The previous lemmas require all the axioms of a duoidal category between them, except for $\mone$ being a monoid in $(\cat{V}, \mzero, \ezero)$. This latter property is used in the abstract characterisation.

\begin{proof}[Proof of \Cref{prop:enriched-hom}]
  Bifunctorality is inherited from $\hom_\cat{M}$. The coherence morphisms making it lax monoidal are $\eta \colon \ezero \xrightarrow{\epsilon} \eone \xrightarrow{\iota_{\id[0]}} \coprod\nolimits_{\sigma} \eone \cong \ehom_\cat{M}(e, e)$ and
  \begin{align*}
    \mu \colon& %\ehom_\cat{M}(a_1, b_1) \mzero \ehom_\cat{M}(a_2, b_2) \cong
      \big(\textstyle\coprod\nolimits_{\sigma_1} \eone \big) \mzero \big(\textstyle\coprod\nolimits_{\sigma_2} \eone \big) 
      \cong \textstyle\coprod\nolimits_{\sigma_1, \sigma_2} \eone \mzero \eone
      \xrightarrow{\textstyle\coprod \nabla} \textstyle\coprod\nolimits_{\sigma_1, \sigma_2} I
      \xrightarrow{[\iota_{\sigma_1 \oplus \sigma_2}]_{\sigma_1, \sigma_2}} \textstyle\coprod\nolimits_{\sigma}\text.% I \cong \ehom_\cat{M}(a_1 \oplus a_2, b_1 \oplus b_2)
  \end{align*}
  The coherence diagrams commute by cocontinuity and the monoidal structure $(\eone, \nabla, \epsilon)$.
\end{proof}

\begin{proof}[Proof of \Cref{prop:profunctor-comp}]
  The coherence morphisms are:
  \begin{align*}
    \eta_{S \epcomp T} \colon& \ezero \xrightarrow{\Delta} \ezero \mone \ezero \xrightarrow{\eta_S \mone \eta_T} T(e, e) \mone S(e, e) \to \textstyle\int^{b} T(e, b) \mone S(b, e) \cong (S \epcomp T)(e, e)
    \\
    \mu_{S \epcomp T} \colon& (S \epcomp T)(a, c) 
    \mzero (S \epcomp T)(a', c') \\
    % \simeq& \big(\textstyle\int^{b_1}\!T(a_1, b_1)\mone S(b_1, c_1)\big)\mzero\big(\textstyle\int^{b_2}\!T(a_2, b_2)\!\mone\!S(b_2, c_2)\big) \\
    \simeq& \textstyle\int^{b, b'} \left(T(a, b)\!\mone\!S(b, c)\right) \mzero \left(T(a', b')\!\mone\!S(b', c')\right) \\
    \xrightarrow{\int\!\zeta}& \textstyle\int^{b, b'} \left(T(a, b)\!\mzero\!T(a', b')\right) \mone \left(S(b, c)\!\mzero\!S(b', c')\right) \\
    \xrightarrow{\int\!\mu_T \mone \mu_S}& \textstyle\int^{b, b'} T(a \oplus a', b \oplus b') \mone S(b \oplus b', c \oplus c') \\
    \rightarrow& \textstyle\int^{b} T(a \oplus a', b) \mone S(b, c' \oplus c') 
    \simeq (S \epcomp T)(a \oplus a', c \oplus c')
  \end{align*}
  Cocontinuity and \Cref{lem:conc-str-comp-id,lem:conc-str-comp-ass} finish the proof.
\end{proof}

\begin{lemma}\label{lem:pcomp-functorial}
  The $\mone$-composition of \Cref{prop:profunctor-comp} is functorial.
\end{lemma}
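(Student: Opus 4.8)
The plan is to check that $\epcomp$ extends to a functor of two variables on $\mathcal{C} \coloneqq \cat{MonCat_{lax}}( \cat{M}\op \times \cat{M}, \cat{V} )$, the action on objects being precisely \Cref{prop:profunctor-comp}. First I would fix the action on morphisms: given monoidal natural transformations $\sigma \colon S \Rightarrow S'$ and $\tau \colon T \Rightarrow T'$ in $\mathcal{C}$, define $\sigma \epcomp \tau \colon S \epcomp T \Rightarrow S' \epcomp T'$ to have component at $(a, c)$ the map $\int^{b} T(a, b) \mone S(b, c) \to \int^{b} T'(a, b) \mone S'(b, c)$ induced on coends by the family $\tau_{a, b} \mone \sigma_{b, c}$, using bifunctoriality of $\mone$ and functoriality of the coend in its diagram. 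Naturality of $\sigma \epcomp \tau$ in $(a, c) \in \cat{M}\op \times \cat{M}$ is immediate from those same two facts; indeed this much holds already for the underlying functors in $[\cat{M}\op \times \cat{M}, \cat{V}]$, where $\epcomp$ is just profunctor composition transported through $\mone$.

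The substantive point is that $\sigma \epcomp \tau$ is again \emph{monoidal}, that is, compatible with the coherence maps $\eta_{S \epcomp T}$ and $\mu_{S \epcomp T}$ exhibited in the proof of \Cref{prop:profunctor-comp} and with the corresponding maps for $S' \epcomp T'$. For the unit I would unfold $\eta_{S \epcomp T}$ as $\ezero \xrightarrow{\Delta} \ezero \mone \ezero \xrightarrow{\eta_S \mone \eta_T} T(e, e) \mone S(e, e) \to (S \epcomp T)(e, e)$; monoidality of $\sigma$ and $\tau$ gives $\sigma_{e, e} . \eta_S = \eta_{S'}$ and $\tau_{e, e} . \eta_T = \eta_{T'}$, hence $(\tau_{e, e} \mone \sigma_{e, e}) . (\eta_T \mone \eta_S) = \eta_{T'} \mone \eta_{S'}$ by bifunctoriality of $\mone$, and compatibility with $\Delta$ and with the coprojection into the coend is formal. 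For the multiplication I would unfold $\mu_{S \epcomp T}$ into its four stages and match them one at a time: the $\zeta$-stage commutes with the induced maps by naturality of $\zeta$; the $\mu_T \mone \mu_S$-stage commutes by the defining squares of the monoidal transformations $\tau$ and $\sigma$ together with bifunctoriality of $\mone$; and the two coend-reindexing stages commute by naturality and functoriality of coends. Pasting these four commuting squares yields the required compatibility.

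Finally, $\epcomp$ preserves identities, since $\mathrm{id}_T \mone \mathrm{id}_S = \mathrm{id}$ componentwise and a coend of identity maps is an identity, and it preserves composition, since $(\tau' . \tau) \mone (\sigma' . \sigma) = (\tau' \mone \sigma') . (\tau \mone \sigma)$ by bifunctoriality of $\mone$ and induced maps on coends compose. Together with \Cref{prop:profunctor-comp} for the object part this makes $\epcomp$ a functor $\mathcal{C} \times \mathcal{C} \to \mathcal{C}$. I expect the only genuine obstacle to be the diagram chase showing $\sigma \epcomp \tau$ respects $\mu_{S \epcomp T}$: it is a pasting of one instance of naturality of $\zeta$, two instances of the monoidality squares, and coend bookkeeping, so while mechanical it is the step where one must be careful about the order of the factors $S$ and $T$, which appear swapped inside the coend.
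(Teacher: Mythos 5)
Your proposal is correct and follows essentially the same route as the paper's (very terse) proof: the morphism action is induced componentwise via bifunctoriality of $\mone$ and functoriality of coends, and preservation of identities and composition is immediate from these. The only thing you spell out that the paper leaves implicit is the check that $\sigma \epcomp \tau$ is again a \emph{monoidal} natural transformation (compatibility with $\eta_{S \epcomp T}$ and $\mu_{S \epcomp T}$), and your pasting argument for that is the right one.
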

\begin{proof}
  It is easy to see that $\epcomp$ is well-defined on objects.
  Bifunctorality for morphisms then follows from bifunctorality of $\mone$ and functorality of coends.
\end{proof}

\begin{lemma}\label{lem:pcomp-ehom-unit}
  The functor $\ehom_\cat{M}$ of \Cref{prop:enriched-hom} is the left and right identity of the $\mone$-composition of \Cref{prop:profunctor-comp}.
\end{lemma}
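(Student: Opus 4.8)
The plan is to reduce the statement to the co-Yoneda (density) lemma after simplifying the defining coend with cocontinuity of $\mone$; the two composites are mirror images, so I would spell out $S \epcomp \ehom_\cat{M}$ and merely indicate $\ehom_\cat{M} \epcomp S$.

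First I would unfold $(S \epcomp \ehom_\cat{M})(a, c) = \int^{b} \ehom_\cat{M}(a, b) \mone S(b, c) = \int^{b} \big(\coprod_{\sigma \in \hom_\cat{M}(a, b)} \eone\big) \mone S(b, c)$. Since $\cat{V}$ is a cocomplete duoidal category, $\mone$ is cocontinuous in its first argument, so the coproduct passes through $\mone$; applying the left unitor $\lambda$ of $\mone$ in each summand turns the expression into $\int^{b} \coprod_{\sigma \in \hom_\cat{M}(a, b)} S(b, c)$, and the co-Yoneda lemma for $S(-, c) \colon \cat{M}\op \to \cat{V}$ identifies this coend with $S(a, c)$. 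Dually, $(\ehom_\cat{M} \epcomp S)(a, c) = \int^{b} S(a, b) \mone \ehom_\cat{M}(b, c) \cong \int^{b} \coprod_{\sigma \in \hom_\cat{M}(b, c)} S(a, b) \cong S(a, c)$, now using the right unitor $\rho$ of $\mone$ and the co-Yoneda lemma for the covariant functor $S(a, -) \colon \cat{M} \to \cat{V}$; one must match the variance to each case. Naturality in $a$ and $c$ follows from naturality of $\lambda$, $\rho$ and of the co-Yoneda isomorphisms together with functoriality of coends, so these assemble into natural isomorphisms $S \epcomp \ehom_\cat{M} \cong S$ and $\ehom_\cat{M} \epcomp S \cong S$ in $[\cat{M}\op \times \cat{M}, \cat{V}]$.

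It then remains to check that these isomorphisms are \emph{monoidal} natural transformations, so that they are genuine unitors inside $\cat{MonCat_{lax}}(\cat{M}\op \times \cat{M}, \cat{V})$ and not merely in $[\cat{M}\op \times \cat{M}, \cat{V}]$. Unwinding the lax monoidal coherence data of $\ehom_\cat{M}$ (\Cref{prop:enriched-hom}) and of $\epcomp$ (\Cref{prop:profunctor-comp}) and pushing everything under the coend, compatibility with the multiplications $\mu$ reduces, after one use of naturality of the $\mone$-unitor, to the duoidal coherence squares of \Cref{def:duoidal} relating $\zeta$, $\nabla$ and the $\mone$-unitors (one square for each of the two composites), and compatibility with the units $\eta$ reduces, again after naturality of the unitor, to the counit law of the comonoid $(\ezero, \Delta, \epsilon)$ in $(\cat{V}, \mone, \eone)$. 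I expect this monoidality verification to be the only step that is not pure bookkeeping; the coend manipulations and the appeal to co-Yoneda are routine, the one subtlety being to keep the variances straight.
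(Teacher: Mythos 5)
Your proposal is correct and matches the paper's own argument: the object-level isomorphism via cocontinuity of $\mone$, the $\mone$-unitors, and co-Yoneda is exactly what the paper phrases as ``left Kan extending along the identity,'' with naturality inherited from the constructions. Your monoidality check (reducing $\mu$-compatibility to the duoidal square for $\zeta$, $\nabla$ and the $\mone$-unitors, and $\eta$-compatibility to the counit law of $(\ezero, \Delta, \epsilon)$) is precisely the content the paper outsources to its auxiliary \Cref{lem:conc-str-mor-unr} together with cocontinuity.
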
 
\begin{proof}
  The isomorphism on objects involves cocontinuity, the unitors of $\mone$, left Kan extending along the identity.
  Naturality is inherited from the naturality of the constructions involved.
  The unitors must also be monoidal natural transformations, which is true via cocontinuity and \Cref{lem:conc-str-mor-unr}.
\end{proof}

\begin{lemma}\label{lem:pcomp-assoc}
  The $\mone$-composition of \Cref{prop:profunctor-comp} is associative.
\end{lemma}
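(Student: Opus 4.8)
The plan is to construct the associativity constraint $a_{R,S,T} \colon (R \epcomp S) \epcomp T \Rightarrow R \epcomp (S \epcomp T)$ componentwise and then check it is a monoidal natural isomorphism. First I would unfold both sides at a pair of objects $(a,d)$. Using that $\mone$ is cocontinuous in each argument --- part of the standing assumption that $\cat{V}$ is a cocomplete duoidal category --- the inner coend can be pulled out of $\mone$, so that
\[
  \big((R \epcomp S) \epcomp T\big)(a, d) = \int^{b} T(a,b) \mone \Big(\int^{c} S(b,c) \mone R(c,d)\Big) \cong \int^{b,c} T(a,b) \mone \big( S(b,c) \mone R(c,d) \big)
\]
and
\[
  \big(R \epcomp (S \epcomp T)\big)(a, d) = \int^{c} \Big(\int^{b} T(a,b) \mone S(b,c)\Big) \mone R(c,d) \cong \int^{b,c} \big( T(a,b) \mone S(b,c) \big) \mone R(c,d)\text.
\]
Applying the associator $\alpha$ of $\mone$ to the common integrand, together with the Fubini theorem for coends to align the two double coends, defines $a_{R,S,T}$ as a composite of isomorphisms. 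Its naturality in $a$ and $d$, the fact that it is a natural transformation of functors $\cat{M}\op \times \cat{M} \to \cat{V}$, and its functoriality and dinaturality in $R$, $S$, $T$, are all inherited from the corresponding properties of $\alpha$, of the cocontinuity comparison maps, and of coends, exactly as in the proof of \Cref{lem:pcomp-functorial}.

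The substantive step is to verify that $a_{R,S,T}$ is a \emph{monoidal} natural transformation, i.e.\ that it commutes with the coherence maps $\eta$ and $\mu$ (with respect to $\mzero$) of the two triple composites, which by \Cref{prop:profunctor-comp} are assembled from $\Delta$, and from $\zeta$ together with $\mu_R$, $\mu_S$ and $\mu_T$. Because $\mzero$ is cocontinuous, each such compatibility square can be transported under the coends and reduced to a pointwise identity in $\cat{V}$: the $\mu$-compatibility becomes the assertion that the associator $\alpha$ of $\mone$ intertwines the two nested applications of $\zeta$ appearing in the two bracketings, and the $\eta$-compatibility the analogous identity involving $\Delta$. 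These are precisely the two equations of \Cref{lem:conc-str-mor-assr}, that the associator of $\mone$ respects $\mathsf{par}$ and $\mathsf{zero}$, so the lemma follows.

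I expect the main obstacle to be exactly this last reduction: not the pointwise identity itself (which is \Cref{lem:conc-str-mor-assr}), but checking carefully that pushing it back through the chain of coend-and-cocontinuity isomorphisms reproduces the required square on the nose, keeping track of which copies of $\mone$ and $\mzero$ the various comparison maps act on. The remaining coherence of $\epcomp$ --- the pentagon, the triangle, and the unit laws --- is not part of this lemma; it is established in \Cref{lem:pcomp-ehom-unit} and \Cref{prop:moncatlax-monoidal}, in each case by cocontinuity together with the corresponding identities for $\mone$.
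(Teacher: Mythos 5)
Your proposal is correct and takes essentially the same route as the paper's (much terser) proof: the isomorphism is assembled from cocontinuity of $\mone$ over coends plus its associator, naturality is inherited from the constructions involved, and the monoidal-naturality check reduces via cocontinuity to the two equations of \Cref{lem:conc-str-mor-assr}. Your identification of the $\eta$- and $\mu$-compatibilities with the $\Delta$- and $\zeta$-equations of that lemma is exactly how the paper uses it.
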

\begin{proof}
  The isomorphism uses cocontinuity and the associator of $\mone$.
  Naturality is inherited from the naturality of the constructions involved.
  The associator is a monoidal natural transformation by cocontinuity and \Cref{lem:conc-str-mor-assr}.
\end{proof}

% \begin{proof}[Proof of Theorem~\ref{thm:characterisation}]
%   ...
% \end{proof}

\section{Proofs for change of enrichment}\label{sec:proofs:enrichment}

\begin{proof}[Proof of \Cref{thm:double-lax-change}]\label{prf:double-lax-change}
  \Cref{ax:mon.mor.id,ax:mon.mor.ass,ax:obj.lax.id,ax:obj.lax.ass} hold by the axioms for lax monoidal functors for the same reason lax monoidal functors preserve monoids.
  \Cref{ax:mon.mor.e.eta,ax:mon.mor.e.mu,ax:mon.mor.m.eta,ax:mon.mor.m.mu} each require the use of an axiom of double lax monoidal functors as shown below.
  \begin{align*}
    \mathsf{idt}_F . \epsilon &= F \mathsf{idt} . \eta_{\mone} . \epsilon \\
    &= F \mathsf{idt} . F \epsilon . \eta_{\mzero} \\
    &= F \mathsf{zero} . \eta_{\mzero} \\
    &= \mathsf{zero}_F
  \end{align*}
  \begin{align*}
    \mathsf{idt}_F . \nabla &= F \mathsf{idt} . \eta_{\mone} . \nabla \\
    &= F \mathsf{idt} . F \nabla . \mu_{\mzero} . (\eta_{\mone} \mzero \eta_{\mone}) \\
    &= F \mathsf{par} . F(\mathsf{idt} \mzero \mathsf{idt}) . \mu_{\mzero} . (\eta_{\mone} \mzero \eta_{\mone}) \\
    &= F \mathsf{par} . \mu_{\mzero} . (F\mathsf{idt} \mzero F\mathsf{idt}) . (\eta_{\mone} \mzero \eta_{\mone}) \\
    &= \mathsf{par}_F . ( \mathsf{idt}_F \mzero \mathsf{idt}_F)
  \end{align*}
  \begin{align*}
    \mathsf{seq}_F . ( \mathsf{zero}_F \mone \mathsf{zero}_F ). \Delta &= F \mathsf{seq} . \mu_{\mone} . ( F \mathsf{zero} \mone F \mathsf{zero} ) . (\eta_{\mzero} \mone \eta_{\mzero}) . \Delta \\
    &= F \mathsf{seq} . F( \mathsf{zero} \mone \mathsf{zero} ) . \mu_{\mone} . (\eta_{\mzero} \mone \eta_{\mzero}) . \Delta \\
    &= F \mathsf{seq} . F( \mathsf{zero} \mone \mathsf{zero} ) . F \Delta . \eta_{\mzero} \\
    &= F \mathsf{zero} . \eta_{\mzero} \\
    &= \mathsf{zero}_F
  \end{align*}
  \begin{align*}
    \mathsf{seq}_F . ( \mathsf{par}_F \mone \mathsf{par}_F ) . \zeta &= F \mathsf{seq} . \mu_{\mone} . ( F \mathsf{par} \mone F \mathsf{par} ) . (\mu_{\mzero} \mone \mu_{\mzero}) . \zeta \\
    &= F \mathsf{seq} . F( \mathsf{par} \mone \mathsf{par} ) . \mu_{\mone} . (\mu_{\mzero} \mone \mu_{\mzero}) . \zeta \\
    &= F \mathsf{seq} . F( \mathsf{par} \mone \mathsf{par} ) . F \zeta . \mu_{\mone} . (\mu_{\mzero} \mone \mu_{\mzero}) \\
    &= F \mathsf{par} . F( \mathsf{seq} \mzero \mathsf{seq} ) . \mu_{\mone} . (\mu_{\mzero} \mone \mu_{\mzero}) \\
    &= F \mathsf{par} . \mu_{\mone} . ( F \mathsf{seq} \mzero F \mathsf{seq} ) . (\mu_{\mzero} \mone \mu_{\mzero}) \\
    &= \mathsf{par}_F . ( \mathsf{seq}_F \mzero \mathsf{seq}_F )
  \end{align*}
  Similar checks show that $\overline{F}(G)$ is a $\cat{W}$-Freyd map.
  $\overline{F}$ is functorial by functorality of $F$.
\end{proof}

\end{document}